\def\withcolors{1}
\def\withnotes{1}
\def\eps{\ve}
\renewcommand{\epsilon}{\ve}
\def\ve{\varepsilon}
\newcommand{\pr}[2][]{\mathrm{Pr}\ifthenelse{\not\equal{}{#1}}{_{#1}}{}\!\left[#2\right]}
  \newcommand{\gcolor}[1]{{\color{red}#1}} 
  \newcommand{\gcolor}[1]{{#1}}
  \newcommand{\gnote}[1]{\par\gcolor{\textbf{G: }\sf #1}} %
  \newcommand{\gfootnote}[1]{\footnote{{\bf \gcolor{Gautam}}: {#1}}}
  \newcommand{\gnote}[1]{}
  \newcommand{\gfootnote}[1]{}
\newcommand{\ignore}[1]{\leavevmode\unskip} %
\newcommand{\lp}[1]{\ell_{#1}}
\newcommand{\unif}{\mathrm{Unif}}
\newcommand{\poi}{\mathrm{Poi}}
\newcommand{\calU}{\mathcal{U}}
\newcommand{\mas}[2]{\|#1_#2\|_\infty}
\newcommand{\rat}[2]{\rho_{#1,#2}}
\newcommand{\samcom}[4]{\operatorname{SC}(#1,#2,#3,#4)}
\title{Towards a Total Tabulation of the Tolerance of Tolerant Testers} %
\title{The Toll for Tolerance in Testing}
\title{The Price of Tolerance in Distribution Testing\thanks{Authors are listed in alphabetical order.}} %
\author{
Cl\'ement L. Canonne\thanks{University of Sydney. Email: \url{clement.canonne@sydney.edu.au}.}
\and
Ayush Jain\thanks{UC San Diego. Email: \url{ayjain@eng.ucsd.edu}.}
\and
Gautam Kamath\thanks{Cheriton School of Computer Science, University of Waterloo. Email: \url{g@csail.mit.edu}. Supported by an NSERC Discovery Grant.}
\and
Jerry Li\thanks{Microsoft Research. Email: \url{jerrl@microsoft.com}.}
}
\begin{document}

\maketitle
\thispagestyle{empty}
\setcounter{page}{0}
\begin{abstract}
We revisit the problem of tolerant distribution testing. That is, given samples from an unknown distribution $p$ over $\{1, \dots, n\}$, is it $\varepsilon_1$-close to or $\varepsilon_2$-far from a reference distribution $q$ (in total variation distance)?
Despite significant interest over the past decade, this problem is well understood only in the extreme cases.
In the noiseless setting (i.e., $\varepsilon_1 = 0$) the sample complexity is $\Theta(\sqrt{n})$, strongly sublinear in the domain size.
At the other end of the spectrum, when $\varepsilon_1 = \varepsilon_2/2$, the sample complexity jumps to the barely sublinear $\Theta(n/\log n)$.
However, very little is known about the intermediate regime.
We fully characterize the price of tolerance in distribution testing as a function of $n$, $\varepsilon_1$, $\varepsilon_2$, up to a single $\log n$ factor.
Specifically, we show the sample complexity to be 
\[\tilde \Theta\mleft(\frac{\sqrt{n}}{\ve_2^{2}} + \frac{n}{\log n} \cdot \max \mleft\{\frac{\ve_1}{\ve_2^2},\mleft(\frac{\ve_1}{\ve_2^2}\mright)^{\!\!2}\mright\}\mright),\]
providing a smooth tradeoff between the two previously known cases.
We also provide a similar characterization for the problem of tolerant equivalence testing, where both $p$ and $q$ are unknown.
Surprisingly, in both cases, the main quantity dictating the sample complexity is the ratio $\varepsilon_1/\varepsilon_2^2$, and not the more intuitive $\varepsilon_1/\varepsilon_2$.
Of particular technical interest is our lower bound framework, which involves novel approximation-theoretic tools required to handle the asymmetry between $\varepsilon_1$ and $\varepsilon_2$, a challenge absent from previous works.
\end{abstract}

\newpage

\section{Introduction}
Upon observing independent samples from an unknown probability distribution, can we determine whether it possess some property of interest?
This natural question, known as distribution testing or statistical hypothesis testing, has enjoyed significant study from several communities, including theoretical computer science, statistics, information theory, and machine learning.
The prototypical problem in this area is \emph{identity testing} (sometimes called \emph{goodness-of-fit} or \emph{one-sample} testing): given samples from an unknown probability distribution $p$ over $[n]$, test whether it is equal to some reference distribution $q$, or $\ve$-far in $\lp{1}$-distance. 
It is now well understood that $\Theta\mleft({\sqrt{n}}/{\ve^2}\mright)$ samples are necessary and sufficient to solve this problem~\cite{Ingster94,GoldreichR00,BatuFFKRW01,Paninski08,ValiantV14,DiakonikolasKN15a,AcharyaDK15,DiakonikolasGPP18}.
Quite surprisingly, this sample complexity is strongly sublinear in $n$, enabling sample-efficient testing even over large domains.

The drawback of this formulation is that it is very particular in terms of the relationship between $p$ and $q$.
More precisely, it prescribes only that one must distinguish between the cases where $p$ and $q$ are far versus when they are \emph{exactly} equal -- no guarantees are provided for any intermediate case, e.g., for when $p$ and $q$ are \emph{close} but not identical.
This restriction limits the relevance of solutions to this problem, as it is unrealistic to assume \emph{precise} knowledge of a distribution due to a number of reasons, including model misspecification, imprecise measurements, or dataset contamination. 

\noindent To address these concerns, the problem of \emph{tolerant identity testing} was introduced~\cite{ParnasRR06}, which is the main focus of our work.

\medskip \framebox{
\begin{minipage}{15.5cm}
{\em Tolerant Identity Testing}: Given an explicit description of a distribution $q$ over $[n]$, sample access to a distribution $p$ over $[n]$, and bounds $\ve_2 > \ve_1 \ge 0$, and $\delta >0$, distinguish with probability at least $1-\delta$ between $\norm{p-q}_1 \le \ve_1$ and $\norm{p-q}_1 \ge \ve_2$, whenever $p$ satisfies one of these two inequalities.
\end{minipage}}\medskip

\noindent We will also study the problem of \emph{tolerant equivalence testing} (sometimes called tolerant \emph{closeness} or \emph{two-sample} testing):

\medskip \framebox{
\begin{minipage}{15.5cm}
{\em Tolerant Equivalence Testing}: Given sample access to distributions $p$ and $q$ over $[n]$, and bounds $\ve_2 > \ve_1 \ge 0$, and $\delta >0$, distinguish with probability at least $1-\delta$ between $\norm{p-q}_1 \le \ve_1$ and $\norm{p-q}_1 \ge \ve_2$, whenever $p, q$ satisfy one of these two inequalities.
\end{minipage}} \medskip

Focusing our attention on tolerant identity testing and constant $\varepsilon_2$, it is natural to consider the strong tolerance requirement of $\ve_1 = \ve_2/2$, in which the two cases are separated only by a constant factor. 
One would ideally like to maintain the strongly sublinear sample complexity of $\mathcal{O}(\sqrt{n})$, as in the non-tolerant case where $\ve_1=0$.
Unfortunately, this is impossible: as shown by Valiant and Valiant~\cite{ValiantV10a,ValiantV10b,ValiantV11a}, \smash{$\Theta\big(\frac{n}{\log n}\big)$} samples are necessary and sufficient, see also~\cite{JiaoHW18,JiaoVHW17,HanJW16}.
On the other end of the spectrum, it is known that mild tolerance of \smash{$\ve_1 = \frac{\ve_2}{2\sqrt{n}}$} is achievable with the same strongly-sublinear sample complexity of $\mathcal{O}(\sqrt{n})$, by converting $\lp{2}$-tolerance to $\lp{1}$-tolerance~\cite{GoldreichR00,BatuFFKRW01,BatuFRSW13,DiakonikolasKN15a, DiakonikolasK16, DaskalakisKW18}.
However, existing results only capture these two extremes, and we have very little understanding of the intermediate landscape of tolerant testing. 
Does there exist a smooth hierarchy of increasingly difficult testing problems, or is there a sharp transition in the sample complexity from strongly to barely sublinear?

\subsection{Results and Techniques}
We provide a complete characterization of the sample complexity of tolerant identity and equivalence testing (up to a single logarithmic factor in the domain size $n$). 
Our main results are as follows:
\begin{theorem}[Identity testing (Informal; see Theorem~\ref{th:ubU} and Corollary~\ref{cor:lbU})]\label{th:identity:informal}
The sample complexity of tolerant identity testing over $[n]$ with parameters $0\leq \ve_1 < \ve_2 \leq 1$ is
\[
\Omega\mleft(\frac{\sqrt{n}}{\ve_2^{2}} + \frac{n}{\log n} \cdot \max \mleft\{\frac{\ve_1}{\ve_2^2},\mleft(\frac{\ve_1}{\ve_2^2}\mright)^{\!\!2}\mright\}\mright) \text{ and }
\mathcal{O}\mleft(\frac{\sqrt{n}}{\ve_2^{2}} + n \cdot \max \mleft\{\frac{\ve_1}{\ve_2^2},\mleft(\frac{\ve_1}{\ve_2^2}\mright)^{\!\!2}\mright\}\mright)\,.
\]
\end{theorem}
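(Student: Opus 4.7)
The plan is to establish the upper and lower bounds separately, observing first that writing $r := \ve_1/\ve_2^2$ the target bound $\tilde\Theta(\sqrt{n}/\ve_2^2 + n\max\{r,r^2\})$ smoothly interpolates between the non-tolerant case ($\ve_1 = 0$, recovering $\Theta(\sqrt{n}/\ve_2^2)$) and the fully tolerant case ($\ve_1 \asymp \ve_2$, recovering $\tilde\Theta(n/\log n)$). By a standard equivalence, both directions reduce to the task of estimating $\|p-q\|_1$ to additive error $(\ve_2-\ve_1)/2$ with constant probability.

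For the upper bound, I would design an $\ell_1$-estimator combining two ingredients. First, a bucketing-by-mass reduction partitions the domain into $O(\log n)$ buckets on each of which $q$ is (up to a factor of two) uniform; this reduces the problem to estimating $\ell_1$ distance to uniform on each bucket and summing the pieces. Second, on each bucket I would combine a chi-squared-type statistic (which contributes the $\sqrt{n}/\ve_2^2$ term and controls variance from the heavy elements) with a Valiant--Valiant-style plug-in estimator built from a polynomial approximation of $|x|$, applied to the low-frequency fingerprint, which contributes the $n\max\{r,r^2\}$ term. The bias-variance analysis reduces to a standard Chebyshev-type approximation of the absolute value on an interval whose scale is dictated by $r$; a degree of roughly $O(\log n)$ suffices and explains why the upper bound loses at most one logarithmic factor relative to the lower bound.

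For the lower bound, I would apply the standard two-prior Poissonized method. Taking $q$ essentially uniform, I would consider priors $\mu_{\mathrm{YES}}$ and $\mu_{\mathrm{NO}}$ under which each bin is independently perturbed as $p_i \propto q_i(1+X_i)$, with $X_i$ drawn from a mean-zero scalar law $\nu_{\mathrm{YES}}$ (respectively $\nu_{\mathrm{NO}}$) chosen so that the resulting $\|p-q\|_1$ concentrates around $\ve_1$ (resp.\ $\ve_2$). By the Poissonized fingerprint reduction, indistinguishability given $m$ samples follows from $\nu_{\mathrm{YES}}$ and $\nu_{\mathrm{NO}}$ matching moments up to degree $k = \Theta(\log(n/m))$, so turning the knobs to make $m$ as large as $\Omega(n\max\{r,r^2\}/\log n)$ amounts to constructing moment-matched pairs $(\nu_{\mathrm{YES}},\nu_{\mathrm{NO}})$ whose supports realize the $\ve_1$ vs.\ $\ve_2$ separation in expectation. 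The Paninski-style paired-bin construction contributes the remaining $\sqrt{n}/\ve_2^2$ term.

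The main obstacle, as flagged in the abstract, is the asymmetric approximation-theoretic problem imposed by this construction: in the classical Valiant--Valiant setting the two candidate supports have comparable scale, whereas here the support for $\nu_{\mathrm{YES}}$ must be concentrated on a scale-$\ve_1$ set while $\nu_{\mathrm{NO}}$ lives on a scale-$\ve_2$ set, with $\ve_1 \ll \ve_2$. By LP duality, the largest matchable degree equals the minimum degree of a real polynomial that separates these two regimes in an appropriate weighted $L_\infty$ sense; I would attack this extremal problem using suitably rescaled Chebyshev polynomials, expecting the achievable degree to depend logarithmically on $n$ while the per-bin normalization contributes an additional factor of $1/\ve_2$ when converting perturbation scale into $\ell_1$-distance. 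Chasing this through carefully should yield the surprising ratio $\ve_1/\ve_2^2$ (rather than the more intuitive $\ve_1/\ve_2$) in the final bound, and the case split into $r\le 1$ versus $r>1$ is what produces the $\max\{r,r^2\}$ shape.
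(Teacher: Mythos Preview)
Your claim that ``both directions reduce to the task of estimating $\|p-q\|_1$ to additive error $(\ve_2-\ve_1)/2$'' is incorrect, and this error propagates into your upper bound. Estimation to that accuracy is \emph{sufficient} for testing, but not \emph{necessary}: already at $\ve_1=0$, testing costs $\Theta(\sqrt{n}/\ve_2^2)$ while estimating $\|p-q\|_1$ to additive $\ve_2/2$ costs $\Theta(n/(\ve_2^2\log n))$. Consequently, any upper bound obtained via $\ell_1$-estimation cannot possibly produce the $\sqrt{n}/\ve_2^2$ term; your attempt to patch this by ``combining'' a $\chi^2$ statistic with a polynomial plug-in estimator is not coherent, since the $\chi^2$ statistic estimates an $\ell_2$-type quantity, not $\ell_1$-distance. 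The paper's upper bound takes a completely different route: it is a \emph{testing} algorithm, thresholding a single $\chi^2$-type statistic $Z=\sum_i Z_i/\widehat{f}_i$ whose novelty lies entirely in the normalizing factor $f_i=\max\{\sqrt{mn}\,|p_i-q_i|,\,n(p_i+q_i),\,1\}$. The new $\sqrt{mn}\,|p_i-q_i|$ term tamps down any coordinate whose $\ell_2$ contribution is large, which is exactly what allows a two-moment statistic to achieve the full tradeoff---the paper explicitly avoids polynomial approximation on the algorithmic side.

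Your lower-bound plan is closer to the paper's, but the crucial step is inverted. The paper also sets up a moment-matching LP and passes to the dual; however, because they need a \emph{lower} bound on the primal (the achievable $\ve_2$), strong duality means they must \emph{lower}-bound the dual minimum, i.e., show that \emph{no} degree-$L$ polynomial can fit inside the relevant wedge $\{\,\frac{\ve_1}{\ve_2}|x|-\ve_1 \le p(x)\le |x|\,\}$ for small $\ve_2$. This is the opposite of constructing a good approximant via rescaled Chebyshev polynomials; the paper instead proves impossibility directly from Markov Brothers' and Bernstein's derivative inequalities (their Lemma on polynomials with $g(0)=g'(0)=0$ bounded by $\gamma|x|+\delta$). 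This inversion---and the asymmetric-wedge geometry it creates---is precisely the novel approximation-theoretic content the paper flags, and your proposal does not engage with it.
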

\begin{theorem}[Equivalence testing (Informal; see Theorem~\ref{th:ubC} and Corollary~\ref{cor:lbC})]\label{th:closeness:informal}
The sample complexity of tolerant equivalence testing over $[n]$ with parameters $0\leq \ve_1 < \ve_2 \leq 1$ is
\[
\Omega\mleft(\max\mleft\{\frac{\sqrt{n}}{\ve_2^{2}}, \frac{n^{2/3}}{\ve_2^{4/3}}\mright\}
+ \frac{n}{\log n} \cdot \max \mleft\{\frac{\ve_1}{\ve_2^2},\mleft(\frac{\ve_1}{\ve_2^2}\mright)^{\!\!2}\mright\}\mright) \text{ and }
\mathcal{O}\mleft(\max\mleft\{\frac{\sqrt{n}}{\ve_2^{2}}, \frac{n^{2/3}}{\ve_2^{4/3}}\mright\}
+ n \cdot \max \mleft\{\frac{\ve_1}{\ve_2^2},\mleft(\frac{\ve_1}{\ve_2^2}\mright)^{\!\!2}\mright\}\mright)\,.
\]
\end{theorem}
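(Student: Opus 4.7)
The plan is to establish matching upper and lower bounds whose sample complexity decomposes additively into a "base" non-tolerant cost $\max\{\sqrt{n}/\ve_2^2, n^{2/3}/\ve_2^{4/3}\}$ and a tolerance cost scaling with $\max\{\ve_1/\ve_2^2, (\ve_1/\ve_2^2)^2\}$. The two components are handled by essentially independent techniques, and the analysis pivots on the transition at $\ve_1 \asymp \ve_2^2$.

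For the \emph{upper bound}, I would run two complementary testers on independent sample batches and accept iff both accept. The first is a Chan--Diakonikolas--Valiant--Valiant style $\lp{1}$-closeness tester based on a chi-squared/collision statistic: because its expected value scales with $\|p-q\|_2^2$ rather than $\|p-q\|_1$, a careful flattening of the domain combined with a bias--variance analysis shows that it already robustly distinguishes $\|p-q\|_1 \le \ve_1$ from $\|p-q\|_1 \ge \ve_2$ with $\tilde{\mathcal{O}}(\max\{\sqrt{n}/\ve_2^2, n^{2/3}/\ve_2^{4/3}\})$ samples whenever $\ve_1 \lesssim \ve_2^2$. In the harder regime $\ve_1 \gtrsim \ve_2^2$ the $\ve_1$-induced bias overwhelms the signal, and I would call on a second tester: an explicit estimator for $\|p-q\|_1$ in the spirit of the Jiao--Valiant--Han--Wu / Wu--Yang polynomial-approximation TV estimator, thresholded at $(\ve_1 + \ve_2)/2$. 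The key calculation is to show that, with polynomial degree $\Theta(\log n)$ truncated appropriately, this second estimator has bias at most $(\ve_2-\ve_1)/3$ and variance $\tilde{\mathcal{O}}(1/m)$ when $m = \mathcal{O}(n \cdot (\ve_1/\ve_2^2)^2)$, matching the claimed tolerance term.

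For the \emph{lower bound}, the base term is inherited by substituting $\ve_1=0$ into the standard non-tolerant equivalence-testing lower bound (collisions + Poissonization). The tolerance-dependent term calls for a new Valiant--Valiant-style moment-matching argument. Concretely, I would construct two priors $\mathcal{P}_{\mathrm{yes}}, \mathcal{P}_{\mathrm{no}}$ over pairs $(p,q)$, the first supported on pairs with $\|p-q\|_1 \le \ve_1$ and the second on pairs with $\|p-q\|_1 \ge \ve_2$, whose Poissonized fingerprints have matching moments up to degree $d = \Theta(\log n)$. After Poissonizing the sample size, the observed data is a product-Poisson mixture, and matched moments force the total variation distance between the two mixtures to be small unless $m \gtrsim (n/\log n)\cdot \max\{\ve_1/\ve_2^2, (\ve_1/\ve_2^2)^2\}$. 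The novel ingredient --- and the reason the ratio $\ve_1/\ve_2^2$ (rather than $\ve_1/\ve_2$) emerges --- is an asymmetric approximation-theoretic construction: rather than the symmetric Chebyshev-based signed measures used for the $\ve_1 \asymp \ve_2$ regime, I would build a pair of signed measures on a carefully chosen subinterval of $[0,\poly(1/n)]$ that simultaneously respect a "small average distance" constraint on the close side and a "large average distance" constraint on the far side, while matching the first $\Theta(\log n)$ moments.

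The main obstacle is exactly this asymmetric extremal construction. Existing Valiant--Valiant tools pertain to symmetric separations and do not interpolate across the full range $\ve_1 \in [0,\ve_2]$; identifying the correct dual extremal problem --- essentially, characterizing polynomials of degree $\Theta(\log n)$ that are simultaneously small on a "near" set and large on a "far" set in an asymmetric sense --- and converting it into genuine (nonnegative, normalized) distribution-pair ensembles whose moments survive Poissonization in the two-sample setting is where I expect the heaviest technical work, and matches precisely the novelty highlighted in the abstract.
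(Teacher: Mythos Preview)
Your lower bound outline is essentially correct and aligns with the paper: combine the known non-tolerant equivalence lower bounds for the base term with a new moment-matching construction for the tolerance term. (The paper actually proves the tolerance lower bound for \emph{uniformity} testing and inherits it for equivalence, which is slightly simpler than your direct two-sample construction, but the core idea is the same.) You correctly identify that the asymmetry between $\ve_1$ and $\ve_2$ is the crux, and the paper indeed resolves it by analyzing a dual LP whose value is bounded via Markov--Brothers and Bernstein inequalities applied to polynomials constrained to satisfy $g(0)=g'(0)=0$.

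Your upper bound, however, has a genuine gap. The polynomial-approximation TV estimator you invoke as the second tester does \emph{not} achieve sample complexity $\mathcal{O}\big(n(\ve_1/\ve_2^2)^2\big)$ in the regime $\ve_1 \gtrsim \ve_2^2$. Those estimators learn $\|p-q\|_1$ to additive accuracy $\ve$ with $\Theta\big(n/(\ve^2\log n)\big)$ samples; thresholding at $(\ve_1+\ve_2)/2$ requires accuracy $\ve \asymp \ve_2-\ve_1 \asymp \ve_2$ (under the assumption $\ve_1 \le c\ve_2$), yielding $\Theta\big(n/(\ve_2^2\log n)\big)$ samples. This can be polynomially larger than the target $n(\ve_1/\ve_2^2)^2$: e.g., with $\ve_2 = n^{-1/4}$ and $\ve_1 = \ve_2^2 = n^{-1/2}$, the claimed bound is $\mathcal{O}(n)$ but the estimator needs $\Theta(n^{3/2}/\log n)$. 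Your first tester does not close this gap either; and separately, your claim that a CDVV-style statistic handles $\ve_1 \lesssim \ve_2^2$ using only the base number of samples contradicts the lower bound (take $\ve_2$ constant and $\ve_1=\ve_2^2$: the lower bound is $\Omega(n/\log n)$, but the base is $n^{2/3}$).

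The paper's actual upper bound does not split into two testers. It uses a \emph{single} chi-square-type statistic $Z=\sum_i Z_i/\hat f_i$ with a novel normalization: for $m\ge n$, $\hat f_i$ estimates $\max\{\sqrt{mn}\,|p_i-q_i|,\, n(p_i+q_i),\, 1\}$ from a held-out sample. The extra $\sqrt{mn}\,|p_i-q_i|$ term is precisely what tames the close-case expectation: it makes each summand behave like $m^{3/2}|p_i-q_i|/\sqrt{n}$ rather than $m^2(p_i-q_i)^2$, so the close-case expectation is $\lesssim m^{3/2}\ve_1/\sqrt{n}$ while the far-case expectation stays $\gtrsim m^2\ve_2^2/n$. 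Comparing these yields exactly $m \asymp n(\ve_1/\ve_2^2)^2$, and this is what makes the interpolation work across the full range---no polynomial-approximation estimator is involved.
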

\noindent In both cases, we give computationally-efficient algorithms which achieve the upper bounds.
Moreover, one interesting feature of our algorithms is that they require no knowledge of $\ve_1$, which only arises in the sample complexity: that is, our algorithm automatically achieves the best possible $\eps_1$, for a given target $\eps_2$ and number of samples.

It is worth noting that prior to our work, only two extreme points of the full tradeoff we show were known:
\begin{itemize}
    \item the ``non-tolerant'' (noiseless) case where $\ve_1=0$, for which the $\Theta(\sqrt{n}/\ve_2^2)$ sample complexity (or, for the equivalence testing version, $\Theta(\max\{\sqrt{n}/\ve_2^{2}, {n^{2/3}}/{\ve_2^{4/3}}\})$)~\cite{Paninski08,Valiant11,ChanDVV14, ValiantV14}. In the case of identity testing, it is further known that some of the optimal testers (namely, those based on testing in the $\lp{2}$ distance as a proxy) achieve a weak tolerance of $\ve_1 = \ve_2/\sqrt{n}$ ``for free'', due to the relation between $\lp{1}$ and $\lp{2}$ norm along with the Cauchy--Schwarz inequality.
    \item the maximally noisy case where $\ve_1 = \Theta(\ve_2)$, for which results of Valiant and Valiant~\cite{ValiantV10a,ValiantV10b,ValiantV11a} as well as follow-up works~\cite{JiaoHW18,JiaoVHW17,HanJW16} show that the sample complexity must grow as $\Theta(n/\log n)$. Interestingly, the dependence on $\ve_1,\ve_2$ was not fully understood, even in this case, as most lower bounds dealt with \emph{estimation} of the distance between $p,q$ to an additive $\ve$, which is a related yet different problem (essentially, showing that $\Omega(n/(\ve_2-\ve_1)^2\log n)$ samples are required, when $\ve_1=\Theta(1)$ and $\ve_2-\ve_1$ can be arbitrarily small). The lower bound from~\cite{ValiantV10a} does imply, by ``scaling,'' an $\Omega(n/(\ve_2\log n))$ lower bound for arbitrary $\ve_2$ and $\ve_1=\Theta(\ve_1)$, but   it is still far from the upper bound of $\mathcal{O}(n/(\ve_2^2\log n))$ in this regime that both~\cite{ValiantV10b} and~\cite{JiaoHW18} prove in this setting. Our result shows that this upper bound is tight in this parameter regime, as our lower bound is then $\Omega(n/(\ve_2^2\log n))$.
\end{itemize}
We emphasize that our results go beyond those two extreme points, and essentially settles the landscape of tolerant testing. As just one example, the question of testing $1/n^{1/10}$-close vs. $1/n^{1/5}$-far was left completely open by previous work; our results imply that the sample complexity is $\tilde{\Theta}(n)$. We depict in~\autoref{fig:phasediagram} the different regimes of sample complexity this leads to, for both identity and closeness testing.

\begin{figure}[ht!]
    \centering
    \includegraphics[width=0.45\textwidth]{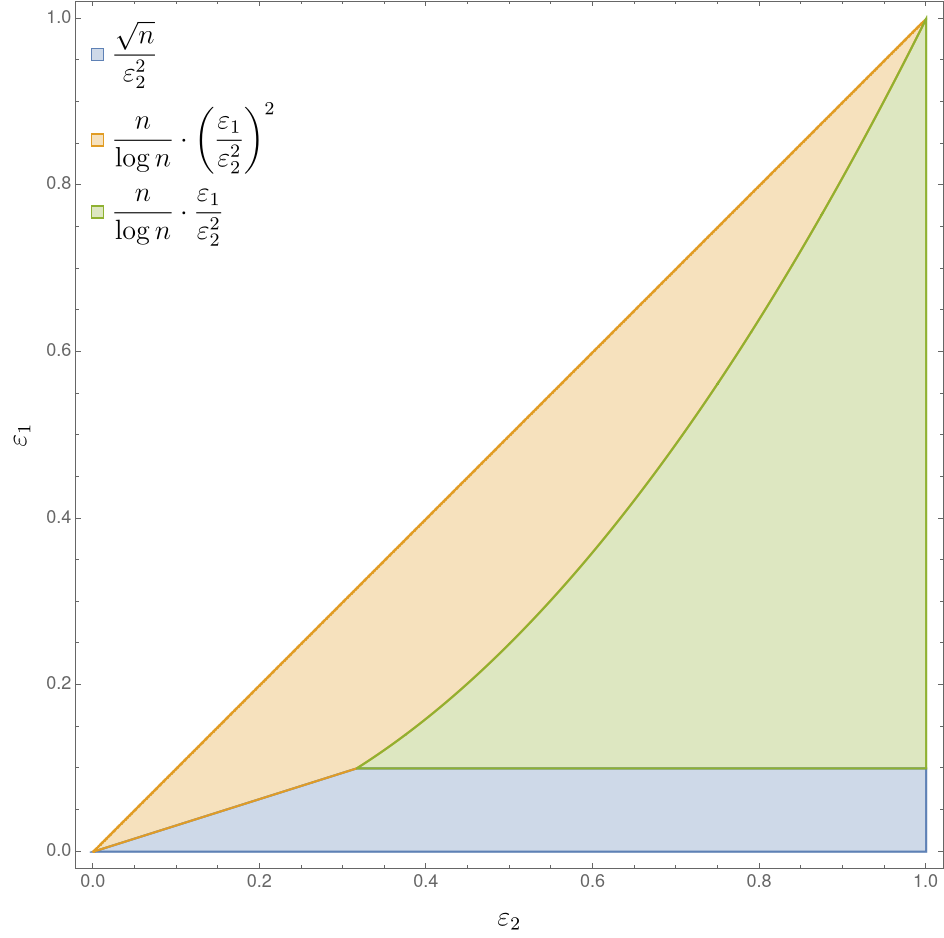} 
    \includegraphics[width=0.45\textwidth]{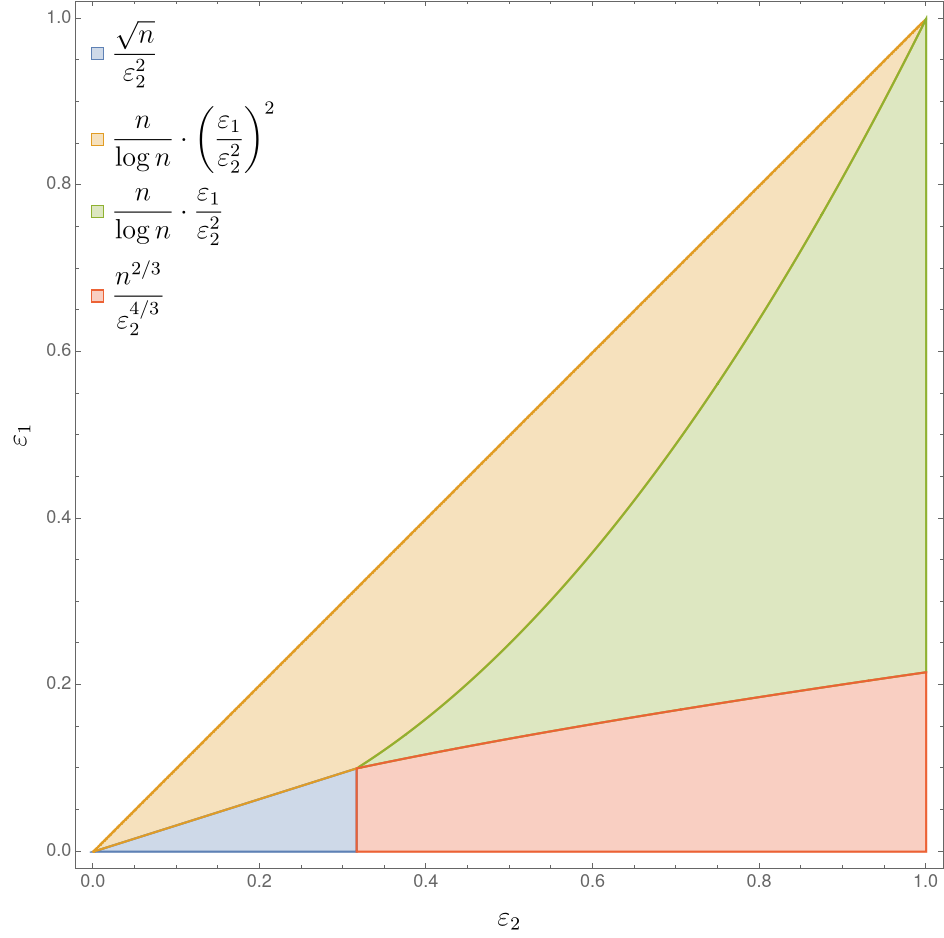}
    \caption{The different regimes of sample complexity corresponding to~\autoref{th:identity:informal} (identity testing, left) and~\autoref{th:closeness:informal} (closeness testing, right), as a function of $0\leq \eps_1<\eps_2 \leq 1$ (for fixed $n$), depicting in both cases which of the terms of the sample complexity bound dominates. }
    \label{fig:phasediagram}
\end{figure}

Surprisingly, our results for both tolerant identity and equivalence testing show that the relevant quantity governing the ``price of tolerance'' is not the ratio $\ve_1/\ve_2$, as one might na\"ively think; but instead is the (inhomogeneous!) ratio $\rho\eqdef \ve_1/\ve_2^2$, which might seem counterintuitive~--~especially in view of the two different regimes the $\max(\rho,\rho^2)$ scaling implies.

Another interesting and unexpected byproduct of our result is to show that even the known ``weak tolerance'' of the standard $\lp{2}$-based testers, which allow to test identity with tolerance $\ve_1=\ve_2/\sqrt{n}$ with the same $\mathcal{O}(\sqrt{n}/\ve_2^2)$ sample complexity as the non-tolerant case, is \emph{not} the best one can do with this sample complexity. Indeed, our results imply that one can actually achieve tolerance up to $\ve_1 = \min(1/\sqrt{n}, \ve_2/\sqrt[4]{n})$ ``for free,'' a significant improvement over $\ve_2/\sqrt{n}$. One can rephrase this as saying that the Cauchy--Schwarz inequality, from which this ``natural'' weak tolerance provided by $\lp{2}$-based testers stems from, is (oddly) not the right way to look at the problem.

Finally, our techniques allow us to derive an analogue of~\autoref{th:identity:informal} for the ``instance-optimal'' setting~\cite{ValiantV17a} (see also~\cite{BlaisCG17,DiakonikolasK16}), where the sample complexity is expressed as a function of the known reference distribution $q$ instead of the domain size $n$ (which corresponds to a worst-case over all possible reference distributions). Specifically, we show the following:\footnote{Here and in~\autoref{app:io}, we slightly abuse the $\tilde{\Theta}$ notation to also hide logarithmic factors in $n$, not just in the argument.}
\begin{theorem}[Instance-optimal identity testing (Informal; see Theorem~\ref{th:ubU:io} and Theorem~\ref{th:lbU:io})]\label{th:identity:io:informal}
For any fixed $q$ over $\N$, the sample complexity of tolerant identity testing with reference distribution $q$ with parameters $0\leq \ve_1 < \ve_2 \leq 1$ is
\[
\tilde{\Theta}\mleft(\frac{\|q_{-\Theta(\ve_2)}\|_{2/3}}{\ve_2^{2}} + \|q_{-\Theta(\ve_2)}\|_{1/2}\cdot\mleft(\frac{\ve_1}{\ve_2^2}\mright)^{\!\!2} + \|q_{-\Theta(\ve_2)}\|_0 \cdot\frac{\ve_1}{\ve_2^2} \mright)\,,
\]
where $q_{-\alpha}$ denotes the (sub)distribution obtained by removing as many of the smallest elements of $q$ as possible, without removing more than a total of $\alpha$ probability mass overall.
\end{theorem}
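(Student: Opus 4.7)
My plan is to prove~\autoref{th:identity:io:informal} by reducing the instance-optimal problem to a collection of worst-case subproblems (via~\autoref{th:identity:informal}) through a dyadic bucket decomposition of the reference $q$. First, replacing $q$ by its $\Theta(\ve_2)$-truncation $q_{-\Theta(\ve_2)}$ only distorts $\ve_1$ and $\ve_2$ by a constant, so I may work on the effective support and discard elements with $q_i = O(\ve_2/n)$. I then partition the surviving support into $O(\log n)$ dyadic buckets $B_j := \{i : q_i \in (2^{-j-1}, 2^{-j}]\}$, so that $q$ restricted (and renormalized) to each $B_j$ is within a factor of $2$ of uniform of size $b_j := |B_j|$ with total mass $\mu_j \asymp b_j 2^{-j}$. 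Up to the $\tilde{\Theta}$ slack, the three norms in the theorem are equivalent to the weighted bucket sums $\|q_{-\Theta(\ve_2)}\|_{2/3} \asymp \sum_j \sqrt{b_j}\cdot\mu_j$, $\|q_{-\Theta(\ve_2)}\|_{1/2} \asymp \sum_j b_j \mu_j^2$, and $\|q_{-\Theta(\ve_2)}\|_0 = \sum_j b_j$.

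For the upper bound (\autoref{th:ubU:io}), I would run the worst-case tester of~\autoref{th:ubU} on the conditional problem inside each bucket, with tolerance parameters $(\ve_1^{(j)},\ve_2^{(j)})$ allocated by splitting the global $\lp{1}$-distance across buckets via a triangle inequality. Since $q|_{B_j}$ is near-uniform of size $b_j$, each subproblem costs $\tilde{O}\mleft(\sqrt{b_j}/(\ve_2^{(j)})^2 + b_j\max\{\rho^{(j)},(\rho^{(j)})^2\}\mright)$ samples conditional on landing in $B_j$, which introduces an extra factor of $1/\mu_j$ in the global sample budget. Summing over $j$ and optimizing the $(\ve^{(j)})$ split via a H\"older-type inequality yields precisely the three claimed norms; overall correctness follows by aggregating the per-bucket closeness guarantees.

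For the lower bound (\autoref{th:lbU:io}), I would transport the worst-case construction from~\autoref{cor:lbU} into a single bucket $B_{j^\star}$, where $j^\star$ maximizes one of the three term contributions. The construction perturbs $q$ only inside $B_{j^\star}$, so that $\|p-q\|_1 = \ve_2$ while inside the bucket the conditional perturbation has $\lp{1}$-magnitude $\ve_2/\mu_{j^\star}$ on a near-uniform distribution over $b_{j^\star}$ points. Applying~\autoref{cor:lbU} locally inside $B_{j^\star}$ yields an $\Omega(\cdot)$ lower bound on the number of samples falling into the bucket, and the factor-$1/\mu_{j^\star}$ conversion to global samples produces exactly one of the three claimed terms. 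Taking the maximum over $j^\star$ and invoking the standard max-to-sum equivalence (up to the $\log n$ factor absorbed by $\tilde{\Theta}$) then recovers the full expression.

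The main obstacle I anticipate is the lower bound: the polynomial-approximation machinery underlying the $n/\log n$ piece of~\autoref{cor:lbU} must be faithfully \emph{miniaturized} inside $B_{j^\star}$ and then extended to a global distribution over $[n]$ that remains indistinguishable from $q$ under the assumed sample budget, which requires ruling out that the trivial extension $p_i=q_i$ outside $B_{j^\star}$ reveals an auxiliary signal through inter-bucket correlations. A secondary challenge, on the upper-bound side, lies in the allocation of $\ve_1$ and $\ve_2$ across buckets: the inhomogeneous exponents $2/3$, $1/2$, $0$ in the three norms favor three different weightings, so a single allocation must be tuned (likely parameter-dependently) to unlock all three terms simultaneously via the correct H\"older-style estimates.
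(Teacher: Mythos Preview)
Your plan matches the paper's proof: dyadic bucketing of $q_{-\Theta(\ve_2)}$, per-bucket invocation of the worst-case tester (\autoref{th:ubU}) for the upper bound, and embedding the hard uniformity instance into a single well-chosen bucket for the lower bound, followed by a max-over-buckets argument absorbed into $\tilde{\Theta}$. Both obstacles you anticipate evaporate. For the lower bound, Poissonization makes the sample counts landing in $B_{j^\star}$ independent of those outside, so the trivial extension $p_i=q_i$ on $[n]\setminus B_{j^\star}$ carries no information by construction; the paper packages this as a one-line simulation reduction (from $\poi(m)$ samples of the uniformity instance to $\poi(\Theta(m)/q(S))$ samples of the embedded instance), with no correlation issue to rule out. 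For the upper bound, no H\"older optimization is needed: the trivial allocation $\ve_2^{(j)}\asymp \ve_2/(\ell\, q(D_j))$ and $\ve_1^{(j)}\asymp \ve_1/q(D_j)$ already yields the three norms once you bound by $\max_j$ rather than $\sum_j$, and the $\tilde{\mathcal{O}}$ absorbs the $\ell=O(\log(n/\ve_2))$ factors. One computational slip to fix: $\|q_{-\Theta(\ve_2)}\|_{1/2}\asymp \sum_j b_j\mu_j$, not $\sum_j b_j\mu_j^2$, since $(\sum_j b_j 2^{-j/2})^2\asymp \max_j b_j^2 2^{-j}=\max_j b_j\mu_j$ up to polylog.
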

We defer the details and proof of this result to~\autoref{app:io}; and discuss some of its aspects here. First, note that by choosing $q$ to be the uniform distribution, we see that $\|q_{-\Theta(\ve_2)}\|_{2/3}\approx \sqrt{n}$, $\|q_{-\Theta(\ve_2)}\|_{1/2}\approx n$, and $\|q_{-\Theta(\ve_2)}\|_{0}\approx n$, so that~\autoref{th:identity:io:informal} retrieves~\autoref{th:identity:informal} up to logarithmic factors.
In particular, this gives a refined perspective  on~\autoref{th:identity:informal}, showing that the $\tilde \Theta(n)$ term actually arises due to two separate costs, which happen to coincide for the uniform distribution.
This brings us to our second point: the term $\|q_{-\Theta(\ve_2)}\|_{2/3}$ corresponds to the \emph{non-tolerant} instance-optimal identity testing bound established in~\cite{ValiantV17a}, i.e., a \emph{testing} term; while the quantity $\|q_{-\Theta(\ve_2)}\|_{1/2}$ can be interpreted as capturing the difficulty of \emph{learning}, as the $1/2$-quasinorm is known to capture the sample complexity of learning a probability distribution (see, e.g.,~\cite{KamathOPS15,CanonneLearningNote20}). Thus, the bound of~\autoref{th:identity:io:informal} can be read as saying the sample complexity of tolerant identity testing is (nearly) characterized by three aspects of the reference distribution: how hard it is to \emph{test}, how hard it is to \emph{learn}, and how large its \emph{effective support size} is.

\paragraph{Relation to the Statistics literature.} 
Despite their pervasive use, the Statistics community is outspoken about the pitfalls associated with point nulls (i.e., $\ve_1 = 0$) for statistical hypothesis testing~\cite{BergerS87, RaoL16, Abadie20}.
Instead, the community advocates for composite nulls, where the null hypothesis is a set of distributions rather than a single one. 
This more-general problem is often reduced to our tolerant testing problem (sometimes called the \emph{imprecise null} by the Statistics community) by assuming the null holds and performing estimation to obtain a candidate distribution $q$.
Thus, we believe our results may be a useful tool for solving more challenging composite-versus-composite hypothesis testing problems.
While some classic work provides minimax rates for certain related tolerant testing problems~\cite{Ingster00}, results in this direction have been relatively hard to come by.
In a recent survey paper~\cite{BalakrishnanW18}, Balakrishnan and Wasserman specifically highlight the problem of designing non-conservative thresholds for imprecise null hypothesis tests, which we believe to be an interesting direction for future work.

\paragraph{Overview of our techniques.}
Given the extensive literature on distribution testing, the community has developed a rich set of tools for problems in this space. 
However, the techniques used for the two extreme cases appear to be qualitatively quite different.
In the non-tolerant case, algorithms usually take the form of simple $\lp{2}$- or $\chi^2$-test statistics, and lower bounds are established via either Ingster's method~\cite{Ingster94} or mutual information arguments.
On the other hand, analysis for the maximally noisy case depends on results from the literature on best-polynomial approximation.
Given the contrasting approaches for these two cases, it is natural to wonder which set of techniques will be effective for the problems which lie between the two.
Interestingly, our results borrow from both: our algorithms are more similar to those from the non-tolerant setting, while our lower bound techniques resemble those in the maximally noisy case.

Our main algorithm thresholds a rescaled $\lp{2}$-statistic (in certain cases called a $\chi^2$-statistic), similar to testing algorithms in the past (see, e.g.,~\cite{ChanDVV14,ValiantV14,AcharyaDK15,DiakonikolasKN15a}).
Specifically, our statistic takes the form $Z = \sum_i \left((X_i - Y_i)^2 - X_i - Y_i\right)/\hat f_i$,
where $X_i$ and $Y_i$ are the number of occurrences of symbol $i$ drawn from distribution $p$ and $q$, respectively, and $\hat f_i$ are symbol-dependent rescaling factors. 
While prior works either computed these factors for identity testing based on the reference distribution $q$, or used the same set of samples for both the numerator as well as the rescaling factor in the denominator, we use sample-splitting to separately obtain empirical estimates $\hat f_i$ for the relevant quantities.
We show multiplicative concentration for these factors to ensure they are close to the values for which we are using them as a proxy.
These rescaling factors are empirical estimates of two terms. A typical choice, now common in the literature, is based on $p_i + q_i$, which limits fluctuations in the estimator caused by individual terms. 
Our approach crucially introduces an additional novel rescaling term based on  $|p_i - q_i|$, which prevents the statistic from placing too much emphasis on the the $\lp{2}$-norm of the distribution.
The contributions of both terms are crucial for making the analysis work out.

We note that our test statistic only involves the first two moments of the distribution.
This is in contrast to previous upper bounds for tolerant testing in the maximally noisy case, which instead inspected $\log n$ moments.
Thus, we show that considering only two moments suffices for near-optimal tolerant testing.
Interestingly, our algorithm achieves the optimal sample complexity (up to constants) for the non-tolerant case, but loses a $\log n$ factor in the maximally noisy case.
Removing this final logarithmic term may require a statistic which exploits higher-order moments, and is an interesting question for future work.\smallskip

Our lower bounds are obtained via the generalized two-point method.
At a high level, we follow the moment matching approach pioneered by~\cite{WuY16}. We construct two priors over distributions, where distributions drawn from the two priors are $\ve_1$-close to and $\ve_2$-far from uniform, respectively.
To prove lower bounds, we must choose these priors such that the process of drawing a distribution and then $m$ samples from it has low total variation distance between the two priors.
By considering priors over product distributions, we can further reduce our task to simply constructing a pair of univariate random variables with properties described in Theorem~\ref{thm:lb-main}.
By appealing to results from polynomial approximation (Lemma~\ref{lem:mm}), it suffices to construct this pair such that their low-order moments match.

Prior works construct this pair of random variables by expressing this moment matching problem as an infinite dimensional convex program and analyzing its dual.
Our approach follows the same recipe, however, the analysis of the dual convex program is much more involved in our case.
Prior lower bounds only considered the special case where $\eps_1$ and $\eps_2$ differ by a fixed, constant factor.
In this regime, tolerant testing becomes essentially equivalent to learning the $\ell_1$ distance between $p$ to $q$ to error $\eps_1$.
This is a setting which is much easier for this formalism to handle; indeed, the moment matching paradigm was initially designed for estimation problems.
Importantly, this induces an key symmetry in the lower bound construction, and consequently, the dual has a very nice interpretation in terms of the best uniform approximation of a given function by a low-degree polynomial.

In our case we must handle general $\eps_1$ and $\eps_2$, and this symmetry is lost. 
As a result, we analyze a convex program which directly captures the testing problem.
However, the dual has a much more complex interpretation.
At a high level, the goal is now to approximately fit a low-degree polynomial within a ``wedge'' of minimal arc length.
Interestingly, in our formulation of the dual, instead of having to prove that there is a good approximating polynomial, we must demonstrate that no low degree polynomial can achieve this task.
This is the main technical difficulty in the lower bound, and we do so from first principles by leveraging classic tools from polynomial approximation theory to prove new approximation-theoretic results in our setting.

\subsection{Related Work}
Distribution testing was first considered in the theoretical computer science community by Goldreich and Ron~\cite{GoldreichR00}, who analyzed and applied an algorithm for uniformity testing towards the problem of testing whether a graph is an expander. Batu, Fischer, Fortnow, Kumar, Rubinfeld, and White~\cite{BatuFFKRW01} studied the general problem of identity testing. 
A number of results have discovered and rediscovered optimal bounds for identity testing~\cite{Paninski08,ValiantV14,AcharyaDK15,DiakonikolasKN15a,Goldreich16,DiakonikolasK16,DiakonikolasGPP18,DaskalakisKW18,DiakonikolasGPP19}, even with optimal dependence on the failure probability $\delta$ and on an instance-by-instance basis.
The harder problem of equivalence testing was studied in~\cite{BatuFRSW00}, and optimal upper and lower bounds were given in~\cite{Valiant11,ChanDVV14,DaskalakisKW18,DiakonikolasGKPP21}. Some work has also studied the case where an unequal number of samples are received from the two distributions~\cite{AcharyaJOS14c,BhattacharyaV15,DiakonikolasK16}.

Tolerant testing has been previously considered, in a few different regimes.
Strong tolerance, or equivalently, estimating distance between distributions, was studied first by Valiant and Valiant~\cite{ValiantV10a,ValiantV10b,ValiantV11a,ValiantV11b}, and in more recent works by Han, Jiao, Venkat, and Weissman~\cite{JiaoHW18,JiaoVHW17,HanJW16}. 
Tolerance in distances besides $\lp{1}$ (including chi-squared, KL, Hellinger, and $\lp{2}$) has also been considered~\cite{GoldreichR00,BatuFRSW00,ChanDVV14,AcharyaDK15,DaskalakisKW18}.
An interesting direction for future work is to understand the sample complexity of tolerant testing for these other distances in a fine-grained manner, as we do for $\ell_1$ distance.
Moreover, results with $\lp{2}$-tolerance imply testers with weak $\lp{1}$-tolerance, through the relation between $\lp{1}$ and $\lp{2}$ norms and the Cauchy--Schwarz inequality.  Finally, very recent work sets out to understand whether, for general properties of distributions, the (near)-quadratic gap between tolerant and non-tolerant testing achievable for identity testing is the worst possible~\cite{ChakrabortyFGMS21}.
For additional background on distribution testing, see surveys and related work in~\cite{Rubinfeld12, BalakrishnanW18,Kamath18,Canonne20}.

Techniques involving moment matching and best-polynomial approximation are useful for tolerant distribution testing, but also play a key role in estimation of distributional properties, including entropy, support size, support coverage, and distance to uniformity~\cite{WuY16,JiaoHW18,AcharyaDOS17,OrlitskySW16,WuY18}
See~\cite{WuY20} for a survey on applications of polynomial methods in statistics.

\subsection{Preliminaries}
We identify a probability distribution $p$ over a known discrete domain $[n]\eqdef \{1,2,\dots,n\}$ with its probability mass function (pmf), i.e., a nonnegative vector $p =(p_1,p_2,\dots,p_n)$ such that $\sum_{i=1}^n p_i = 1$. Given two distributions $p,q$, their total variation distance (also known as statistical distance) is defined as
\[
    \operatorname{TV}(p,q) = \sup_{S\subseteq [n]} (p(S)-q(S)) = \frac{1}{2} \sum_{i=1}^n |p_i-q_i| = \frac{1}{2}\norm{p-q}_1\,.
\]
Due to this equivalence with the $\lp{1}$ norm, we will interchangeably use the TV and $\lp{1}$ norms in our paper. We will also extensively use the $\lp{2}$ distance between probability distributions, which is just the $\lp{2}$ norm $\norm{p-q}_2$ between their pmfs and, by Cauchy--Schwarz, satisfies $\frac{1}{\sqrt{n}}\norm{p-q}_1 \leq \norm{p-q}_2 \leq \norm{p-q}_1$. 

Let $p,q$ be two distributions over the domain $[n]$. For given $\ve_1$ and $\ve_2$ such that $0\leq \ve_1 < \ve_2$, we want to understand the sample complexity (i.e., minimum number of i.i.d.\ samples required) to distinguish between:
\begin{description}
    \item[Yes:]  $\norm{p - q}_1 \leq \ve_1$,
    \item[No:] $\norm{p - q}_1 > \ve_2$. 
\end{description}
with probability at least $4/5$.\footnote{The exact constant here is immaterial, and by standard amplification arguments one can achieve a probability of success of $1 - \delta$ at the cost of a multiplicative $\mathcal{O}(\log (1/\delta))$ factor in the sample complexity.}

We consider the problems of tolerant uniformity, identity, and equivalence testing.
In \emph{identity testing} the distribution $q$ is explicitly known in advance, while $p$ is unknown: the sample complexity is then the number of i.i.d.\ samples from $p$. \emph{Uniformity testing} is a special case of identity testing, where $q=(\frac{1}{n},\frac{1}{n},\dots,\frac{1}{n})$ is the uniform distribution, denoted $\unif_n$. 
In \emph{equivalence testing}, both $p$ and $q$ are unknown and we get samples from each. The sample complexity is then the total number of samples obtained from both $p$ and $q$. We will typically denote the number of i.i.d.\ samples used by an algorithm by $m$.
 Note that uniformity testing is a special case of identity testing (and hence lower bounds for the former imply lower bounds for the latter), and that equivalence testing is at least as hard as identity testing, in terms of sample complexity.

\section{Algorithms for Tolerant Testing}

In this section, we describe our testing algorithm (Algorithm~\ref{algo:tester}), before analyzing its performance. As a preliminary simplification, instead of assuming the algorithm is provided with $m$ independent samples we will rely on the so-called ``Poissonization trick'' and assume we obtain $\operatorname{Poi}(m)$ samples each from both $p$ and $q$. 
The benefit of Poissonization is that the number of occurrences of each domain element will be an independent Poisson, eliminating correlations between symbols which arise with a fixed budget.
This is without loss of generality, as by standard arguments about concentration of Poisson random variables this changes the sample complexity by at most a (small) constant factor. Moreover, losing again a factor 2 in the sample complexity, our algorithms will take as input two sets of $\operatorname{Poi}(m)$ samples for each of $p$ and $q$.

Let $\tilde{X}_i$ and $X_i$ be the count of occurrences of symbol $i\in[n]$ in the first and the second set of the samples from $p$, respectively.
Similarly, let $\tilde{Y}_i$ and $Y_i$ be the count of symbol $i$ in the first and the second set of the samples from $q$, respectively.
Let 
\[
f_i \eqdef
\begin{cases}
\max\{ \sqrt{mn}\cdot |p_i-q_i|, n\cdot  (p_i+q_i), 1 \} & \text{if $m\geq n$}\\
\max\{ m\cdot  (p_i+q_i), 1 \} & \text{if $m < n$}.
\end{cases}
\]
We will use the first set of counts $\tilde{X}_i$ and $\tilde{Y}_i$ to estimate $f_i$ with $\widehat{f}_i$, defined as
\[
  \widehat{f}_i \eqdef
  \begin{cases}
\max\mleft\{\frac{|\tilde{X}_i-\tilde{Y}_i|}{\sqrt{m/n}},\frac{\tilde{X}_i+\tilde{Y}_i}{m/n}, 1 \mright\} & \text{if $m\geq n$}\\
\max\{\tilde{X}_i+\tilde{Y}_i, 1 \} & \text{if $m < n$}.
\end{cases}
\]
Let $Z_i \eqdef (X_i-Y_i)^2-X_i-Y_i$ and let
\begin{equation}
  \label{eq:defz}
  Z\eqdef \sum_{i=1}^n \frac{Z_i}{\widehat{f}_i}.
\end{equation}
and $\tau \eqdef c\cdot\min\mleft(\frac{m^{3/2}{\ve_2}}{n^{1/2}}, \frac{m^2\ve_2^2}{n}\mright)$, 
where $c>0$ is an absolute constant determined in the course of the analysis.
Our tester is then as follows:
\begin{algorithm}[H]
  \begin{algorithmic}
    \Require{$0 \leq \ve_1 < \ve_2 \leq 1$, $m$, $n$, two sets of $\operatorname{Poi}(m)$ samples from both $p$ and $q$}
    \State Set the threshold
      \[
          \tau \gets c\cdot\min\mleft(\frac{m^{3/2}{\ve_2}}{n^{1/2}}, \frac{m^2\ve_2^2}{n}\mright) 
      \]
    \State Compute 
    $
        Z
    $
    from the sets of samples, as per~\eqref{eq:defz}.
    \If{$Z \geq \tau$} 
      \Return $\norm{p-q}_1\geq \ve_2$
    \Else 
      \ \Return $\norm{p-q}_1\leq \ve_1$
    \EndIf
  \end{algorithmic}
  \caption{\label{algo:tester}Tolerant testing algorithm.}
\end{algorithm}
\noindent Note that the algorithm itself requires no knowledge of $\ve_1$, and thus as the number of samples $m$ increases, the same test statistic (with appropriate substitution of $m$) becomes more and more tolerant.

To gain some intuition, we first remark that our tester is a modification of the $\lp{2}$ testers in~\cite{ChanDVV14,DiakonikolasK16}, and akin to the chi-square tester from~\cite{AcharyaDK15}. The main difference lies in the choice of normalizing factor $f_i$ (of which $\widehat{f}_i$ is merely the natural estimator). The goal of this denominator is twofold: the relatively standard term $n\cdot  (p_i+q_i)$ (which is comparable to the standard deviation of $Z_i$; for $m<n$, we use $m(p_i+q_i)$ to make up for larger imprecision in our estimates) ensures that no single term of the sum will make the estimator fluctuate too much. The term $\sqrt{mn}\cdot |p_i-q_i|$  (which is only needed the regime $m\geq n$, since for $m<n$ the best accuracy we can get for $|p_i-q_i|$ is $\approx 1/m$, but scaling by $m|p_i-q_i|$ would be unnecessary as the $m(p_i+q_i)$ term already dominates) is a crucial difference with previous work; its goal is to ``tamper down'' the numerator $Z_i$ when the $\lp{2}$ contribution $(p_i-q_i)^2$ is too large, which is key for our $\lp{2}$-based tester to work. Indeed, in the ``far'' case where $\norm{p-q}_1\geq \ve_2$, this is not a problem; however, in the ``close'' case where $\norm{p-q}_1\leq \ve_1$, the relation between $\lp{2}$ and $\lp{1}$ does not preclude an individual element to have a large contribution $(p_i-q_i)^2$, which could cause the statistic to be too large and the tester to incorrectly reject. To avoid this, the term $\sqrt{mn}\cdot |p_i-q_i|$ in the denominator will ``kick in'' for any such element $i$, and make the ratio $Z_i/f_i$ behave proportionally to $|p_i-q_i|/\sqrt{mn}$ instead of $(p_i-q_i)^2$, ensuring that the algorithm does not mistakenly reject ``close'' distributions due to any single large element contribution. %

\begin{remark}
\label{rem:split}
For the identity testing problem, where the reference distribution $q$ is known, we use the now-standard ``splitting operation'' of Diakonikolas and Kane~\cite{DiakonikolasK16} (see Section~\ref{app:splitting} for details) to obtain distributions $p'$ and $q'$ over a domain of size $2n$ such that $\norm{p'-q'}_1 = \norm{p-q}_1$ and $\norm{q'}_2\leq 1/\sqrt{n}$. Moreover, samples from $p$ and $q$ can be used to simulate the same number of samples from distributions $p'$ and $q'$, respectively. 
We apply our tester on the modified distributions $p'$ and $q'$, instead of using it for $p$ and $q$ directly. 
As the new reference distribution $q'$ is over a domain of size $2n$ and satisfies $\norm{q'}_2\leq {\sqrt{2}}/{\sqrt{2n}}$, 
this transformation lets us assume without loss of generality that the reference distribution $q$ over $[n]$ in the identity testing problem is such that $\norm{q}_2\leq \sqrt{2/n}$.
\end{remark}\medskip

We now formally state the performance of Algorithm~\ref{algo:tester} for tolerant identity and equivalence testing, i.e., that it achieves near-optimal sample complexity in both cases.
\begin{theorem}[Identity testing]\label{th:ubU}
Let $q$ be a known reference distribution and $p$ be an unknown distributions, both over $[n]$. There exists an absolute constant $c >0$ such that, for any $0 \leq  \ve_2\leq 1$ and $0\leq \ve_1 \leq c \ve_2$, given
\[
      \mathcal{O}\mleft( n\Big(\frac{\ve_1}{\ve_2^2}\Big)^2+n\Big(\frac{\ve_1}{\ve_2^2}\Big)+\frac{\sqrt n}{\ve_2^{2}} \mright)
\]
samples from each of $p$ and $q$ Algorithm~\ref{algo:tester} (after the splitting operation of Remark~\ref{rem:split}) distinguishes between $\norm{p-q}_1\leq \ve_1$ and $\norm{p-q}_1\geq \ve_2$ with probability at least $4/5$.
\end{theorem}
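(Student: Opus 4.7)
After applying the splitting reduction of Remark~\ref{rem:split} so that we may assume $\|q\|_2 \le \sqrt{2/n}$ and Poissonizing, my plan is to first show that the random denominators $\widehat{f}_i$ concentrate multiplicatively around $f_i$, then analyze the mean and variance of $Z$ treating the $\widehat{f}_i$ as their deterministic counterparts, and finally close the argument by Chebyshev's inequality against the threshold $\tau$. For the first step, since $\tilde{X}_i$ and $\tilde{Y}_i$ are independent Poissons with means $mp_i$ and $mq_i$, standard multiplicative Chernoff bounds for Poisson random variables give $\widehat{f}_i = \Theta(f_i)$ coordinatewise with high probability; a union bound (noting that the trivial cap $\widehat{f}_i\ge 1$ controls coordinates where $f_i=1$) handles the full domain.

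Next I would compute $\mathbb{E}[Z_i] = m^2(p_i - q_i)^2$ from the identity $\mathbb{E}[(X - Y)^2] = (\mu - \nu)^2 + \mu + \nu$ for independent $X \sim \operatorname{Poi}(\mu)$, $Y \sim \operatorname{Poi}(\nu)$, so $\mathbb{E}[Z] \approx \sum_i m^2(p_i - q_i)^2/f_i$. I would then bound this by splitting the sum according to which of the three terms realizes the maximum in $f_i$. On indices where $f_i = n(p_i + q_i)$, Cauchy--Schwarz gives $\sum_i(p_i - q_i)^2/(p_i + q_i) \gtrsim \|p - q\|_1^2$ (restricted to that set), contributing $\Omega(m^2\|p - q\|_1^2/n)$; on indices where $f_i = \sqrt{mn}\cdot|p_i - q_i|$, the ratio telescopes to $m^{3/2}|p_i - q_i|/\sqrt{n}$, contributing $\Omega(m^{3/2}\|p - q\|_1/\sqrt{n})$ over this set. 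Combining, the mean is at least $2\tau$ in the ``no'' case, and at most $O(m^{3/2}\varepsilon_1/\sqrt{n} + m^2\varepsilon_1^2/n) \le \tau/2$ in the ``yes'' case under the hypothesized sample size.

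By Poissonization, the $Z_i$ are independent, so $\mathrm{Var}(Z) = \sum_i \mathrm{Var}(Z_i)/\widehat{f}_i^{\,2}$, and a direct Poisson moment computation gives $\mathrm{Var}(Z_i) = O(m^3(p_i - q_i)^2(p_i + q_i) + m^2(p_i + q_i)^2)$. The second contribution is bounded by $O(m^2/n)$ using $f_i \ge n(p_i + q_i)$ coordinatewise. The first contribution is where the novel $\sqrt{mn}\cdot|p_i - q_i|$ term in $f_i$ becomes essential: on indices where that term realizes the max, the ratio simplifies to $m^2(p_i + q_i)/n$ and sums to $O(m^2/n)$; on the remaining indices, $f_i \ge n(p_i + q_i)$ together with $\|q\|_2^2 = O(1/n)$ after splitting keeps the sum under control. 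A similar but simpler analysis handles the easy $m < n$ regime, where the denominator reduces to $\max\{m(p_i + q_i),1\}$. The upshot is $\mathrm{Var}(Z) = O(\tau^2)$ at the claimed sample complexity in both hypotheses, so Chebyshev separates the two with constant probability.

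The main obstacle I anticipate is the ``yes''-case control: an adversarial $p$ that is $\varepsilon_1$-close to $q$ in $\ell_1$ can nevertheless concentrate essentially all of its discrepancy on a single coordinate, producing an individual term of $Z$ whose expectation or standard deviation could by itself rival $\tau$. Neutralizing this is precisely the role of the $\sqrt{mn}\cdot|p_i - q_i|$ piece of $f_i$, and making each of the three summands $n(\varepsilon_1/\varepsilon_2^2)^2$, $n(\varepsilon_1/\varepsilon_2^2)$, and $\sqrt{n}/\varepsilon_2^2$ of the sample complexity arise cleanly from a distinct sub-regime of the case analysis -- rather than bleeding into one another -- will be the most delicate piece of bookkeeping.
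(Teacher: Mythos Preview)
Your overall plan is right, and the mean/variance calculus you sketch matches the paper's. The gap is in your first step: you cannot get $\widehat f_i = \Theta(f_i)$ for \emph{all} $i$ simultaneously by a union bound with constant-factor slack. The issue is coordinates with $f_i = O(1)$. For instance, when $q$ is near-uniform and $m\asymp n$, one has $m(p_i+q_i)=O(1)$ and $f_i=O(1)$ for every $i$; then $\tilde X_i+\tilde Y_i$ is Poisson with $O(1)$ mean, so $\Pr[\widehat f_i > Cf_i]$ is a fixed constant (not $o(1/n)$) for any constant $C$, and with high probability a constant fraction of the $n$ coordinates will violate the bound. Your parenthetical about the floor $\widehat f_i\ge 1$ only protects the \emph{lower} tail on those coordinates. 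The best a union bound buys you is $\widehat f_i\in[f_i/\log n,\, f_i\log n]$, which would bleed an extra $\log n$ into the sample complexity and miss the theorem as stated.

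The paper sidesteps this by never asking for coordinatewise control. It conditions on the $\widehat f_i$'s and observes that the conditional mean and variance of $Z$ are nonnegative random variables in the $\widehat f_i$'s; using the moment bounds $\E[\widehat f_i]\le c_1 f_i$, $\E[\widehat f_i^{-1}]\le c_2/f_i$, $\E[\widehat f_i^{-2}]\le c_3/f_i^2$ (which do hold coordinatewise, from the Poisson tails you cite), a single Markov inequality on each of $\sum_i \widehat f_i$, $\sum_i \E[Z_i]/\widehat f_i$, and $\sum_i \Var(Z_i)/\widehat f_i^{2}$ replaces $\widehat f_i$ by $f_i$ up to constants \emph{in the sums}, with constant probability. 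For the mean lower bound this is combined with the Cauchy--Schwarz/Sedrakyan inequality $\sum_i m^2(p_i-q_i)^2/\widehat f_i \ge m^2\|p-q\|_1^2/\sum_i \widehat f_i$, which reduces everything to controlling one scalar $\sum_i \widehat f_i$. Once you swap your union-bound step for this aggregate-Markov step, the rest of your plan goes through; one minor note is that your ``yes''-case upper bound on the mean should be $O(m^{3/2}\ve_1/\sqrt n)$ alone (use $f_i\ge \sqrt{mn}\,|p_i-q_i|$ termwise), since the extra $m^2\ve_1^2/n$ you wrote does not follow from Cauchy--Schwarz in that direction and is unnecessary.
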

\begin{theorem}[Equivalence testing]\label{th:ubC}
Let $p$ and $q$ be two unknown distributions over $[n]$. There exists an absolute constant $c >0$ such that, for any $0 \leq  \ve_2\leq 1$ and $0\leq \ve_1 \leq c \ve_2$, given
\[
      \mathcal{O}\mleft( n\Big(\frac{\ve_1}{\ve_2^2}\Big)^2+n\Big(\frac{\ve_1}{\ve_2^2}\Big)+\frac{\sqrt n}{\ve_2^{2}} +\frac{n^{2/3}}{\ve_2^{4/3}} \mright)
\]
samples from each of $p$ and $q$ Algorithm~\ref{algo:tester} distinguishes between $\norm{p-q}_1\leq \ve_1$ and $\norm{p-q}_1\geq \ve_2$ with probability at least $4/5$.
\end{theorem}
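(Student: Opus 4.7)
The plan is a Chebyshev-style analysis of the statistic $Z$ defined in~\eqref{eq:defz}, conditioning on the first batch of samples (which determines $\widehat f_i$) and then checking in the ``close'' and ``far'' cases that the threshold $\tau$ correctly separates the distribution of $Z$ with constant probability. As a preliminary step I would use Bernstein/Chernoff concentration for Poisson counts, together with the truncation at $1$ in the definition of $\widehat f_i$, to show that $\widehat f_i$ is within a constant multiplicative factor of $f_i$ for every $i$, with probability $1-o(1)$ via a union bound over $[n]$ (the tail bounds decay in $f_i$, which is forced to be at least $1$). From here on I treat $\widehat f_i \asymp f_i$ deterministically.

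Using the identity $\Exp[(X_i-Y_i)^2 - X_i - Y_i] = m^2(p_i-q_i)^2$ for independent Poissons, this yields
\[
  \Exp[Z] \;\asymp\; \sum_{i=1}^n \frac{m^2(p_i-q_i)^2}{f_i}, \qquad
  \mathrm{Var}[Z] \;\lesssim\; \sum_{i=1}^n \frac{m^2(p_i+q_i)^2 + m^3(p_i-q_i)^2(p_i+q_i)}{f_i^2}.
\]
The two non-trivial branches in $f_i$ are precisely tuned to these variance contributions: the $n(p_i+q_i)$ term controls the first, and the novel $\sqrt{mn}|p_i-q_i|$ term controls the second. For soundness ($\|p-q\|_1 \le \eps_1$) I would use $f_i \ge \sqrt{mn}\,|p_i-q_i|$ (in the $m \ge n$ branch) to bound
\[
  \Exp[Z] \;\le\; \sum_i \frac{m^2(p_i-q_i)^2}{\sqrt{mn}\,|p_i-q_i|} \;=\; \frac{m^{3/2}}{\sqrt n}\,\|p-q\|_1 \;\le\; \frac{m^{3/2}\eps_1}{\sqrt n},
\]
which sits below $\tau = c\,m^{3/2}\eps_2/\sqrt n$ provided the constant relating $\eps_1$ to $\eps_2$ is small enough (and the $m<n$ branch is handled symmetrically, using the second argument of the $\min$ in $\tau$). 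The variance bound, combined with $f_i \ge \max(1, n(p_i+q_i))$, then gives $\mathrm{Var}[Z] = o(\tau^2)$ and Chebyshev closes out the close case.

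For completeness ($\|p-q\|_1 \ge \eps_2$) I would bucket the coordinates by which of the three terms in $f_i$ realizes the maximum. Using $1/f_i \ge \tfrac13 \min\!\bigl(1,\;1/(n(p_i+q_i)),\;1/(\sqrt{mn}|p_i-q_i|)\bigr)$ and applying H\"older/Cauchy--Schwarz bucketwise, with the constraints $\|p+q\|_1 = 2$ and $\sum_i |p_i-q_i| \ge \eps_2$, I would show that whichever bucket carries the bulk of the $\lp 1$ distance already contributes at least $\tau$ to $\Exp[Z]$ under the claimed sample complexity. The extra $n^{2/3}/\eps_2^{4/3}$ term that is absent from~\autoref{th:ubU} arises exactly in the bucket where the $n(p_i+q_i)$ denominator is active, mirroring the non-tolerant equivalence-testing analysis of~\cite{ChanDVV14,DiakonikolasK16}, since here we cannot invoke Remark~\ref{rem:split} to force $\|q\|_2 \lesssim 1/\sqrt n$. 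The main obstacle is precisely this completeness step: the novel $\sqrt{mn}|p_i-q_i|$ denominator ``tampers down'' the coordinates with the largest $(p_i-q_i)^2$ that would otherwise drive a standard $\lp 2$ tester, so the case analysis has to trade off the three buckets against the two branches of the $\min$ defining $\tau$ and verify the desired lower bound on $\Exp[Z]$ in every parameter regime.
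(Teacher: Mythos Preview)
Your overall strategy---Chebyshev on the conditional mean and variance of $Z$---matches the paper's, but the preliminary concentration step has a real gap. The Poisson tail bounds the paper proves are of the form $\Pr[\widehat f_i > t f_i]\le e^{-ct}$ and $\Pr[\widehat f_i < f_i/t]\le e^{-c't}$ (Lemmas~\ref{lem:ficon} and~\ref{lem:ficon3}); they decay in the \emph{ratio} $t$, not in $f_i$. For coordinates with $f_i\approx 1$ (which is generic when $p_i+q_i\ll 1/n$), a constant-$t$ bound gives only constant failure probability per coordinate, so a union bound over $[n]$ fails. To survive the union bound you would need $t=\Theta(\log n)$, and then $\widehat f_i$ is only within a $\log n$ factor of $f_i$; this propagates through your expectation and variance bounds and costs you exactly the polylog factors the theorem does not allow. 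The paper sidesteps this entirely: instead of uniform multiplicative concentration, it uses the moment bounds $\E[\widehat f_i]\lesssim f_i$, $\E[\widehat f_i^{-1}]\lesssim f_i^{-1}$, $\E[\widehat f_i^{-2}]\lesssim f_i^{-2}$ (Lemma~\ref{cor:ficon5}) and applies Markov to the three nonnegative sums $\sum_i \widehat f_i$, $\sum_i \E[Z_i]/\widehat f_i$, $\sum_i \Var[Z_i]/\widehat f_i^2$ directly. That way no union bound is ever taken, and the deterministic $f_i$ appear in the final inequalities with only constant loss.

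For the completeness lower bound on $\E[Z]$, your bucketing plan can be made to work, but it is more laborious than needed. The paper uses the one-line Cauchy--Schwarz consequence $\sum_i a_i^2/b_i \ge (\sum_i|a_i|)^2/\sum_i b_i$ with $a_i=m(p_i-q_i)$ and $b_i=\widehat f_i$, reducing everything to an upper bound on $\sum_i \widehat f_i$ (which is $\lesssim\sum_i f_i$ by Markov plus the first-moment bound). This replaces your three-bucket case analysis by a single inequality and makes the appearance of the $\min(\tfrac{m^{3/2}\ve_2}{\sqrt n},\tfrac{m^2\ve_2^2}{n})$ threshold transparent. Finally, note that the extra $n^{2/3}/\ve_2^{4/3}$ term is driven not by the expectation lower bound but by the variance: in the $m<n$ regime the paper bounds $\Var[Z\mid\widehat f]$ by (a small constant times) $\E[Z\mid\widehat f]^2$ plus $\E[Z\mid\widehat f]$ plus $m^2\|q\|_2^2$, and it is this last term---uncontrolled when $q$ is unknown---that forces $m\gtrsim n^{2/3}/\ve_2^{4/3}$.
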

Note that for a unified exposition, we assumed in Theorem~\ref{th:ubU} that the algorithm is provided with samples even from the explicitly known reference distribution $q$. This is not a restriction, as given this explicit knowledge it is possible to efficiently sample from the distribution $q$.

\subsection{Analysis of Algorithm~\ref{algo:tester}}
This section is devoted to the proofs of Theorems~\ref{th:ubU} and~\ref{th:ubC}, which are both established in a similar manner.

Observe that, following the Poissonization, all $Z_i$'s and $\widehat{f}_i$'s are independent random variables. From the properties of Poisson distribution, it is not hard to check that the expectation and variance of the $Z_i$'s are given by
\begin{align}
\E[ Z_i ] &=  m^2|p_i-q_i|^2, \label{eq:expzi}\\
\Var [ Z_i ] &= 4m^3(p_i-q_i)^2(p_i+q_i)+2m^2(p_i+q_i)^2. \label{eq:varzi}
\end{align}

\noindent Next, using the independence of $Z_i$ and $\widehat{f}_i$'s, we get that the conditional expectation of $Z$ is
\begin{align}
\E\mleft[ Z\;\middle|\;\widehat{f}_i \text{ for }i\in[n] \mright]  = \E\mleft[ \sum_{i=1}^n \frac{Z_i}{\widehat{f}_i}\;\middle|\;\widehat{f}_i \text{ for }i\in[n] \mright]  = \sum_{i=1}^n \frac{\E[ Z_i ]}{\widehat{f}_i},\label{eq:mean}
\end{align}
while its conditional variance is given by
\begin{align}
\Var\mleft[ Z \big|\widehat{f}_i \text{ for }i\in[n] \mright]  = \Var\mleft[ \sum_{i=1}^n \frac{Z_i}{\widehat{f}_i}\bigg|\widehat{f}_i \text{ for }i\in[n] \mright]  = \sum_{i=1}^n \frac{\Var(Z_i)}{\widehat{f}_i^2}.\label{eq:var}
\end{align}

To prove the optimality of the tester, we first bound the conditional expectation and variance of $Z$. These bounds differ for the regimes $m\geq n$ and $m\leq n$, and are characterized in Lemmas~\ref{lem:mainmgn} and~\ref{lem:mainmgn2}, respectively.
\begin{lemma}\label{lem:mainmgn}
There exist absolute constants $c_1,c_2,c_3>0$ such that the following holds. For $m\geq n$, and any distributions $p$ and $q$ over $[n]$, the following bounds simultanously hold with probability at least $9/10$:
\[
c_1\min\mleft(\frac{m^{3/2}{\norm{p-q}_1}}{n^{1/2}}, \frac{m^2{\norm{p-q}_1}^2}{n}\mright)
\leq 
\E\mleft[ Z\;\middle|\;\widehat{f}_i \text{ for }i\in[n] \mright]
\leq 
c_2\frac{m^{3/2}\norm{p-q}_1}{n^{1/2}},
\]
and
$
\Var\mleft[ Z \;\middle|\;\widehat{f}_i \text{ for }i\in[n] \mright] \leq 
c_3\frac{m^2}{n}.
$
\end{lemma}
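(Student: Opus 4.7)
The plan is to split the proof into a deterministic part (bounds in terms of the ``ideal'' rescalers $f_i$) and a probabilistic part (transferring these bounds to the empirical $\widehat f_i$), with the Poissonization guaranteeing independence of the $\widehat f_i$'s from each other and from the $Z_i$'s.

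\textbf{Step 1 (deterministic bounds in terms of $f_i$).} First I would establish
\[
c_0 \cdot \min\mleft(\frac{m^{3/2}\|p-q\|_1}{\sqrt n},\;\frac{m^2\|p-q\|_1^2}{n}\mright) \;\leq\; \sum_{i=1}^n \frac{m^2(p_i-q_i)^2}{f_i} \;\leq\; \frac{m^{3/2}\|p-q\|_1}{\sqrt n},
\]
and $\sum_i \Var(Z_i)/f_i^2 \leq O(m^2/n)$. The upper bound follows from the pointwise inequality $m^2(p_i-q_i)^2/f_i \leq m^{3/2}|p_i-q_i|/\sqrt n$, verified by case analysis on which of the three terms in the max attains $f_i$. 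The lower bound comes from partitioning $[n]$ into sets $T_1,T_2,T_3$ according to the dominating term: the contribution from $T_1$ gives $m^{3/2}\sum_{T_1}|p_i-q_i|/\sqrt n$ directly; the contribution from $T_2$ is at least $m^2(\sum_{T_2}|p_i-q_i|)^2/(2n)$ by Cauchy--Schwarz with $\sum_i(p_i+q_i)\leq 2$; and the contribution from $T_3$ is at least $m^2(\sum_{T_3}|p_i-q_i|)^2/n$ by Cauchy--Schwarz with $|T_3|\leq n$. Whichever of $T_1,T_2,T_3$ carries a constant fraction of $\|p-q\|_1$ produces the claimed $\min$. For the variance, the inequalities $f_i^2 \geq mn(p_i-q_i)^2$ and $f_i^2 \geq n^2(p_i+q_i)^2$ reduce the two summands of $\Var(Z_i)$ to $O(m^2(p_i+q_i)/n)$ and $O(m^2/n^2)$, which both sum to $O(m^2/n)$.

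\textbf{Step 2 (transfer to $\widehat f_i$).} Next I would prove that, with probability at least $9/10$ over the first sample set, the conditional mean and variance of $Z$ sit within constant factors of the quantities bounded in Step 1. The key per-index estimates are $\E[\widehat f_i] = O(f_i)$, and
\[
\E\mleft[\frac{1}{\widehat f_i}\mright] \asymp \frac{1}{f_i}, \qquad \E\mleft[\frac{1}{\widehat f_i^{\,2}}\mright] = O\mleft(\frac{1}{f_i^2}\mright).
\]
The expectation of $\widehat f_i$ is controlled by the triangle inequality and Cauchy--Schwarz, using $\E[(\tilde X_i-\tilde Y_i)^2] = m^2(p_i-q_i)^2+m(p_i+q_i)$ and $\E[\tilde X_i+\tilde Y_i]=m(p_i+q_i)$. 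The upper bounds on $\E[1/\widehat f_i^{\,k}]$ follow by combining the trivial $\widehat f_i \geq 1$ (which handles the regime where $m(p_i+q_i)$ is small) with the Poisson identity $\E[1/(1+\Poi(\lambda))]=(1-e^{-\lambda})/\lambda$; the matching lower bound on $\E[1/\widehat f_i]$ is Jensen applied to $\E[\widehat f_i] = O(f_i)$. Combining with linearity of expectation, the expectations $\E_1[\sum_i m^2(p_i-q_i)^2/\widehat f_i]$ and $\E_1[\sum_i \Var(Z_i)/\widehat f_i^2]$ match the deterministic bounds of Step 1 up to constants. Markov's inequality then delivers the upper bounds on $\E[Z\mid\widehat f]$ and $\Var[Z\mid\widehat f]$ with probability $\geq 29/30$ each. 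For the lower bound on $\E[Z\mid\widehat f]$, I would use independence of the $\widehat f_i$'s to compute $\Var_1(\sum_i m^2(p_i-q_i)^2/\widehat f_i) \leq O(\sum_i m^4(p_i-q_i)^4/f_i^2)$, and bound this by a small multiple of the squared expectation using the pointwise inequality $m^2(p_i-q_i)^2/f_i \leq m^{3/2}|p_i-q_i|/\sqrt n$ together with the lower bound from Step 1; Chebyshev then gives a $29/30$ lower tail bound. A union bound over the three failure events yields the claimed joint $9/10$ probability.

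\textbf{Main obstacle.} The most delicate part is the lower bound on the conditional mean. A naive per-index approach claiming $\widehat f_i \leq Cf_i$ simultaneously for all $i$ would need a union bound over $n$ indices whose per-index success probability is only guaranteed to be bounded away from $1$ when $m(p_i+q_i)$ is of order $\log n$, which would cost an unacceptable $\log n$ factor. The resolution is to work at the level of sums and use a second-moment (Chebyshev) argument, which in turn relies crucially on the sharp case analysis of Step 1 and on the novel $\sqrt{mn}|p_i-q_i|$ term inside $f_i$: without that term, a single index with $(p_i-q_i)^2$ much larger than $(p_i+q_i)/n$ could inflate $\E[1/\widehat f_i^{\,2}]$ and make the Chebyshev variance dominate the squared mean, breaking the argument. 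Keeping careful track, throughout Step 2, of which of the three terms of $\max$ dominates $f_i$ for each $i$ is therefore essential.
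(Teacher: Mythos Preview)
Your Step~1 and the Markov-based upper bounds in Step~2 are essentially correct and match the paper. The genuine gap is in your lower bound on $\E[Z\mid\widehat f]$. The Chebyshev route you propose does not close: with the only variance bound you establish, namely
\[
\Var_1\!\mleft(\sum_i \frac{m^2(p_i-q_i)^2}{\widehat f_i}\mright)\le O\!\mleft(\sum_i \frac{m^4(p_i-q_i)^4}{f_i^2}\mright),
\]
the variance-to-squared-mean ratio is only $O(1)$ in general. Concretely, take $p,q$ differing on just two indices with $|p_1-q_1|=|p_2-q_2|=\epsilon$ and $f_1=f_2=\sqrt{mn}\,\epsilon$; then the sum has two equal terms $a_1=a_2$, so $\sum a_i^2=\tfrac12(\sum a_i)^2$, and Chebyshev gives nothing close to a $29/30$ lower tail. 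The root cause is that $\E[1/\widehat f_i^{\,2}]=O(1/f_i^2)$ bounds $\Var(1/\widehat f_i)$ only by $O((\E[1/\widehat f_i])^2)$, not by anything smaller, so when $O(1)$ indices dominate the weighted sum you get no concentration. (A sharper bound of the form $\Var(1/\widehat f_i)=o(1/f_i^2)$ might rescue this, but you do not prove it; also, your derivation of $\E[1/\widehat f_i^{\,k}]$ via the identity $\E[1/(1+\poi(\lambda))]=(1-e^{-\lambda})/\lambda$ is incomplete, since $\widehat f_i$ is a $\max$ of three terms, not $1+\poi$.)

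The paper avoids this entirely with a cleaner trick: apply Cauchy--Schwarz \emph{at the level of the random $\widehat f_i$'s} to get
\[
\E[Z\mid\widehat f]\;=\;\sum_i \frac{m^2(p_i-q_i)^2}{\widehat f_i}\;\ge\;\frac{m^2\|p-q\|_1^2}{\sum_i \widehat f_i},
\]
and then apply Markov's inequality to the nonnegative random variable $\sum_i\widehat f_i$, using only $\E[\widehat f_i]=O(f_i)$, to get $\sum_i\widehat f_i\le 30\sum_i\E[\widehat f_i]=O(\sum_i f_i)$ with probability $\ge 29/30$. The deterministic bound $\sum_i f_i\le \sqrt{mn}\|p-q\|_1+3n$ then yields the claimed $\min(\cdot,\cdot)$ lower bound. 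No second-moment control and no per-index concentration are needed; this is exactly the ``work at the level of sums'' idea you allude to, but applied to $\sum_i\widehat f_i$ rather than to the weighted sum.
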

\begin{lemma}\label{lem:mainmgn2}
There exist absolute constants $c_1,c_2,c_3>0$ such that the following holds. For $m\leq n$, and any distributions $p$ and $q$ over $[n]$, the following bounds simultanously hold with probability at least $9/10$:
\[
c_1\frac{m^2{\norm{p-q}_1}^2}{n}
\leq 
\E\mleft[ Z\;\middle|\;\widehat{f}_i \text{ for }i\in[n] \mright]
\leq 
c_2m\norm{p-q}_1,
\]
and
$
\Var\mleft[ Z \;\middle|\;\widehat{f}_i \text{ for }i\in[n] \mright] \leq 
c_3 m.
$
Additionally, 
\[
\Var\mleft[ Z \;\middle|\;\widehat{f}_i \text{ for }i\in[n] \mright]\le \frac{1}{40}( \E[ Z\;|\;\widehat{f}_i \text{ for }i\in[n] ] )^2 + 324\E[ Z\;|\;\widehat{f}_i \text{ for }i\in[n] ]+648 m^2\norm{q}_2^2.
\]
\end{lemma}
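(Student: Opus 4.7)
My plan is to show the four claimed inequalities (the two-sided expectation bound and the two variance bounds) hold simultaneously on a common ``good event'' over the random $\widehat{f}_i$'s, whose probability is at least $9/10$ via a union bound over four sub-events each having probability at most $1/40$. The good event will control (i) $\sum_i\widehat{f}_i = O(n)$, which follows from $\widehat{f}_i\le \tilde X_i+\tilde Y_i+1$ together with Chernoff concentration of $\sum_i(\tilde X_i+\tilde Y_i)\sim\poi(2m)$ (using $m\le n$); (ii) a Markov-based control of the random variable $\sum_i m^2(p_i-q_i)^2/\widehat{f}_i$ by its expectation; and (iii) a Markov-based control of the random variable $\sum_i \Var[Z_i]/\widehat{f}_i^2$. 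For the mean upper bound, note that $\E_{\widehat{f}}[\E[Z\mid \widehat{f}_i]]=\sum_i m^2(p_i-q_i)^2\,\E[1/\widehat{f}_i]$. Using the standard Poisson estimate $\E[1/\max(\poi(\lambda),1)] = O(1/\max(\lambda,1))$ together with $(p_i-q_i)^2\le |p_i-q_i|(p_i+q_i)$ and a case split on whether $m(p_i+q_i)\ge 1$, each term is bounded by $O(m|p_i-q_i|)$ (in the small regime one uses $m(p_i+q_i)<1$ to absorb the extra factor into the bound $\widehat{f}_i\ge 1$). Summing yields $O(m\|p-q\|_1)$, and Markov gives $\E[Z\mid\widehat{f}_i]\le c_2 m\|p-q\|_1$ on the good event.

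For the mean lower bound, I would apply Cauchy--Schwarz,
\[
\|p-q\|_1^2 = \Bigl(\sum_i|p_i-q_i|\Bigr)^{\!\!2} \le \Bigl(\sum_i\widehat{f}_i\Bigr)\Bigl(\sum_i\frac{(p_i-q_i)^2}{\widehat{f}_i}\Bigr),
\]
so that on the good event $\E[Z\mid\widehat{f}_i]\ge m^2\|p-q\|_1^2/\sum_i\widehat{f}_i\ge c_1 m^2\|p-q\|_1^2/n$. For the gross variance bound, I split $\Var[Z\mid\widehat{f}_i]$ into the two pieces of~\eqref{eq:varzi} and partition indices into $L=\{i: m(p_i+q_i)\ge 1\}$ and its complement $S$. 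On $L$, Poisson concentration gives $\widehat{f}_i\gtrsim m(p_i+q_i)$, reducing both variance pieces to sums bounded by $O(m)$ via $\sum_i(p_i+q_i)=2$. On $S$, combining $\widehat{f}_i\ge 1$ with $m(p_i+q_i)<1$ bounds both pieces by $O(m)$ similarly, giving $\Var[Z\mid\widehat{f}_i]\le c_3 m$.

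The refined variance bound is the most delicate piece. For the second variance piece I would apply $(p_i+q_i)^2\le 2(p_i-q_i)^2+8\min(p_i,q_i)^2\le 2(p_i-q_i)^2+8q_i^2$ together with $1/\widehat{f}_i^2\le 1/\widehat{f}_i$ (valid since $\widehat{f}_i\ge 1$) to get a bound of the form $C\,\E[Z\mid\widehat{f}_i]+C'\,m^2\|q\|_2^2$. For the first variance piece I would factor each summand as $2\cdot[m^2(p_i-q_i)^2/\widehat{f}_i]\cdot[2m(p_i+q_i)/\widehat{f}_i]$ and apply the weighted AM--GM inequality $2ab\le \gamma a^2+(1/\gamma)b^2$ with $\gamma=1/40$. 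Then $\sum a_i^2\le (\E[Z\mid \widehat{f}_i])^2$ (using $\sum x_i^2\le(\sum x_i)^2$ for nonnegative $x_i$), while $\sum b_i^2$ is proportional to the second piece already bounded. Collecting gives the stated form $\tfrac{1}{40}(\E[Z\mid\widehat{f}_i])^2+324\,\E[Z\mid\widehat{f}_i]+648\,m^2\|q\|_2^2$ after careful bookkeeping of constants. The main technical obstacle is uniform control of the $\widehat{f}_i$'s across $i\in[n]$: a naive per-index union bound of Chernoff estimates fails in the regime $m(p_i+q_i)\ll 1$ where the Poisson has large relative fluctuations, so I would rely on bounding expected values of sums (using estimates of $\E[1/\widehat{f}_i^k]$ for $k=1,2$) and applying Markov on those sums, rather than attempting per-index concentration.
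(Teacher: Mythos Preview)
Your proposal is correct and follows essentially the same route as the paper: bound $\sum_i \widehat{f}_i$, $\sum_i \E[Z_i]/\widehat{f}_i$, and $\sum_i \Var[Z_i]/\widehat{f}_i^2$ via Markov's inequality applied to their expectations (using $\E[\widehat{f}_i]\asymp f_i$ and $\E[\widehat{f}_i^{-k}]\asymp f_i^{-k}$), invoke Cauchy--Schwarz for the mean lower bound, and for the refined (deterministic) variance inequality use $(p_i+q_i)^2\le 2(p_i-q_i)^2+O(q_i^2)$ together with $\widehat{f}_i\ge 1$; your term-wise AM--GM with $\gamma=1/40$ followed by $\sum a_i^2\le(\sum a_i)^2$ is equivalent to the paper's Cauchy--Schwarz plus $\ell_2\le\ell_1$ plus AM--GM and lands on the same $(1/40)(\E[Z\mid\widehat f])^2+160\sum_i m^2(p_i+q_i)^2/\widehat f_i^2$. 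The one loose spot is the per-index ``Poisson concentration gives $\widehat f_i\gtrsim m(p_i+q_i)$ on $L$'' for the gross variance bound, which would not survive a union bound when $m(p_i+q_i)\approx 1$; you already flag this and switch to the expectation/Markov argument, which is exactly what the paper does.
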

We prove these lemmata in Section~\ref{sec:cond-proofs}.
We now show that, assuming these statements, we can establish Theorems~\ref{th:ubU} and~\ref{th:ubC}.
We handle the cases $m\geq n$ and $m<n$ separately.

\paragraph{Proof of the theorems for $m\geq n$:} Using Lemma~\ref{lem:mainmgn}, we show that for any $m \geq n$ such that $m =  \Omega\left(n\left(\frac{\ve_1}{\ve_2^2}\right)^2+\frac{\sqrt n}{\ve_2^{2}}\right)$ the estimator correctly distinguishes between $\norm{p-q}\leq \ve_1$ vs $\norm{p-q}\geq \ve_2$ with probability at least $8/10$.

Applying Chebyshev’s inequality to the conditional expectation and variance and using Lemma~\ref{lem:mainmgn}, we get that, with probability $\geq 8/10$,
\begin{equation}\label{eq:ell2lb}
 Z  \geq c_1\min\mleft(\frac{m^{3/2}{\norm{p-q}_1}}{n^{1/2}}, \frac{m^2{\norm{p-q}_1}^2}{n}\mright)- \sqrt{10c_3}\frac{m}{\sqrt n},   
\end{equation}
and
\begin{equation}\label{eq:ell2ub}
Z  \leq c_2\frac{m^{3/2}\norm{p-q}_1}{n^{1/2}} + \sqrt{10c_3}\frac{m}{\sqrt n}.
\end{equation}

\begin{itemize}
  \item In the case $\norm{p-q}_1\geq \ve_2$, the lower bound in Equation~\eqref{eq:ell2lb} reduces to
\begin{align*}
 Z &\geq c_1\min\mleft(\frac{m^{3/2}{\ve_2}}{n^{1/2}}, \frac{m^2\ve_2^2}{n}\mright) - \sqrt{10c_3}\frac{m}{\sqrt n}
    \geq
 \frac{c_1}{2}\min\mleft(\frac{m^{3/2}{\ve_2}}{n^{1/2}}, \frac{m^2\ve_2^2}{n}\mright),
\end{align*}
the last step as long as $m \geq C\sqrt{n}/\ve_2^2$ for $C \eqdef \max(2\sqrt{10c_3}/c_1, 40c_3/c_1^2)$.
Therefore, with probability at least $8/10$ the tester correctly outputs that $\norm{p-q}_1\geq \ve_2$. 

\item In the case, $\norm{p-q}_1\leq \ve_1$, the upper bound in Equation~\eqref{eq:ell2ub} reduces to
\[
Z \leq  c_2\frac{m^{3/2}\ve_1}{n^{1/2}}+\sqrt{10c_3}\frac{m}{\sqrt n} \leq \frac{c_1}{4}\min\mleft(\frac{m^{3/2}{\ve_2}}{n^{1/2}}, \frac{m^2\ve_2^2}{n}\mright)
\]
where we used that $m \geq C'\sqrt{n}/\ve_2^2$ for $C \eqdef \max(\frac{8\sqrt{10c_3}}{c_1}, \frac{640c_3}{c_1^2})$
to ensure that $\sqrt{10c_3}\frac{m}{\sqrt n} \leq \frac{c_1}{8}\min\mleft(\frac{m^{3/2}{\ve_2}}{n^{1/2}}, \frac{m^2\ve_2^2}{n}\mright)$, 
and that (i)~$\ve_1 \leq \frac{c_1}{8c_2}\ve_2$ and (ii)~$m \geq C'' \cdot n\Big(\frac{\ve_1}{\ve_2^2}\Big)^2$ with $C' = 64c_2^2/c_1^2$ 
to ensure that $c_2\frac{m^{3/2}\ve_1}{n^{1/2}} \leq \frac{c_1}{8}\min\mleft(\frac{m^{3/2}{\ve_2}}{n^{1/2}}, \frac{m^2\ve_2^2}{n}\mright)$.
Therefore, with probability at least $8/10$ the tester correctly outputs that $\norm{p-q}_1\leq \ve_1$. 
\end{itemize}
This proves the two theorems for the case $m\geq n$. We next turn to the case $m\leq n$.

\paragraph{Proof of the theorems for $m< n$:} The argument for this case is similar to the previous, using Lemma~\ref{lem:mainmgn2} instead of Lemma~\ref{lem:mainmgn}. We show that for any $m \leq n$ such that $m =  \Omega\left(n\frac{\ve_1}{\ve_2^2}+\min\left\{\frac{n\norm{q}_2}{\ve_2^{2}},\frac{n^{2/3}}{\ve_2^{4/3}}\right\}\right)$ the estimator correctly distinguishes between $\norm{p-q}\leq \ve_1$ and $\norm{p-q}\geq \ve_2$ with probability at least $8/10$. This in turn follows from computations nearly identical to the ones above, which we thus omit in the interest of space. 

The proofs for the two cases, combined with the fact that for identity testing we can as discussed before assume without loss of generality that $\norm{q}_2 \leq \sqrt{2/n}$, establish Theorems~\ref{th:ubU} and~\ref{th:ubC}. \qed

\subsection{Proof of Lemmas~\ref{lem:mainmgn} and~\ref{lem:mainmgn2}}
\label{sec:cond-proofs}

In this section, we give the proof of the remaining two pieces in our analysis of Algorithm~\ref{algo:tester}, Lemmas~\ref{lem:mainmgn} and~\ref{lem:mainmgn2}. 
The following lemma will be useful to lower bound the conditional expectation $\E\mleft[ Z\;\middle|\;\widehat{f}_i \text{ for }i\in[n] \mright]$. 
\begin{lemma}\label{cor:ficon5}
There exist absolute constants $c_1,c_2,c_3>0$ such that, for every $m$, $n$, and $i\in [n]$,
\[
\E[\widehat{f}_{i}] \leq c_1 f_i, 
\qquad 
\E[\widehat{f}_{i}^{-1}] \leq \frac{c_2}{f_i}, 
\quad \text{and }
\E[\widehat{f}_{i}^{-2}] \leq \frac{c_3}{f_i^2}.
\]
\end{lemma}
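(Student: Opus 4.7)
The plan is to proceed by case analysis on which of the three terms inside the $\max$ defining $f_i$ is active, leveraging the trivial observation that $\widehat{f}_i \geq 1$ always. For any absolute constant $C_0$, whenever $f_i \leq C_0$ we automatically get $\widehat{f}_i^{-k} \leq 1 \leq C_0^k/f_i^k$ for $k \in \{1,2\}$, so the entire difficulty is concentrated in the regime where $f_i$ is large, where I must show that $\widehat{f}_i$ concentrates near $f_i$ with sufficiently strong tail bounds.

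First I would dispatch the upper bound $\E[\widehat{f}_i] \leq c_1 f_i$ using $\max(a,b,c) \leq a+b+c$ and linearity of expectation. For $m \geq n$, $\E[(\tilde X_i + \tilde Y_i)/(m/n)] = n(p_i+q_i) \leq f_i$ exactly, while Cauchy--Schwarz together with the Skellam second-moment formula gives $\E[|\tilde X_i - \tilde Y_i|]/\sqrt{m/n} \leq \sqrt{mn}\,|p_i-q_i| + \sqrt{n(p_i+q_i)}$; both summands are at most $f_i$ (for the latter, when $f_i \geq 1$ and $n(p_i+q_i) \leq f_i$, one has $\sqrt{n(p_i+q_i)} \leq \sqrt{f_i} \leq f_i$). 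The $m<n$ version is identical but simpler since only the $(p_i+q_i)$ term appears.

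For the reciprocal moment bounds, in the regime $m < n$ we have $\widehat{f}_i = \max(\tilde X_i + \tilde Y_i, 1)$ with $\tilde X_i + \tilde Y_i \sim \operatorname{Poi}(\lambda_i)$, $\lambda_i = m(p_i+q_i)$. If $\lambda_i$ is below a suitable absolute constant, then $f_i = O(1)$ and the trivial bound closes the case; otherwise I would apply the standard Poisson Chernoff bound $\Pr[\tilde X_i + \tilde Y_i \leq \lambda_i/2] \leq e^{-\lambda_i/8}$, yielding $\widehat{f}_i \geq \lambda_i/2 = f_i/2$ off a small event, and then combine via
\[
\E[\widehat{f}_i^{-k}] \leq (2/f_i)^k + e^{-\lambda_i/8},
\]
absorbing the exponential tail into the polynomial one for $\lambda_i$ above an absolute constant. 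For $m \geq n$ with $f_i = n(p_i+q_i)$ as the active term the same argument works via $\widehat{f}_i \geq (\tilde X_i + \tilde Y_i)/(m/n)$, and now $\lambda_i = (m/n)f_i \geq f_i$ makes the tail negligible whenever $f_i$ is large.

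The main obstacle is the regime $m \geq n$ with $f_i = \sqrt{mn}\,|p_i - q_i|$ active, which requires analyzing the Skellam $D = \tilde X_i - \tilde Y_i$ with mean $\mu = m(p_i-q_i)$ (WLOG $\geq 0$) and variance $\sigma^2 = m(p_i+q_i)$. I would split on signal-to-noise: if $\mu \geq C\sigma$ for a sufficiently large absolute constant $C$, a Bernstein-type lower-tail bound on the Skellam yields $\Pr[D \leq \mu/2] \leq \exp(-\Omega(\mu^2/\sigma^2))$, and one verifies via the dominance inequality $\sqrt{mn}\,|p_i-q_i| \geq n(p_i+q_i)$ that $\mu^2/\sigma^2 \geq f_i$, so the bad-event probability is $\exp(-\Omega(f_i))$, which is absorbed into $c/f_i^2$. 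On the good event, $\widehat{f}_i \geq |D|/\sqrt{m/n} \geq f_i/2$. If instead $\mu < C\sigma$, I would square the dominance inequality to get $mn(p_i-q_i)^2 \geq n^2(p_i+q_i)^2$ and combine with the assumption $m(p_i-q_i)^2 < C^2(p_i+q_i)$ to deduce $p_i + q_i \leq C^2/n$, which plugged back in forces $f_i \leq C^2$, and the trivial bound closes this case. This completes the pointwise bounds on $\E[\widehat{f}_i^{-k}]$ for $k\in\{1,2\}$ in all regimes.
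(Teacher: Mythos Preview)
Your argument is correct and gives a genuinely different proof from the paper's. The paper instead first establishes two uniform tail bounds, $\Pr[\widehat{f}_i > t f_i] \le e^{-ct}$ for $t>3$ and $\Pr[\widehat{f}_i < f_i/t] \le e^{-c't}$ for $t>2$ (its Lemmas~\ref{lem:ficon} and~\ref{lem:ficon3}), and then obtains all three moment inequalities by a clean layer-cake integration, e.g.\ $\E[\widehat{f}_i^{-2}] = \tfrac{2}{f_i^2}\int_0^\infty t\,\Pr[\widehat{f}_i<f_i/t]\,dt$. The case analysis you carry out is essentially the same case analysis the paper performs \emph{inside} the proofs of those two tail lemmas, so the underlying Poisson/Skellam concentration work is comparable; you simply skip the intermediate tail-bound packaging and go straight to the moment estimates via a ``good event / bad event'' split. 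Your route is slightly more direct and avoids stating auxiliary lemmas; the paper's route is more modular (the tail bounds are reusable and yield $\E[\widehat{f}_i^{-k}]\lesssim f_i^{-k}$ for any fixed $k$ with no extra work). One small remark: in your Skellam lower-tail step you write $\exp(-\Omega(\mu^2/\sigma^2))$, which is justified because $|p_i-q_i|\le p_i+q_i$ forces $\mu \le \sigma^2$ and hence the Bernstein $\min(\mu^2/\sigma^2,\mu)$ is always achieved by the first argument; it would be worth making that observation explicit.
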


\begin{proof}
We use the following two concentration bounds, which provide exponential tail bounds on our estimates $\widehat{f}_i$ of the of $f_i$'s. The proofs of those two claims are quite technical, and rely on a careful case distinction along with standard concentration properties of Poisson random variables. We provide them in Section~\ref{app:conlem}.
\begin{lemma}\label{lem:ficon}
There exists $c>0$ such that, for every $m$, $n$, $t > 3$, and $i\in [n]$,
$
\Pr[ \widehat{f}_{i} > t f_i ]\leq e^{-ct}.
$
\end{lemma}

\begin{lemma}\label{lem:ficon3}
There exists $c'>0$ such that, for every $m$, $n$, $t > 2$, and $i\in [n]$,
$
\Pr[ \widehat{f}_{i} < \frac{f_i}{t} ]\leq e^{-c't}.
$
\end{lemma}

Given the above two results, the proof is straightforward: indeed, for every $i\in[n]$ we have, using Lemma~\ref{lem:ficon},
\[
    \E[\widehat{f}_{i}] = \int_0^\infty \Pr[ \widehat{f}_{i} > u ]du
    = f_i\int_0^\infty \Pr[ \widehat{f}_{i} > t f_i ]dt
    \leq f_i\mleft( \int_0^3 dt + \int_3^\infty e^{-ct}dt \mright)
    = f_i\mleft( 3 + \frac{e^{-3c}}{c} \mright)
\]
while, from Lemma~\ref{lem:ficon3},
\[
    \E[\widehat{f}_{i}^{-1}] = \int_0^\infty \Pr[ \widehat{f}_{i}^{-1} > u ]du
    = \frac{1}{f_i}\int_0^\infty \Pr[ \widehat{f}_{i} < f_i/t ]dt
    \leq \frac{1}{f_i}\mleft( 2 + \int_2^\infty e^{-c't}dt \mright)
    = \frac{1}{f_i}\mleft( 2 + \frac{e^{-2c'}}{c'} \mright)
\]
and, similarly,
\[
    \E[\widehat{f}_{i}^{-2}] = \int_0^\infty \Pr[ \widehat{f}_{i}^{-2} > u ]du
    = \frac{2}{f_i^2}\int_0^\infty \Pr[ \widehat{f}_{i} < f_i/t ]t\,dt
    \leq \frac{2}{f_i^2}\mleft( 2 + \int_2^\infty te^{-c't}dt \mright)
    = \frac{2}{f_i^2}\mleft( 2 + \frac{(2c'+1)e^{-2c'}}{c'^2} \mright)
\]
which, given that $c,c'$ are just positive constants, is what we set out to prove.
\end{proof}

We also require the following simple inequality.

\begin{fact}
For any real $(a_i)_{i=1}^n$ and positive $(b_i)_{i=1}^n$, 
\[
\sum_{i=1}^n \frac{a_i^2}{b_i}\geq \frac{(\sum_{i=1}^n|a_i|)^2}{\sum_{i=1}^n b_i}.
\]
\end{fact}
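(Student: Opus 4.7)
The plan is to derive this as a direct consequence of the Cauchy--Schwarz inequality, applied to an appropriate splitting of the terms. Since all $b_i > 0$, we may factor each summand on the left as $a_i^2 / b_i = (|a_i|/\sqrt{b_i})^2$, which suggests pairing the sequences $u_i \eqdef |a_i|/\sqrt{b_i}$ and $v_i \eqdef \sqrt{b_i}$.

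Carrying this out: by Cauchy--Schwarz,
\[
    \Bigl(\sum_{i=1}^n |a_i|\Bigr)^2 = \Bigl(\sum_{i=1}^n u_i v_i\Bigr)^2 \leq \Bigl(\sum_{i=1}^n u_i^2\Bigr)\Bigl(\sum_{i=1}^n v_i^2\Bigr) = \Bigl(\sum_{i=1}^n \frac{a_i^2}{b_i}\Bigr)\Bigl(\sum_{i=1}^n b_i\Bigr).
\]
Dividing both sides by $\sum_i b_i > 0$ yields the claim. There is no obstacle here: the only thing to check is that the splitting $|a_i| = u_i v_i$ is valid (which requires $b_i > 0$, as assumed), and that we are allowed to divide through (same reason). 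The bound is tight when $u_i / v_i$ is constant, i.e., when $|a_i|/b_i$ is the same across all $i$, as one would expect from the equality case of Cauchy--Schwarz.

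As an alternative route, one could prove the inequality by induction on $n$ starting from the two-term case $a^2/b + c^2/d \geq (|a|+|c|)^2/(b+d)$, which after clearing denominators reduces to $(|a|d - |c|b)^2 \geq 0$; but the Cauchy--Schwarz derivation above is both shorter and more transparent, so that is the route I would take.
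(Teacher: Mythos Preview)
Your proof is correct and is essentially identical to the paper's: both apply Cauchy--Schwarz to the factorization $|a_i| = \sqrt{b_i}\cdot(|a_i|/\sqrt{b_i})$ and then divide through by $\sum_i b_i$.
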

\begin{proof}
The result follows from applying Cauchy--Schwarz to $\sum_{i=1}^n|a_i| = \sum_{i=1}^n\sqrt{b_i}|a_i|/\sqrt{b_i}$.
\end{proof}
From the above fact and~\eqref{eq:expzi}, it follows that
\[
\E\mleft[ Z\;\middle|\;\widehat{f}_i \text{ for }i\in[n] \mright] = \sum_{i=1}^n \frac{\E[ Z_i ]}{\widehat{f}_i} = \sum_{i=1}^n \frac{m^2(p_i-q_i)^2}{\widehat{f}_i} \geq  \frac{m^2(\sum_{i=1}^n|p_i-q_i|)^2}{\sum_{i=1}^n\widehat{f}_i} =\frac{m^2\norm{p-q}_1^2}{\sum_{i=1}^n\widehat{f}_i}.
\]
Moreover, by definition the random variables $\widehat{f}_i$ are non-negative, and thus, applying the Markov inequality we get that
\[
\sum_{i=1}^n \widehat{f}_i \leq 30 \sum_{i=1}^n \E[\widehat{f}_i]
\]
with probability at least $1-1/30$. Combined with Lemma~\ref{cor:ficon5}, this means that, 
with probability at least $1-1/30$,
\begin{equation}
  \label{eq:markov:1}
 \E\mleft[ Z\;\middle|\;\widehat{f}_i \text{ for }i\in[n] \mright] \geq \frac{m^2\norm{p-q}_1^2}{30c_1\sum_{i=1}^n f_i}.
\end{equation}
Next, applying the Markov's inequality for the non-negative random variable $\E\mleft[ Z\;\middle|\;\widehat{f}_i \text{ for }i\in[n] \mright]$, we get that, with probability at least $1-1/30$,
\begin{equation}
  \label{eq:markov:2}
    \E\mleft[ Z\;\middle|\;\widehat{f}_i \text{ for }i\in[n] \mright] 
    \leq 30 \E\mleft[ \E\mleft[ Z\;\middle|\;\widehat{f}_i \text{ for }i\in[n] \mright]\mright] 
    = 30 \E\mleft[\sum_{i=1}^n \frac{\E[Z_i]}{\widehat{f}_i}\mright]
    \leq 30c_2\sum_{i=1}^n \frac{m^2(p_i-q_i)^2}{f_i}.
\end{equation}
Finally, considering the non-negative random variable $\Var\mleft[ Z\;\middle|\;\widehat{f}_i \text{ for }i\in[n] \mright]$, we again get that, with probability at least $1-1/30$,
\begin{equation}
  \label{eq:markov:3}
    \Var\mleft[ Z\;\middle|\;\widehat{f}_i \text{ for }i\in[n] \mright] 
    \leq 30 \E\mleft[ \Var \mleft[ Z\;\middle|\;\widehat{f}_i \text{ for }i\in[n] \mright]\mright] 
    = 30 \E\mleft[\sum_{i=1}^n \frac{\Var(Z_i)}{\widehat{f}_i^2}\mright]
    \leq 30c_3 \sum_{i=1}^n \frac{\Var(Z_i)}{f_i^2}.
\end{equation}
By a union bound, we get that the guarantees of~\eqref{eq:markov:1},~\eqref{eq:markov:2}, and~\eqref{eq:markov:3} simultaneously hold with probability at least $1-3\cdot\frac{1}{30}=\frac{9}{10}$.
Importantly, the RHS in all three bounds only depend on the deterministic quantities $f_i$'s, instead of the random variables $\widehat{f}_i$'s.
We bound each of these RHS in the next two lemmas, for $m\geq n$ and $m\leq n$, respectively.
\begin{lemma}
For any $m\geq n$ and distributions $p$ and $q$ over $[n]$ the following holds:
(1)~
$
\sum_{i=1}^n \frac{\Var(Z_i)}{{f}_i^2} \leq \frac{10m^2}{n},
$
(2)~
$
\sum_{i=1}^n \frac{m^2(p_i-q_i)^2}{{f}_i} \leq \frac{m^{3/2}\norm{p-q}_1}{n^{1/2}},
$ 
and
(3)~
$
\frac{m^2\norm{p-q}_1^2}{\sum_{i=1}^n{f}_i}\geq  \min\Big(\frac{m^{3/2}{\norm{p-q}_1}}{2n^{1/2}}, \frac{m^2{\norm{p-q}_1}^2}{6n}\Big).
$
\end{lemma}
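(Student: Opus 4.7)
The plan is to prove each of the three bounds directly from the definition of $f_i = \max\{\sqrt{mn}|p_i-q_i|, n(p_i+q_i), 1\}$ for $m \geq n$, by exploiting, in each case, the ``right'' term of the max in the denominator to cancel the corresponding growing factor in the numerator. Since the three statements are essentially algebraic consequences of the defining formula for $f_i$ and the expressions \eqref{eq:expzi}--\eqref{eq:varzi}, no probabilistic reasoning should be needed; only the triangle inequality, $\sum_i (p_i+q_i) = 2$, and elementary manipulations. There is no serious obstacle, just bookkeeping.

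For (1), I would split $\Var(Z_i) = 4m^3(p_i-q_i)^2(p_i+q_i) + 2m^2(p_i+q_i)^2$ into its two summands and bound each term by choosing an appropriate lower bound for $f_i$. For the $4m^3$ summand, use $f_i \geq \sqrt{mn}|p_i-q_i|$, which cancels the $(p_i-q_i)^2$ factor exactly and gives a contribution proportional to $\tfrac{m^2}{n}(p_i+q_i)$; summing over $i$ yields $\tfrac{8m^2}{n}$ using $\sum_i(p_i+q_i)=2$. For the $2m^2$ summand, use $f_i \geq n(p_i+q_i)$, which cancels $(p_i+q_i)^2$ and leaves $\tfrac{2m^2}{n^2}$ per coordinate; summing over $n$ coordinates gives $\tfrac{2m^2}{n}$. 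Adding the two contributions yields the claimed $\tfrac{10m^2}{n}$.

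For (2), I would use the single bound $f_i \geq \sqrt{mn}|p_i-q_i|$ termwise (or note that the fraction is $0$ if $p_i = q_i$): this immediately gives $\tfrac{m^2(p_i-q_i)^2}{f_i} \leq \tfrac{m^{3/2}|p_i-q_i|}{\sqrt{n}}$, and summing over $i$ produces $\tfrac{m^{3/2}\norm{p-q}_1}{\sqrt{n}}$ as desired.

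For (3), the natural move is to upper bound $\sum_i f_i$ by $\sum_i\bigl(\sqrt{mn}|p_i-q_i|+n(p_i+q_i)+1\bigr) = \sqrt{mn}\norm{p-q}_1 + 3n$, and then split into two cases depending on which of the two summands dominates. If $\sqrt{mn}\norm{p-q}_1 \geq 3n$, the denominator is at most $2\sqrt{mn}\norm{p-q}_1$, giving the ratio $\geq \tfrac{m^{3/2}\norm{p-q}_1}{2\sqrt{n}}$; otherwise the denominator is at most $6n$, giving the ratio $\geq \tfrac{m^2\norm{p-q}_1^2}{6n}$. In either case the quantity is at least the minimum of the two bounds, which is exactly what is claimed. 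No step should present any real difficulty beyond keeping track of constants.
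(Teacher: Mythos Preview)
Your proposal is correct and matches the paper's proof essentially step for step: the paper bounds each of the three quantities exactly as you outline, using the same choices of which term in the $\max$ to exploit for (1) and (2), and the same ``$\max\leq$ sum'' bound followed by a two-case comparison for (3).
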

\begin{proof}
First, we upper bound $\sum_{i=1}^n \frac{\Var(Z_i)}{ f_i^2}$: from~\eqref{eq:varzi}, we get
\begin{align}
\sum_{i=1}^n \frac{\Var(Z_i)}{ f_i^2}&=\sum_{i=1}^n \frac{4m^3(p_i-q_i)^2(p_i+q_i)+2m^2(p_i+q_i)^2}{f_i^2} \nonumber\\
&= \sum_{i=1}^n \frac{4m^3(p_i-q_i)^2(p_i+q_i)+2m^2(p_i+q_i)^2}{\Big(\max\{ \sqrt{mn}\cdot |p_i-q_i|, n\cdot  (p_i+q_i), 1 \}\Big)^2}\nonumber\\
&\leq \sum_{i=1}^n \frac{4m^3(p_i+q_i)}{mn}+\sum_{i=1}^n\frac{2m^2}{n^2}= \frac{8m^2}{n}+\frac{2m^2}{n} = \frac{10m^2}{n}.\nonumber
\end{align}
Next, we prove the second inequality:
\begin{align*}
\sum_{i=1}^n \frac{m^2(p_i-q_i)^2}{f_i} &= \sum_{i=1}^n \frac{m^2|p_i-q_i|^2}{\max\{ \sqrt{mn}\cdot |p_i-q_i|, n\cdot  (p_i+q_i), 1 \}} \\    
&\leq  \sum_{i=1}^n \frac{m^{3/2}|p_i-q_i|}{n^{1/2}} = \frac{m^{3/2}\norm{p-q}_1}{n^{1/2}}.
\end{align*}
Finally, we prove the last inequality:
\begin{align*}
\frac{m^2\norm{p-q}_1^2}{\sum_{i=1}^n{f}_i}&=\frac{m^2\norm{p-q}_1^2}{\sum_{i=1}^n \max\{ \sqrt{mn}\cdot |p_i-q_i|, n\cdot  (p_i+q_i),1 \}}\\
&\geq \frac{m^2\norm{p-q}_1^2}{\sum_{i=1}^n \mleft( \sqrt{mn}\cdot |p_i-q_i| + n\cdot  (p_i+q_i) + 1\mright)} \\
&= \frac{m^2\norm{p-q}_1^2}{ \sqrt{mn}\cdot \norm{p-q}_1 + 2n + n} \\
&\geq  \min\Big(\frac{m^{3/2}{\norm{p-q}_1}}{2n^{1/2}}, \frac{m^2{\norm{p-q}_1}^2}{6n}\Big).\qedhere
\end{align*}
\end{proof}

\begin{lemma}
For any $m\leq n$ and distributions $p$ and $q$ over $[n]$ the following holds:
$
\sum_{i=1}^n \frac{\Var(Z_i)}{{f}_i^2} \leq 24m
$,
(2)~
$
\sum_{i=1}^n \frac{m^2(p_i-q_i)^2}{{f}_i} \leq m\norm{p-q}_1
$, 
and (3)~
$
\frac{m^2\norm{p-q}_1^2}{\sum_{i=1}^n{f}_i}\geq  \frac{m^2{\norm{p-q}_1^2}}{3n}.
$
\end{lemma}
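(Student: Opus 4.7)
The plan is to imitate the structure of the preceding lemma (for $m\geq n$), carrying out a direct term-by-term bound using the expressions $\E[Z_i]=m^2(p_i-q_i)^2$ and $\Var(Z_i)=4m^3(p_i-q_i)^2(p_i+q_i)+2m^2(p_i+q_i)^2$ together with the explicit form $f_i=\max\{m(p_i+q_i),1\}$ in this regime. Throughout, the main technical move is the case split on whether $m(p_i+q_i)\geq 1$ or $<1$, which lets us replace the max by either $m(p_i+q_i)$ or $1$ and conclude via $m(p_i+q_i)<1$ in the latter case.

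For (1), I would split $\Var(Z_i)/f_i^2$ into the two summands. Setting $x_i\eqdef m(p_i+q_i)$ and using $(p_i-q_i)^2\leq (p_i+q_i)^2$, the first summand becomes at most $4x_i^3/(\max\{x_i,1\})^2$, which equals $4x_i$ when $x_i\geq 1$ and is bounded by $4x_i$ when $x_i<1$ (since $x_i^2<1$); summing yields $\leq 4\sum_i x_i=8m$. For the second summand $2m^2(p_i+q_i)^2/(\max\{x_i,1\})^2$, I factor the denominator as $\max\{x_i,1\}\cdot\max\{x_i,1\}\geq \max\{x_i,1\}$ and check term by term that $m(p_i+q_i)^2/\max\{x_i,1\}\leq p_i+q_i$ in both cases, producing a bound of $2m(p_i+q_i)$ per term and $\leq 4m$ total. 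Adding gives $12m\leq 24m$.

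For (2), I would use $(p_i-q_i)^2\leq (p_i+q_i)|p_i-q_i|$ (valid since $|p_i-q_i|\leq p_i+q_i$), which turns the summand into $m^2(p_i+q_i)|p_i-q_i|/\max\{m(p_i+q_i),1\}$. The same case split shows this is at most $m|p_i-q_i|$ in both regimes, and summing gives $m\|p-q\|_1$. For (3), it suffices to bound $\sum_i f_i$ from above: $f_i\leq m(p_i+q_i)+1$, so $\sum_i f_i\leq 2m+n\leq 3n$ using the hypothesis $m\leq n$ and the fact that $p,q$ are probability distributions; the claimed inequality follows immediately.

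I do not expect any real obstacle here — all three bounds reduce to bookkeeping after the same case split on $m(p_i+q_i)$ versus $1$. The only place that requires a tiny bit of care is part (1), where the naive bound $2m^2(p_i+q_i)^2/(\max\{x_i,1\})^2\leq 2$ per term would sum to $2n$ and fail when $m\ll n$; the fix is to keep one copy of $\max\{x_i,1\}$ in the denominator rather than cancelling both, so that the summand is controlled by $p_i+q_i$ and the $\sum_i(p_i+q_i)=2$ normalization kicks in.
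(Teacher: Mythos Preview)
Your proof is correct. Parts (2) and (3) are essentially identical to the paper's argument: the same bound $|p_i-q_i|\leq p_i+q_i$ for (2) and the same $\max\{a,b\}\leq a+b$ plus $m\leq n$ for (3).

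For part (1), your route is slightly different and in fact simpler. You bound $(p_i-q_i)^2\leq(p_i+q_i)^2$ directly, reducing everything to powers of $x_i=m(p_i+q_i)$, and arrive at $12m$. The paper instead applies $(p_i+q_i)^2\leq 2(p_i-q_i)^2+4q_i^2$ to the second summand, splitting the variance bound into a $\|p-q\|_1$ piece and a $\sum_i q_i$ piece, which yields the stated $24m$. The paper's decomposition seems chosen with an eye toward the ``Additionally'' clause of the same lemma (the bound of the conditional variance in terms of $\E[Z\mid\cdot]$ and $m^2\|q\|_2^2$), where isolating the $q_i^2$ term is exactly what is needed. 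Your argument does not set that up, but it is not asked for here, and for the bare $24m$ bound your approach is the more direct one.
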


\begin{proof}
As before, we first upper bound $\sum_{i=1}^n \frac{\Var(Z_i)}{ f_i^2}$:
\begin{align*}
\sum_{i=1}^n \frac{\Var(Z_i)}{ f_i^2}&=\sum_{i=1}^n \frac{4m^3(p_i-q_i)^2(p_i+q_i)+2m^2(p_i+q_i)^2}{f_i^2} \\
&= \sum_{i=1}^n \frac{4m^3(p_i-q_i)^2(p_i+q_i)+2m^2(p_i+q_i)^2}{\Big(\max\{ m\cdot  (p_i+q_i), 1 \}\Big)^2}\\
&\leq \sum_{i=1}^n \frac{4m^3(p_i-q_i)^2(p_i+q_i)+4m^2(p_i-q_i)^2+8m^2q_i^2}{ \max\{ m^2\cdot  (p_i+q_i)^2, 1 \}} \tag{as $(a+b)^2 \leq 2(a-b)^2+4b^2$}\\
&\leq \sum_{i=1}^n \frac{4m^3(p_i-q_i)^2(p_i+q_i)}{m^2\cdot(p_i+q_i)^2}+\sum_{i=1}^n \frac{4m^2(p_i-q_i)^2}{m\cdot(p_i+q_i)}+\sum_{i=1}^n\frac{8m^2 q_i^2}{\max\{m^2(p_i+q_i)^2,1\}}\\
&\leq 4m\sum_{i=1}^n |p_i-q_i|+4m\sum_{i=1}^n |p_i-q_i|+\sum_{i=1}^n\frac{8m^2 q_i^2}{\max\{m(p_i+q_i),1\}} \tag{$\max\{x^2,1\} \geq \max\{x,1\}$}\\
&\leq 8m\norm{p-q}_1+\sum_{i=1}^n8mq_i\\
&\leq  16m +8m=24m\,.
\end{align*}

Next, we prove the second inequality:
\begin{align*}
\sum_{i=1}^n \frac{m^2(p_i-q_i)^2}{f_i} &= \sum_{i=1}^n \frac{m^2|p_i-q_i|^2}{\max\{ m\cdot  (p_i+q_i), 1 \}}  
\leq  \sum_{i=1}^n m|p_i-q_i| = m\norm{p-q}_1.
\end{align*}
Finally, we prove the last inequality:
\begin{align*}
\frac{m^2\norm{p-q}_1^2}{\sum_{i=1}^n{f}_i} &= \frac{m^2\norm{p-q}_1^2}{\sum_{i=1}^n \max\{ m\cdot  (p_i+q_i),1 \}}
\geq \frac{m^2\norm{p-q}_1^2}{\sum_{i=1}^n \mleft( m\cdot  (p_i+q_i) + 1\mright)} 
= \frac{m^2\norm{p-q}_1^2}{ 2m + n} 
\geq  \frac{m^2{\norm{p-q}_1^2}}{3n}.
\end{align*}
\end{proof}

\noindent It only remains to establish the last part of  Lemma~\ref{lem:mainmgn2}, which we do next.
\begin{align*}
 \Var\mleft[ Z\;\middle|\;\widehat{f}_i \text{ for }i\in[n] \mright] 
&=\sum_{i=1}^n \frac{\Var(Z_i)}{\widehat{f}_i^2}
= \sum_{i=1}^n \frac{4m^3(p_i-q_i)^2(p_i+q_i)+2m^2(p_i+q_i)^2}{\widehat{f}_i^2}\\ 
&\overset{\rm(a)}\le 4m^3 \mleft(\sum_{i=1}^n \frac{(p_i-q_i)^4}{\widehat{f}^2_i} \mright)^{1/2}\mleft(\sum_{i=1}^n \frac{(p_i+q_i)^2}{\widehat{f}^2_i} \mright)^{1/2}+\sum_{i=1}^n\frac{ 2m^2(p_i+q_i)^2}{\widehat{f}_i^2}\\ 
&\overset{\rm(b)}\le 4m^3 \mleft(\sum_{i=1}^n \frac{(p_i-q_i)^2}{\widehat{f}_i} \mright)\mleft(\sum_{i=1}^n \frac{(p_i+q_i)^2}{\widehat{f}^2_i} \mright)^{1/2}+\sum_{i=1}^n\frac{ 2m^2(p_i+q_i)^2}{\widehat{f}_i^2}\\ 
&= 4 \mleft(m^2\sum_{i=1}^n \frac{(p_i-q_i)^2}{\widehat{f}_i} \mright)\mleft(m^2\sum_{i=1}^n \frac{(p_i+q_i)^2}{\widehat{f}^2_i} \mright)^{1/2}+\sum_{i=1}^n\frac{ 2m^2(p_i+q_i)^2}{\widehat{f}_i^2}
\end{align*}
where step (a) is the Cauchy--Schwarz inequality, and (b) is monotonicity of $\ell_p$ norms: for any vector $u$, $\norm{u}_2\le \norm{u}_1$. We can then continue as follows, making the expectation appear:
\begin{align*}
 \Var\mleft[ Z\;\middle|\;\widehat{f}_i \text{ for }i\in[n] \mright] 
&= 4 \mleft( \E\mleft[ Z\;\middle|\;\widehat{f}_i \text{ for }i\in[n] \mright] \mright)\mleft(\sum_{i=1}^n \frac{m^2(p_i+q_i)^2}{\widehat{f}^2_i} \mright)^{1/2}+\sum_{i=1}^n\frac{ 2m^2(p_i+q_i)^2}{\widehat{f}_i^2}\\
&\overset{\rm(c)}\le \frac{1}{40}\mleft( \E\mleft[ Z\;\middle|\;\widehat{f}_i \text{ for }i\in[n] \mright] \mright)^2+(160+2)\sum_{i=1}^n\frac{ m^2(p_i+q_i)^2}{\widehat{f}_i^2}\\
&\overset{\rm(d)}\le \frac{1}{40}\mleft( \E\mleft[ Z\;\middle|\;\widehat{f}_i \text{ for }i\in[n] \mright] \mright)^2+162\sum_{i=1}^n \frac{2m^2(p_i-q_i)^2}{\widehat{f}^2_i}+162\sum_{i=1}^n \frac{4m^2q_i^2}{\widehat{f}^2_i}\\
&\overset{\rm(e)}\le \frac{1}{40}\mleft( \E\mleft[ Z\;\middle|\;\widehat{f}_i \text{ for }i\in[n] \mright] \mright)^2+324\sum_{i=1}^n \frac{m^2(p_i-q_i)^2}{\widehat{f}_i}+648\sum_{i=1}^n {m^2q_i^2}\\
&= \frac{1}{40}\mleft( \E\mleft[ Z\;\middle|\;\widehat{f}_i \text{ for }i\in[n] \mright] \mright)^2+324\E\mleft[ Z\;\middle|\;\widehat{f}_i \text{ for }i\in[n] \mright]+648m^2\norm{q}_2^2, 
\end{align*}
where step (c) uses $2ab\le a^2+b^2$, (d) uses $(a+b)^2\le 2(a-b)^2+4b^2$, and finally (e) uses $\widehat{f_i}\ge 1$. \qedsymbol

\section{Lower Bounds for Tolerant Testing}
    \label{sec:lb}
In this section, we derive our lower bounds on the ``price of tolerance,'' i.e., on the increase in the sample complexity as a function of the parameters $\ve_1,\ve_2$.
The main technical result is a lower bound for tolerant uniformity testing, from which the results for identity and equivalence will follow. 
In particular, we show:
\begin{theorem}[The price of tolerance for uniformity testing]\label{thm:lb-main}
For any $n$ and $\ve_1<\ve_2 <c$, for some universal constant $c>0$, any tester which for any unknown distribution $p$ over $[n]$ distinguishes between $\|p-\unif_n\|_1\le \ve_1$ and $\|p-\unif_n\|_1\ge \ve_2$ with probability at least $4/5$ must use $\Omega\bigg(\frac{n}{\log n}\Big(\frac{\ve_1}{\ve_2^2}\Big)+\frac{n}{\log n}\Big(\frac{\ve_1}{\ve_2^2}\Big)^2\bigg)$ samples from $p$.
\end{theorem}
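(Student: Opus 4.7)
\begin{proofsketch}
The plan is to apply the generalized two-point method. I would construct two priors $\pi^{(1)}, \pi^{(2)}$ over distributions on $[n]$ under which a random $p$ is $\ve_1$-close to (respectively $\ve_2$-far from) $\unif_n$ with high probability, and then show that whenever $m$ is below the claimed threshold, the two resulting mixtures over $m$-sample transcripts have $o(1)$ total variation distance. A standard Le~Cam two-point argument then precludes any tester from succeeding with the required constant advantage.

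To enable a clean analysis, I would take $p_i = (1+X_i)/n$ with $X_1,\dots,X_n$ i.i.d.\ copies of a bounded random variable $X$ supported on $[-1, B]$ with $B=O(1)$ and $\E[X] = 0$. After Poissonizing the number of samples, the sample counts become independent, each marginally distributed as a Poisson mixture $\poi((m/n)(1+X))$. A standard Taylor-series/$\chi^2$ computation shows that if the first $L$ moments of $X^{(1)}, X^{(2)}$ agree, then the per-coordinate $\chi^2$ is dominated by the tail $\sum_{k>L} (m/n)^{2k} (\E[(X^{(1)})^k]-\E[(X^{(2)})^k])^2 / k!$, which, together with tensorization across the $n$ coordinates, yields vanishing $\dtv$ for $m$ in the claimed regime once $L \asymp \log n$. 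Meanwhile, concentration of $\sum_i |X_i|/n$ around $\E|X|$ reduces the problem to exhibiting a pair $X^{(1)}, X^{(2)}$ of bounded random variables whose first $L$ moments match and which satisfy $\E|X^{(1)}| \lesssim \ve_1$ and $\E|X^{(2)}| \gtrsim \ve_2$.

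Existence of such a pair is an infinite-dimensional LP feasibility problem, which I would analyze through LP duality in the spirit of~\cite{WuY16}. In the dual, one looks for a degree-$L$ polynomial $Q$, together with affine multipliers, such that $Q(x)$ is trapped inside an asymmetric, piecewise-linear ``wedge'' on the support of $X$: the wedge's opening is controlled by the inhomogeneous ratio $\rho \eqdef \ve_1/\ve_2^2$ (reflecting the mismatch in the two target values of $\E|X|$), while the separation between its two edges is controlled by $1/L$. By duality, the moment-matching pair exists exactly when \emph{no} admissible $Q$ of degree $\le L$ fits inside the wedge, so the main task becomes to demonstrate polynomial non-existence below the prescribed degree threshold.

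The principal technical obstacle, and where the asymmetric setting departs sharply from the classical symmetric regime (whose dual reduces to best uniform approximation of $|x|$ and is resolved by Bernstein-type bounds), is establishing this non-existence result. I would approach it from first principles using classical tools from approximation theory: place Chebyshev-type test points on an interval of scale set by $\ve_2$, use the Markov brothers inequality bounding $\|Q'\|_\infty \lesssim L^2\|Q\|_\infty$ to control $Q$'s local oscillation, and invoke Remez-type estimates to propagate control off the wedge. The combination of these bounds contradicts the wedge constraints unless $L = \Omega(\log n \cdot \max(\rho, \sqrt{\rho}))$; splitting the argument according to whether the linear edge or the quadratic edge of the wedge is binding yields the two separate terms $n\rho/\log n$ and $n\rho^2/\log n$ in the claimed lower bound. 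Substituting the resulting $L = \Theta(\log n)$ back into the $\chi^2$ computation completes the argument.
\end{proofsketch}
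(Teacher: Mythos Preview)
Your high-level framework—product priors, moment matching, and LP duality reducing to a polynomial non-existence (``wedge'') problem—is the same as the paper's, and your wedge picture is essentially~\eqref{eq:finaldual}. However, a key idea is missing. By taking $X$ supported on $[-1,B]$ with $B=O(1)$, the interval on which the dual polynomial must live is independent of $m$; the only place $m$ then enters is your $\chi^2$/TV bound, which simply caps you at $m=O(n\log n)$. Your dual analysis can then at best show ``the wedge is infeasible whenever $\rho=\Omega(\log n)$,'' yielding a \emph{single} threshold rather than the full tradeoff $m=\Omega\big(\frac{n}{\log n}\max(\rho,\rho^2)\big)$. In the paper the support of the Poisson rate is $[\kappa-M,\kappa+M]$ with $\kappa,M$ chosen as explicit functions of $m,n$, in two separate regimes: for $m<\frac14 n\log n$ one takes $\kappa=M=\log n$, so that the dual interval $[-B,A]$ has $A\approx n\log n/m$ and $B\approx 1$; for $m>4n\log n$ one takes $\kappa=m/n$ and $M=\sqrt{m\log n/n}$, so that $A=B\approx\sqrt{n\log n/m}$. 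It is precisely this $m$-dependence of $A,B$ that, when fed into the polynomial bound (Lemma~\ref{lem:polgenboun} and Theorem~\ref{th:optboundfinal}), produces the two sample-complexity terms after inverting for $m$.

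Relatedly, your assertion that the wedge constraints fail ``unless $L=\Omega(\log n\cdot\max(\rho,\sqrt\rho))$'' reflects this confusion: $L$ is pinned at $\Theta(\log n)$ by the TV requirement (Lemma~\ref{lem:mm} forces $L=\Omega(\log n)$ regardless of the other parameters), so the free parameters in the dual are the interval endpoints $A,B$, not $L$. The correct chain is: fix $L=\Theta(\log n)$; choose $\kappa,M$ (hence $A,B$) so that the TV bound holds for the target $m$; lower bound the LP value via Theorem~\ref{th:optboundfinal}; invert for $m$. Finally, the polynomial non-existence step (Lemma~\ref{lem:polgenboun}) uses Markov brothers \emph{and} Bernstein's inequality—the latter yields the $\sqrt{AB}/L$ bound responsible for the $\rho^2$ term—combined with a Taylor expansion exploiting $g(0)=g'(0)=0$; Remez-type estimates are not used.
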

\noindent
By combining the above lower bound with previously known lower bounds for non-tolerant uniformity/identity testing~\cite{Paninski08}, we obtain:
\begin{corollary}[Tolerant uniformity testing lower bound]
    \label{cor:lbU}
For any $n$ and $0\leq \ve_1<\ve_2 <c$, for some universal constant $c>0$, any tester which for any unknown distribution $p$ over $[n]$ distinguishes between $\|p-\unif_n\|_1\le \ve_1$ vs $\|p-\unif_n\|_1\ge \ve_2$ with probability $\ge 4/5$ needs at least 
\[
\Omega\bigg(\frac{n}{\log n}\Big(\frac{\ve_1}{\ve_2^2}\Big)+\frac{n}{\log n}\Big(\frac{\ve_1}{\ve_2^2}\Big)^2+\frac{\sqrt{n}}{\ve_2^2}\bigg)
\]samples from $p$.
\end{corollary}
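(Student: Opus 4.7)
The plan is to derive Corollary \ref{cor:lbU} by combining two lower bounds and taking their maximum (which, up to constants, is their sum): Theorem \ref{thm:lb-main}, which contributes the two tolerance-dependent $\frac{n}{\log n}$ terms, and the classical non-tolerant uniformity testing lower bound of Paninski \cite{Paninski08}, which contributes the remaining $\sqrt{n}/\ve_2^{2}$ term. The first step is simply to invoke Theorem \ref{thm:lb-main} verbatim to obtain
\[
\Omega\!\left(\frac{n}{\log n}\cdot\frac{\ve_1}{\ve_2^2} + \frac{n}{\log n}\!\left(\frac{\ve_1}{\ve_2^2}\right)^{\!2}\right)\!,
\]
which is valid for the parameter range $\ve_1 < \ve_2 < c$ assumed in the corollary.

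For the second step, I would argue that any algorithm $\mathcal{A}$ that solves tolerant uniformity testing with parameters $(\ve_1,\ve_2)$ in particular solves the non-tolerant version with parameter $\ve_2$: since $\|\unif_n-\unif_n\|_1 = 0 \le \ve_1$, such an $\mathcal{A}$ must accept $\unif_n$ with probability $\ge 4/5$, while by definition it rejects any $p$ with $\|p-\unif_n\|_1 \ge \ve_2$ with the same probability, matching the specification of non-tolerant uniformity testing. Paninski's $\Omega(\sqrt{n}/\ve_2^2)$ lower bound therefore transfers directly; the hypothesis $\ve_2 < c$ ensures we stay in the regime where that bound holds (bounded away from the trivial $\ve_2 = \Omega(1)$ boundary).

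Finally, since any quantity $S$ satisfying $S = \Omega(a)$ and $S = \Omega(b)$ automatically satisfies $S = \Omega(a+b)$ (via $2\max(a,b) \ge a+b$), these two bounds combine into the claimed three-term expression; in particular, when $\ve_1 = 0$ the first two terms vanish and the bound collapses to Paninski's, while in the opposite regime the tolerance terms dominate. I do not anticipate any real technical obstacle: the substantive work has been concentrated in Theorem \ref{thm:lb-main}, whose proof via moment matching and the dual polynomial approximation argument carries the difficulty, and the corollary is essentially a bookkeeping combination with a known bound.
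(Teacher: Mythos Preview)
Your proposal is correct and matches the paper's own approach exactly: the paper derives Corollary~\ref{cor:lbU} by combining Theorem~\ref{thm:lb-main} with the non-tolerant uniformity testing lower bound of Paninski~\cite{Paninski08}, precisely as you describe. Your reduction from tolerant to non-tolerant testing (via $\|\unif_n-\unif_n\|_1=0\le\ve_1$) and the $\max$-to-sum combination are both sound and standard.
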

\noindent
Similarly, by combining our lower bound with previously known lower bounds for non-tolerant equivalence testing~\cite{Valiant11,ChanDVV14}, we obtain:
\begin{corollary}[Tolerant equivalence testing lower bound]
    \label{cor:lbC}
For any $n$ and $0\leq \ve_1<\ve_2 <c$, for some universal constant $c>0$, any tester which for any unknown distributions $p$ and $q$, both over $[n]$,  distinguishes between $\|p-q\|_1\le \ve_1$ and $\|p-q\|_1\ge \ve_2$ with probability at least $4/5$ must use 
\[
\Omega\bigg(\frac{n}{\log n}\Big(\frac{\ve_1}{\ve_2^2}\Big)+\frac{n}{\log n}\Big(\frac{\ve_1}{\ve_2^2}\Big)^2+\frac{\sqrt{n}}{\ve_2^2}+\frac{ n^{2/3}}{\ve_2^{4/3}}\bigg)
\]
samples.
\end{corollary}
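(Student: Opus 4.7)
The plan is to derive Corollary~\ref{cor:lbC} by combining Theorem~\ref{thm:lb-main} with standard non-tolerant equivalence testing lower bounds via two elementary reductions; no new technical ingredients are required beyond those already established.

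First, I would observe that tolerant equivalence testing is at least as hard as tolerant uniformity testing. Given any equivalence tester using a total budget of $m$ samples drawn from $p$ and $q$, one can solve tolerant uniformity testing against $\unif_n$ with no more than $m$ samples from $p$: whichever samples the algorithm would have drawn from $q$, simulate internally by drawing i.i.d.\ from $\unif_n$ (which uses no external data, since $\unif_n$ is explicit). Plugging this reduction into Theorem~\ref{thm:lb-main} immediately delivers the lower bound $\Omega\bigl(\tfrac{n}{\log n}\cdot\tfrac{\ve_1}{\ve_2^2}+\tfrac{n}{\log n}(\tfrac{\ve_1}{\ve_2^2})^{2}\bigr)$ on the total number of samples required for tolerant equivalence testing.

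Second, I would observe that any $(\ve_1,\ve_2)$-tolerant equivalence tester with $\ve_1\ge 0$ in particular distinguishes the case $p = q$ from the case $\|p-q\|_1 \ge \ve_2$, since the Yes case $\|p-q\|_1 \le \ve_1$ always includes $\|p-q\|_1 = 0$. Hence, the known non-tolerant equivalence testing lower bounds of $\Omega(\sqrt{n}/\ve_2^2)$ and $\Omega(n^{2/3}/\ve_2^{4/3})$ due to~\cite{Valiant11,ChanDVV14} apply verbatim with threshold $\ve_2$ to any tolerant equivalence tester.

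Combining these three $\Omega(\cdot)$ bounds by taking their maximum — equivalently, their sum up to a constant factor, since all terms are non-negative — yields exactly the bound stated in the corollary. I do not anticipate any technical difficulty here: both reductions are entirely black-box and preserve the sample complexity exactly, and all of the substantive work has already been carried out in Theorem~\ref{thm:lb-main} of this paper and in the cited non-tolerant prior work.
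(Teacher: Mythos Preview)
Your proposal is correct and matches the paper's approach exactly: the paper likewise obtains Corollary~\ref{cor:lbC} simply by combining the tolerant uniformity lower bound of Theorem~\ref{thm:lb-main} (via the observation that equivalence testing is at least as hard as uniformity testing) with the known non-tolerant equivalence testing lower bounds of~\cite{Valiant11,ChanDVV14}.
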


\subsection{The moment matching technique}
The starting point for our proof of Theorem~\ref{thm:lb-main} is the moment matching technique first used in~\cite{WuY16}.
We briefly review this technique here.
The first step is to consider the Poissonized version of the problem.
Namely, given $\Theta (m)$ samples from a distribution $p = (p_1, \ldots, p_n)$, then with high probability, we can simulate a set of $\poi(m)$ samples from the same distribution.
Thus, without loss of generality, we may assume that we are given $\poi(m)$ samples from $p$, and our goal is to distinguish with high probability given these samples whether $\| p - \unif_n \|_1 \leq \eps_1$ or $\| p - \unif_n \|_1 \geq \eps_2$.
A classical fact is that the result of sampling $\poi(m)$ samples from $p$ is identical in distribution to a draw from $(X_1, \ldots, X_n)$, where now the $X_i \sim \poi(m p_i)$ are independent. 

The high-level idea of the moment matching technique to construct two priors $\calU, \calU'$ over distributions on $n$ elements so that with high probability two conditions hold. First, if $p \sim \calU$ and $p' \sim \calU'$, then with high probability, $\Norm{p - \unif_n}_1 \leq \ve_1$ and $\Norm{p' - \unif_n}_1 > \ve_2$.
Second, the result of (i)~sampling a distribution $p \sim \calU$ then (ii)~sampling $\poi(m)$ elements from $p$ is close in TV distance to applying the same process to $\calU'$.
Specifically, the priors we construct will be product distributions, that is, $\calU = \mathcal{P}^n$ and $\calU' = (\mathcal{P}')^n$ for some positive univariate distributions $\mathcal{P}, \mathcal{P}'$.
Then, ignoring some technical issues which we will address momentarily, the problem becomes: find distributions $\mathcal P, \mathcal P'$ supported on nonnegative values such that (1)~$n \E_\mathcal P \Abs{p_i - 1/n} \leq \ve_1$ and $n \E_{\mathcal P'} \Abs{p'_i - 1/n} > \ve_2$, and (2)~the following distance is small:
\[
    \operatorname{TV}\mleft(\E_{\mathcal{P}^n} \Paren{\poi (mp_1), \ldots, \poi(m p_n)}, \  \E_{(\mathcal{P}')^n} \Paren{\poi (mp_1'), \ldots, \poi(mp_n')}\mright) = o(1) \; .
\]
Note that, in view of the subaditivity of TV distance, this condition can be relaxed to the condition 
\begin{equation}\label{eq:poitv}
     \operatorname{TV}\mleft(\E_{\mathcal{P}} \poi(mp_i),\  \E_{\mathcal{P'}} \poi(mp_i')\mright) = o(1/n) \; .
\end{equation}

While this will make later calculations much simpler, this introduces a couple of minor complications here.
First, the vectors in the domain of $\calU, \calU'$ may not sum to 1, that is, $\calU$ and $\calU'$ may not actually be priors over \textit{bona fide} distributions.
However, if we additionally enforce that $\E_{V \sim \mathcal{P}} [V] = \E_{V' \sim \mathcal{P}'} [V'] = 1/n$, then under some mild conditions on the $\mathcal{P}, \mathcal{P}'$, by standard concentration arguments, the resulting vectors are very close to summing to 1 and thus form ``approximate'' distributions.
One can then show that by slightly changing the construction, we can create priors over distributions that satisfy the desired properties.
Second, the vectors $p, p'$ in the domain of $\calU, \calU'$ may not deterministically satisfy the properties that $\Norm{p - \unif_n}_1 \leq \ve_1$ and $\Norm{p' - \unif_n}_1 > \ve_2$.
However, again by standard concentration inequalities, with high probability these random variables will not exceed their expectation by too much, and thus will satisfy these same constraints with high probability, perhaps relaxed by constant factors.
We make this discussion more precise in the following theorem, whose (rather technical) proof is deferred to Section~\ref{sec:main-lb-conversion}:
\begin{theorem}
\label{thm:testinglb-to-moments}
Let $0\le \eps_1 < \eps_2 \le 1$, and let $n, m$ be positive integers and $m\ge c$, where $c>0$ is an absolute constant. 
Suppose there exist random variables $U, U'$ supported on the domain $[a, b]$ so that $b-a\le \frac{\ve_2^2}{1000}$, $\E [U] = \E [U'] = 1/n$, and
\begin{equation}
    \E \left[ \left|U - \frac{1}{n} \right| \right] \le \frac{\eps_1}{n} \; , \qquad \mbox{and} \qquad \E \left[ \left|U' - \frac{1}{n} \right| \right] \ge \frac{\eps_2}{n} \; .
\end{equation}
Moreover, assume 
\begin{equation}\label{eq:poitv-1}
     \operatorname{TV}\mleft(\, \E  \poi(m U),\, \E \poi(m U')\mright) \le \frac1{20n} \; .
\end{equation}
Then, any tester which for any unknown distribution $p$ distinguishes between $\|p-\unif_n\|_1\le 25\ve_1$ and $\|p-\unif_n\|_1\ge \ve_2/2$ with probability at least $4/5$ requires at least $m/2$ samples from $p$.
\end{theorem}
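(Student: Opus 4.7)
The plan is to execute the standard moment-matching-to-testing reduction, with careful bookkeeping to convert the ``pseudo-distribution'' priors into priors on genuine probability distributions. I would define the product priors $\calU_n \eqdef U^{\otimes n}$ and $\calU'_n \eqdef (U')^{\otimes n}$ on nonnegative $n$-tuples, whose marginals have mean $1/n$. Under Poissonization, the data observed by the tester is the vector of independent counts $(X_1,\dots,X_n)$ with $X_i \sim \poi(m U_i)$, so the tester's view under $\calU_n$ is the product measure $\bigotimes_{i=1}^n \E \poi(mU)$, and likewise under $\calU'_n$. Subadditivity of total variation under products, combined with hypothesis~\eqref{eq:poitv-1}, then yields $\operatorname{TV}\!\bigl(\bigotimes_{i=1}^n \E \poi(mU),\, \bigotimes_{i=1}^n \E \poi(mU')\bigr) \leq n \cdot \frac{1}{20n} = \frac{1}{20}$.

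Next, I would lift $\calU_n$ and $\calU'_n$ to priors on actual probability distributions. Markov's inequality gives $\sum_i |U_i - 1/n| \leq 25\ve_1$ with probability at least $24/25$ under $\calU_n$, while Chebyshev, leveraging $|U_i - 1/n| \leq b-a \leq \ve_2^2/1000$, gives $|\sum_i U_i - 1| \leq \ve_2/100$ with high probability; a symmetric Chebyshev argument gives $\sum_i |U'_i - 1/n| \geq \ve_2/2$ with high probability under $\calU'_n$. Conditioning on this good event $G$ and setting $p_i \eqdef U_i / \sum_j U_j$ yields a genuine distribution satisfying $\|p - \unif_n\|_1 \leq 25\ve_1$ (via the triangle inequality applied to $p - U$ and $U - \unif_n$), and symmetrically $\|p' - \unif_n\|_1 \geq \ve_2/2$.

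The final step is to couple the Poissonized count vector from the pseudo-distribution $U$ with $\poi(m)$ samples from the normalized $p = U/\sum_j U_j$: the two Poisson intensity vectors differ by a total of $m|\sum_j U_j - 1|$, so the added total variation is negligible on $G$. Combining this with the bound from the first paragraph and paying the small failure probabilities of the good events, the tester's view under the two true-distribution priors has total variation at most $1/4$, so Le Cam's two-point method rules out any $4/5$-accurate tester in the Poissonized setting. A standard de-Poissonization argument (using $\Pr[\poi(m) < m/2] = e^{-\Omega(m)}$ and the hypothesis $m \geq c$) then converts this into the claimed $m/2$-sample lower bound.

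The main technical obstacle will be paragraph two: making the concentration arguments go through uniformly across all parameter regimes, in particular the concentration of $\sum_i U_i$ around $1$ and of $\sum_i |U'_i - 1/n|$ around its expectation. This requires squeezing all the mileage out of the hypothesis $b-a \leq \ve_2^2/1000$, possibly by Bernstein-type inequalities (which additionally exploit the small second moment of $|U - 1/n|$) when Chebyshev alone is too weak, and then propagating the resulting $\ell_1$ perturbations through the normalization $p = U/\sum_j U_j$ without destroying the gap between the ``close'' and ``far'' cases.
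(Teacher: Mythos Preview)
Your first two paragraphs line up well with the paper's argument: product priors, TV subadditivity on the Poissonized counts, Markov for the ``close'' prior, Chebyshev using the variance bound $\Var(U)\le (b-a)\E[U]$ for both $\sum_i U_i$ and $\sum_i|U'_i-1/n|$, and normalization $p=U/\sum_j U_j$ with a triangle-inequality computation. That part is essentially what the paper does.

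The gap is in your third paragraph. You propose to couple the pseudo-distribution count vector $\bigl(\poi(mU_i)\bigr)_i$ with $\poi(m)$ samples from the normalized $p$ by bounding the total variation between the two Poisson products, arguing that the intensity vectors differ by $m\lvert\sum_j U_j-1\rvert$ and that this is ``negligible on $G$.'' It is not. On the good event you only get $\lvert\sum_j U_j-1\rvert\lesssim\ve_2/10$ from Chebyshev and the hypothesis $b-a\le\ve_2^2/1000$; but $m$ is the very lower-bound parameter you are trying to establish, and in the regimes of interest $m\cdot\ve_2$ (or even $m\cdot\ve_2^2$, if you use the sharper Pinsker/KL route) is large. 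So this TV coupling blows up.

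The paper sidesteps this entirely by using the Poisson splitting identity rather than a TV bound: conditioned on $(U_1,\dots,U_n)$, the vector $\bigl(\poi(mU_i)\bigr)_i$ is \emph{exactly} the histogram of $N\sim\poi\bigl(m\sum_j U_j\bigr)$ i.i.d.\ samples from the normalized distribution $p=U/\sum_j U_j$. There is no approximation here; the only thing to check is that $N\ge m/2$ with high probability, which follows from $\sum_j U_j\ge 9/10$ on the good event together with $m\ge c$. This is where the ``de-Poissonization'' step actually lives in the argument~--- as the bridge from pseudo-distribution counts to fixed-budget samples~--- not as a separate afterthought at the end. With that in place, an $m/2$-sample tester would distinguish the (conditioned) count laws with probability $\ge 0.95\cdot\tfrac45$, contradicting the TV bound of $\Pr[\bar E^{(1)}]+\tfrac1{20}+\Pr[\bar E^{(2)}]\le\tfrac{9}{20}$.
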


\noindent

Thus, for given $m$, $n$ and $\ve_1$ the problem reduces to finding the maximum value of $\ve_2$ for which we can construct a pair of random variables $U$ and $U'$ for which the assumptions of Theorem~\ref{thm:testinglb-to-moments} hold. 
The next key insight is that we can further reduce the condition in~\eqref{eq:poitv-1} to designing two random variables with matching moments:
\begin{lemma}[{\cite[Lemma~32]{JiaoHW18}; see also~\cite{WuY16}}]
\label{lem:mm}
    For any $\kappa\ge M\ge 0$, let $Y, Y'$ be two random variables over $[\kappa-M, \kappa+M]$ so that $\E Y^i = \E Y'^i$ for all $i = 1, \ldots, L$.
    Then, we have
    \[
      \operatorname{TV}\mleft(\E \poi(Y), \E \poi (Y')\mright) \leq 2\Paren{\frac{e M}{\sqrt{\kappa (L+1)}}}^{L+1} \; .
    \]
\end{lemma}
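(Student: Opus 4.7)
The plan is to reduce the TV bound to a sum of uniform polynomial approximation errors via the moment-matching hypothesis, and then exploit the special structure of the Poisson PMF (namely, the Charlier polynomial representation of its derivatives) to obtain the tight $\sqrt{\kappa(L+1)}$ denominator.

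Writing $f_k(\lambda) \eqdef e^{-\lambda}\lambda^k/k!$, I would first observe
\[
\operatorname{TV}(\E \poi(Y),\, \E \poi(Y')) \;=\; \tfrac12 \sum_{k\ge 0} \left| \E f_k(Y) - \E f_k(Y') \right|.
\]
Since the first $L$ moments of $Y$ and $Y'$ coincide, $\E p_k(Y) = \E p_k(Y')$ for every polynomial $p_k$ of degree at most $L$, so $f_k$ may be replaced by $f_k - p_k$ inside each expectation. Infimizing over $p_k$ gives
\[
\operatorname{TV}(\E\poi(Y),\E\poi(Y')) \;\le\; \sum_{k\ge 0} \inf_{\deg p_k \le L}\ \sup_{\lambda \in [\kappa-M,\kappa+M]} |f_k(\lambda) - p_k(\lambda)|,
\]
so the whole problem becomes one of bounding a sum of best uniform approximation errors of the Poisson PMFs $f_k$.

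For each $k$ I would take $p_k$ to be the Taylor polynomial of $f_k$ of degree $L$ centered at $\kappa$. The integral form of the remainder, together with an interchange of sum and integral, reduces the estimate to controlling $\sum_k |f_k^{(L+1)}(t)|$ uniformly for $t$ in the interval. The clean way to do this is via the elementary recursion $f_k'(\lambda) = f_{k-1}(\lambda) - f_k(\lambda)$, which iterated shows that $f_k^{(n)}/f_k = (-1)^n C_n(\,\cdot\,;\lambda)$, where $C_n$ is the degree-$n$ Charlier polynomial. These polynomials are orthogonal under the Poisson measure with $\sum_k f_k(\lambda)\, C_n(k;\lambda)^2 = n!/\lambda^n$, so Cauchy--Schwarz yields the sharp bound
\[
\sum_k |f_k^{(n)}(\lambda)| \;=\; \sum_k f_k(\lambda)\,|C_n(k;\lambda)| \;\le\; \sqrt{n!/\lambda^n}.
\]
Plugging this into the Taylor remainder and simplifying with Stirling produces the claimed $2\bigl(eM/\sqrt{\kappa(L+1)}\bigr)^{L+1}$ after a short calculation.

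The main obstacle is precisely the derivative estimate above: the naive bound $\sum_k |f_k^{(n)}(\lambda)| \le 2^n$ obtained from writing $f_k^{(n)}$ as a signed binomial sum of shifted PMFs $f_{k-j}$ (each summing to $1$ in $k$) only produces a denominator like $(L+1)^{L+1}$, losing the essential $\sqrt{\kappa}$ factor that makes the lemma useful in the $L = o(\sqrt{\kappa})$ regime needed for Theorem~\ref{thm:lb-main}. The Charlier orthogonality is what converts the trivial $2^n$ estimate into the much sharper $\sqrt{n!/\kappa^n}$ estimate, and this is the only nontrivial algebraic input in the proof. A minor secondary nuisance is that the integral remainder runs over $[\kappa-M,\kappa+M]$, on which $\sqrt{n!/t^n}$ is only barely integrable near $t=0$; this can be handled either by shifting the expansion point slightly off $\kappa$, or by invoking the hypothesis $\kappa \ge M$ directly and absorbing the resulting polynomial slack into the constants.
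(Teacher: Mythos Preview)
The paper does not prove this lemma; it is quoted verbatim from \cite[Lemma~32]{JiaoHW18} (with \cite{WuY16} as an antecedent) and used as a black box. Your proposal is essentially the standard proof from those references: reduce TV to a sum of best uniform polynomial approximations of the Poisson pmf's $f_k$, take Taylor polynomials at $\kappa$, and bound $\sum_k |f_k^{(L+1)}(t)|$ via the Charlier orthogonality and Cauchy--Schwarz to get $\sqrt{(L+1)!/t^{L+1}}$, then finish with Stirling. So there is nothing to contrast against the paper itself.

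One small correction to your final remark: the quantity $\sum_k |f_k^{(n)}(t)|$ is not ``barely integrable'' near $t=0$; it is bounded everywhere by $2^n$ (from the signed binomial representation you mention), and the Charlier bound $\sqrt{n!/t^n}$ simply ceases to be the sharper of the two once $t \lesssim n$. Since the lemma is only nontrivial when $eM/\sqrt{\kappa(L+1)} < 1$, in that regime the Charlier bound evaluated at $t=\kappa-M$ already suffices after a short calculation (and in the degenerate case $\kappa=M$ one falls back on the $2^{L+1}$ bound near the left endpoint). This is indeed a minor bookkeeping issue and does not affect the correctness of your outline.
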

\noindent
With this lemma in place, our goal can be restated as follows: maximize $n\cdot  \E \Abs{U'_i - 1/n}$ such that $n \cdot \E_{\mathcal P} \Abs{U - 1/n}\le \ve_1$ and the first $L$ moments of $U$ and $U'$ match, where the support of $U$ and $U'$ is over $[\frac{\kappa-M}{m},\frac{\kappa+M}{m}]$ for some $\kappa\ge M\ge 0$.
The value of this maximum is a function of the parameters $\kappa$, $M$ and $L$, whose values we choose later appropriately so that this function is maximized, while
\begin{equation}\label{eq:tvbound}
    2\Paren{\frac{eM}{\sqrt{\kappa(L+1)}}}^{L+1} \le \frac{1}{20n}
\end{equation}
holds, so that~\eqref{eq:poitv-1} is satisfied.

We formulate the problem of maximizing $n\cdot  \E \Abs{U' - 1/n}$ for any given choice of parameters as the following linear program over infinitely many variables, where we have used random variables $V$ and $V'$ to denote $n\cdot U$ and $n\cdot U'$, respectively, 
\begin{align*}
    \max\,  \E |V'-1|\mbox{ s.t. } & \E |V-1|\le \ve_1 \mbox{ and } \\
    & \E V =\E V' = 1, \mbox{ and }  \\
    & \E V^i = \E V'^i, i = 2, \ldots, L, \mbox{ and }\\
    & V, V' \in \Big[\frac{n(\kappa-M)}{m}, \frac{n(\kappa+M)}{m}\Big] \;. \numberthis \label{eq:lp-1}
\end{align*}
Let $\mathcal L(\ve_1,n,m,M,\kappa,L)$ denote the value of the optimal solution of the above optimization problem. 
Observe that we do not need to find the exact solution to the above linear program: instead, any reasonable lower bound on the solution of the above optimization problem suffices.
The next theorem gives one such lower bound. To state the theorem, we define
\begin{align}\label{eq:def2AB}
    A := \frac{n(M+\kappa)}{m}-1-\ve_1, \text{ and } B:= \frac{n(M-\kappa)}{m}+1-\ve_1. 
\end{align}

\begin{theorem}\label{th:optboundfinal}
For any $\kappa$, $M$, $n$, $m$, $L$, and $\ve_1$, if for $A,B$, defined in~\eqref{eq:def2AB}, $0< \ve_1\le \min\big\{\frac{B}{4},\frac{A}{4}\big\}$, then the value of optimal solution of~\eqref{eq:lp-1} is lower bounded by
\begin{align}
\mathcal L(\ve_1,n,m,M,\kappa,L) \geq \frac{1}{12}\max\Bigg\{\sqrt{\ve_1\cdot \frac{A+B}{32L^2}},
 \sqrt{\ve_1\cdot \frac{\sqrt{AB}}{16L}}\Bigg\}.
\end{align}
\end{theorem}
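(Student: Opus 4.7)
The plan is constructive: I will exhibit two explicit feasible solutions to the LP~\eqref{eq:lp-1}, each realizing one of the two terms inside the $\max$. Shifting to centered variables $W = V-1$ and $W' = V'-1$, both supported in $[-(B+\ve_1),\, A+\ve_1]$, the task becomes to produce probability measures on this interval with $\E W = \E W' = 0$, matching first $L$ moments, $\E|W| \le \ve_1$, and $\E|W'|$ as large as possible.

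The basic device is the classical signed-measure construction for moment matching: given $L+2$ distinct nodes $y_0<\dots<y_{L+1}$ in the support, the weights $\alpha_j = (-1)^j/|\omega'(y_j)|$, where $\omega(y)=\prod_k(y-y_k)$, define (up to scaling) the unique signed combination of point masses annihilating every polynomial of degree $\le L$. Splitting this signed measure into its positive and negative parts and normalizing each to unit mass gives two probability measures whose first $L$ moments agree. Choosing the nodes carefully lets me make the positive-sign distribution (playing the role of $W$) cluster near $0$ while the negative-sign distribution (playing the role of $W'$) spreads toward the edges of the interval, so that $\E|W|$ is small while $\E|W'|$ captures a meaningful fraction of the interval width.

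For the first bound $\sqrt{\ve_1(A+B)/(32L^2)}$, I plan to take the nodes to be an affine image of the degree-$L$ Chebyshev extrema of $[-1,1]$, mapped into a subinterval $[-\eta B,\, \eta A]\subseteq[-B,A]$ of fractional width $\eta\in(0,1]$. Standard estimates on $|\omega'(y_j)|^{-1}$ at Chebyshev-like nodes translate, after normalization, into bounds of the form $\E|W|\le C_1\,\eta^{2}(A+B)/L^{2}$ and $\E|W'|\ge C_2\,\eta(A+B)/L$. Choosing $\eta$ so that $\E|W|\approx\ve_1$ and eliminating $\eta$ produces the claimed rate. For the second bound $\sqrt{\ve_1\sqrt{AB}/(16L)}$, I will use a symmetric construction on $[-\sigma,\sigma]$ with $\sigma \le \min(\sqrt{AB}, A, B)$: the same Chebyshev-node machinery now gives estimates in which the asymmetric width $A+B$ is replaced by the geometric proxy $\sqrt{AB}$, and the analogous optimization in $\sigma$ yields the second rate.

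The main obstacle is the asymmetry of $[-B, A]$ around $0$: Chebyshev machinery is symmetric by default, so a careful shift and rescaling is needed, and once the signed measure is split into its positive and negative parts the centering constraint $\E W=0$ is no longer automatic. I plan to handle this either by perturbing one node slightly or by convex-combining the constructions with a Dirac at $0$, at the cost of a constant factor absorbed into the $1/12$ prefactor. The assumption $\ve_1\le\min(A,B)/4$ supplies exactly the slack needed to absorb this correction while keeping the (slightly enlarged) supports of $V=1+W$ and $V'=1+W'$ inside $[n(\kappa-M)/m,\,n(\kappa+M)/m]$, so feasibility follows immediately from the definitions~\eqref{eq:def2AB} of $A$ and $B$.
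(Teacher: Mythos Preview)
Your approach---exhibiting explicit primal witnesses via a signed-measure split at Chebyshev nodes---is genuinely different from the paper's. The paper never constructs feasible $V,V'$; instead it passes to the dual through the chain~\eqref{eq:oporig}--\eqref{eq:opnew}--\eqref{eq:dual}--\eqref{eq:finaldual} (Lemmas~\ref{lem:relsol} and~\ref{lemma:value:dual}), and then proves an \emph{impossibility} result: using Markov--Brothers and Bernstein derivative bounds (Lemma~\ref{lem:polgenboun}, Theorem~\ref{th:duallb}) it shows no degree-$L$ polynomial can be sandwiched between $(\ve_1/\ve_2)|x|-\ve_1$ and $|x|$ on $[-B,A]$ unless $\ve_2$ exceeds the stated quantity. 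So the paper lower-bounds the dual, whereas you try to exhibit a good primal point.

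The gap is in the central quantitative claim. You assert that the even/odd split of the Chebyshev signed measure on $[-\eta B,\eta A]$ produces, after normalization, $\E|W|\le C_1\,\eta^2(A+B)/L^2$ while $\E|W'|\ge C_2\,\eta(A+B)/L$. But both halves of such a split are spread across the whole subinterval (each is essentially a discretized arcsine law), so both satisfy $\E|\cdot|\asymp\eta(A+B)$; neither ``clusters near~$0$''. More decisively, your bounds would yield, after optimizing $\eta$ so that $\E|W|\approx\ve_1$, the conclusion $\E|W'|\gtrsim\sqrt{\ve_1(A+B)}$---a factor $L$ stronger than the theorem---and this is \emph{false}. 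Since the first $L$ moments of $W$ and $W'$ agree, for any $p\in\mathcal P_L$ one has $\E|W'|-\E|W|\le 2\sup_{x\in[-B,A]}\big||x|-p(x)\big|$, hence $\E|W'|\le \ve_1+O((A+B)/L)$; taking $\ve_1=(A+B)/4$ (permitted by hypothesis) this is below $\sqrt{\ve_1(A+B)}=(A+B)/2$ for all large $L$. Thus the ``standard estimates'' you invoke cannot hold. Getting the correct separation requires identifying the extremal measure pair for the one-sided, wedge-constrained approximation problem~\eqref{eq:finaldual}, which amounts to solving the very dual the paper analyzes; a bare Chebyshev construction does not do this, and the proposal as written does not supply the missing mechanism.
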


We prove the theorem later in Section~\ref{sec:optboundfinal}. First in Section~\ref{sec:lowerb} we use this theorem to prove the distribution testing lower bounds.

\subsection{Proof of Theorem~\ref{thm:lb-main}}\label{sec:lowerb}
Theorem~\ref{th:optboundfinal} implies that for any $\ve_2$ smaller than $\mathcal L(\ve_1,n,m,M,\kappa,L)$, there exist random variables $U= \frac{V}{n}$ and $U' =\frac{V'}{n}$ such that
\begin{align*}
    &\E |U'-1/n|\ge \ve_2/n  \mbox{ and }\\
    & \E |U-1/n|\le \ve_1/n \mbox{ and } \\
    & \E U =\E U' = 1/n, \mbox{ and }  \\
    & \E U^i = \E U'^i, i = 2, \ldots, L, \mbox{ and }\\
    & U, U' \in \Big[\frac{(\kappa-M)}{m}, \frac{(\kappa+M)}{m}\Big] \;.
\end{align*}
Next, we choose the values of parameters $L$, $\kappa$ and $M$ so that $\mathcal L(\ve_1,n,m,M,\kappa,L)$ is maximized while~\eqref{eq:tvbound} hold,
which by Lemma~\ref{lem:mm} will imply $\operatorname{TV}\mleft(\, \E  \poi(m U),\, \E \poi(m U')\mright) \le \frac1{20n}$. 
The choice of the parameters differs for different regimes of $m$. 

\begin{itemize}
    \item First we consider the regime $m < \frac{1}{4}n\log n$. Consider any such $m$ and $\ve_1\le 1/8$. Choose $\kappa = M =\log n$ and $L=4e^2\log n$. One can check that the desired bound on TV distance in~\eqref{eq:tvbound} is  satisfied for these choices of the parameters.
Further, we have $A = \frac{2n\log n}{m}-1-\ve_1\ge \frac{n\log n}{m} $, where we used $\frac{n\log n}m> 4 >1+\ve_1$ in the above parameter range; and $B = 1-\ve_1\ge 1/2$. Finally, $\ve_1< 1/8$ we have that $\ve_1\le \min\big\{\frac{B}{4},\frac{A}{4}\big\}$. 
Then, invoking Theorem~\ref{thm:testinglb-to-moments} we get that for any
\[\ve_2\le \mathcal L(\ve_1,n,m,M = \log n,\kappa = \log n,L = 4e^2\log n)
\] 
and $\ve_2^2\ge 1000\frac{M}{m} = 1000\frac{\log n}{m}$, one cannot distinguish between $25\ve_1$-close and $\ve_2/2$-far using $m/2$ samples.

Equivalently, by rescaling the parameters, for any $m< c_0 n\log n$,
\begin{align}\label{eq:somebound}
\ve_2\le \frac{1}{2}\mathcal L\mleft(\frac{\ve_1}{25},n,2m,M = \log n,\kappa = \log n,L = 4e^2\log n\mright),    
\end{align}
and $\frac{\ve_2^2}{2^2}\ge 1000\frac{\log n}{m}$, we can not distinguish between $\ve_1$-close and $\ve_2$-far using $m$ samples.

Next, we show that the above statement holds even without the constrain $\frac{\ve_2^2}{2^2}\ge 1000\frac{\log n}{m}$. We do so by showing that even when $\frac{\ve_2^2}{2^2}< 1000\frac{\log n}{m}$ or equivalently $m< 4000\frac{\log n}{\ve_2^2}$, we can not distinguish between $\ve_1$-close and $\ve_2$-far using $m$ samples, even for $\ve_1=0$.
From the known uniformity testing lower bound for non tolerant case, we know that there is an absolute constant $c_3>0$ such that $m \ge  c_3\frac{\sqrt{n}}{\ve_2^2}$ samples are needed to distinguish correctly between $p=\unif_n$ and $\norm{p-\unif_n}_1\ge \ve_2$  with probability $\ge 4/5$.
Since for $n$ larger than an absolute constant $c_4$, we have $c_3  \frac{\sqrt{n}}{\ve_2^2}>4000\frac{\log n}{\ve_2^2} > m$, hence $m$ samples are insufficient and the claim follows.

From Theorem~\ref{th:optboundfinal} we get:
\begin{align*}
\frac{1}{2}\mathcal L\mleft(\frac{\ve_1}{25},n,2m,M = \log n,\kappa = \log n,L = 4e^2\log n\mright) \ge \max\Bigg\{c_1\sqrt{\frac{\ve_1 n }{m\log n}},c_2
 \sqrt{ \frac{\ve_1 \sqrt{n}}{\sqrt{m\log n}}}\Bigg\} ,
\end{align*}
where $c_1$ and $c_2$ are some absolute positive constants.

From~\eqref{eq:somebound} and the above lower bound on $\mathcal{L}$ it follows that for any $n>c_4$, for any $m< (n\log n)/4$ and $\ve_1< 1/8$, such that
\[
m <  \max\mleft\{c_1^2\frac{n}{\log n}\Big(\frac{\ve_1}{\ve_2^2}\Big),c_2^3\frac{n}{\log n}\Big(\frac{\ve_1}{\ve_2^2}\Big)^2\mright\}
,
\]
then using $m$ samples from $p$ one can not distinguish correctly with probability $\ge 4/5$ between $\norm{p-\unif_n}_1\le \ve_1$ and $\norm{p-\unif_n}_1\ge \ve_2$.

Observe given $c_1$ and $c_2$, there exist an universal constant $c_5$ such that for any $\Big(\frac{\ve_1}{\ve_2^2}\Big)\le c_5\log n$, we have
\[
\max\mleft\{c_1^2\frac{n}{\log n}\Big(\frac{\ve_1}{\ve_2^2}\Big),c_2^3\frac{n}{\log n}\Big(\frac{\ve_1}{\ve_2^2}\Big)^2\mright\}\le \frac{n\log n}{4}.
\]
Therefore, for any $\ve_1\le \frac{1}{8}$, $n>c_4$ and $\Big(\frac{\ve_1}{\ve_2^2}\Big)\le c_5\log n$, then using $\max\mleft\{c_1^2\frac{n}{\log n}\Big(\frac{\ve_1}{\ve_2^2}\Big),c_2^3\frac{n}{\log n}\Big(\frac{\ve_1}{\ve_2^2}\Big)^2\mright\} = \Omega\mleft(\frac{n}{\log n}\Big(\frac{\ve_1}{\ve_2^2}\Big)+\frac{n}{\log n}\Big(\frac{\ve_1}{\ve_2^2}\Big)^2\mright)$ samples from $p$ one can not distinguish correctly with probability $\ge 4/5$ between $\norm{p-\unif_n}_1\le \ve_1$ and $\norm{p-\unif_n}_1\ge \ve_2$.

\item Next, we choose the parameters for the regime $ m> 4 {n\log n}$. 
Note that since the theorem statement makes no claim for the setting where $m \geq \frac{n}{\ve_2^2 \log n}$, we can restrict our attention to the case where $m \leq \frac{n}{\ve_2^2\log n}$.
Furthermore, because $\frac{n}{\ve_2^2\log n}\ll \frac{n\log n}{64\ve_1^2}$ (using the fact that $\ve_1 \leq \ve_2$), we only need to consider $m \leq \frac{n\log n}{64\ve_1^2}$. 
This condition on $m$ implies that $\ve_1\le \frac{1}{8}\sqrt{\frac{n \log n}{m}}$. Consider any such $m$. Choose $\kappa = \frac{m}{n}$ and $M = \sqrt{\frac{m\log n}{n}}$ and $L=4e^2\log n$. Observe that for this choice $ M< \frac{m}{n} = \kappa$, hence $\kappa-M> 0$.
Then $A = B = \frac{nM}{m}-\ve_1 = \sqrt{\frac{n \log n}{m}}-\ve_1$. 
Since $\ve_1\le \frac{1}{8}\sqrt{\frac{n \log n}{m}}$, observe that $A,B \ge   \frac{1}{2}\sqrt{\frac{n \log n}{m}}$ and $\ve_1\le \min\big\{\frac{B}{4},\frac{A}{4}\big\}$.
Then, invoking Theorem~\ref{thm:testinglb-to-moments} and rescaling the parameters as for the previous case, we get
\[
\ve_2\le \frac{1}{2}\mathcal L\mleft(\frac{\ve_1}{25},n,2m,M = \sqrt{\frac{m\log n}{n}},\kappa = \frac{m}{n},L = 4e^2\log n\mright),
\] 
and $\frac{\ve_2^2}{2^2}\ge 1000\frac{M}{m}=1000\sqrt{\frac{ \log n}{mn}}$, we can not distinguish between $\ve_1-$close vs $\ve_2-$far using $m$ samples.
From Theorem~\eqref{th:optboundfinal} we get:
\[
\mathcal L\mleft(\ve_1,n,m,M = \sqrt{\frac{m\log n}{n}},\kappa = \frac{m}{n},L =4e^2\log n\mright)  \ge c_6
 \sqrt{\ve_1\cdot \frac{\sqrt{n\log n}}{\sqrt{m}\log n}},
\]
where $c_6$ is some absolute constant.

Following the similar steps as before it can be shown that for any $\ve_1\le \frac{1}{8}$, $n>c_8$ and $\Big(\frac{\ve_1}{\ve_2^2}\Big)\ge c_7\log n$, then using ${c_6^4}\frac{n}{\log n}\Big(\frac{\ve_1}{\ve_2^2}\Big)^2$ samples from $p$ one can not distinguish correctly with probability $\ge 4/5$ between $\norm{p-\unif_n}_1\le \ve_1$ and $\norm{p-\unif_n}_1\ge \ve_2$. Using $\Big(\frac{\ve_1}{\ve_2^2}\Big)\ge c_7\log n$, we get ${c_6^4}\frac{n}{\log n}\Big(\frac{\ve_1}{\ve_2^2}\Big)^2 = \Omega\mleft(\frac{n}{\log n}\Big(\frac{\ve_1}{\ve_2^2}\Big)+\frac{n}{\log n}\Big(\frac{\ve_1}{\ve_2^2}\Big)^2\mright)$ bound on the sample complexity.

\end{itemize}

We have so far shown the target lower bound on the sample complexity for both regimes $\frac{\ve_1}{\ve_2^2} < c_5 \log n$ and $\frac{\ve_1}{\ve_2^2} > c_7 \log n$ for some absolute positive constants $c_5$ and $c_7$. To conclude for the intermediate cases, observe that the sample complexity is an increasing function of $\ve_1$ (more tolerance makes the problem harder) and a decreasing function of $\ve_2$. Thus, by monotonicity, the lower bound for $\frac{\ve_1}{\ve_2^2} = c_5 \log n$ still applies to $c_5 \log n \leq \frac{\ve_1}{\ve_2^2} \leq c_7 \log n$, by relaxing the problem to $\ve_1' = \frac{c_5\ve_1}{c_7}$. This only affects the resulting lower bound by a constant factor, and allows us to conclude with the desired
\[
\Omega\Big(\frac{n}{\log n}\Big(\frac{\ve_1}{\ve_2^2}\Big)^2+\frac{n}{\log n}\Big(\frac{\ve_1}{\ve_2^2}\Big)\Big)
\]
sample complexity lower bound for the full range of parameters.

\subsection{Proof of Theorem~\ref{th:optboundfinal}}\label{sec:optboundfinal}

We break the proof of Theorem~\ref{th:optboundfinal} into two parts.
First, we convert the primal form of the problem into a more convenient representation via a few helpful transformations, and then take the dual (Section~\ref{sec:primal}).
We then lower bound the value of the dual using tools from approximation theory (Section~\ref{sec:dual}). 

\subsubsection{Transforming the primal}
\label{sec:primal}
First, to simplify the optimization problem~\eqref{eq:lp-1}, notice that moment matching of all degree-$L$ or less moments is unaffected by translation.
So if one introduces the random variables $X = V - 1$ and $X' = V' - 1$, we see that these are supposed to be mean zero random variables over $[\frac{n(\kappa-M)}{m}-1, \frac{n(\kappa+M)}{m}-1]$ with matching $L$-th and below moments, and that distance to uniformity corresponds to $\E |X|$ and $\E |X'|$.  
\begin{align}
    \max\,  \E |X'|\mbox{ s.t. } & \E |X|\le \ve_1 \mbox{ and } \nonumber\\
    & \E X = \E X' = 0\nonumber\\
    & \E X^i = \E X'^i, i = 2, \ldots, L, \mbox{ and }\nonumber\\
    & X, X' \in \Big[\frac{n(\kappa-M)}{m}-1, \frac{n(\kappa+M)}{m}-1\Big] \;.\label{eq:oporig}
\end{align}
It will be useful to remove the constraint that $\E [X] = \E [X'] = 0$.
To do so, we propose the following optimization problem without this constraint.
\begin{align}
    \max\,  \E |Y'|\mbox{ s.t. } & \E |Y|\le \frac{\ve_1}{2} \mbox{ and } \nonumber\\
    & \E Y^i = \E Y'^i, i = 1, \ldots, L, \mbox{ and }\nonumber\\
    & Y, Y' \in [-B,A] \;,\label{eq:opnew}
\end{align}
{where $A = \frac{n(\kappa+M)}{m}-1-\ve_1$ and $B = -\Big(\frac{n(\kappa-M)}{m}-1+\ve_1\Big)$, first defined in Equation~\eqref{eq:def2AB}.}
We show the following claim.
\begin{lemma}\label{lem:relsol}
The value of the solution of~\eqref{eq:oporig} is at least half the value of the solution of~\eqref{eq:opnew}. 
\end{lemma}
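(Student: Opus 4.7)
The plan is to produce a feasibility-preserving reduction: given any pair $(Y, Y')$ feasible for~(\ref{eq:opnew}) with objective $\E|Y'|$ approaching the optimum $v$ of~(\ref{eq:opnew}), translate it into a pair $(X, X')$ feasible for~(\ref{eq:oporig}) with $\E|X'| \ge v/2$. The natural translation is by the common mean: set $X := Y - c$ and $X' := Y' - c$, with $c := \E[Y]$. The matching of first moments in~(\ref{eq:opnew}) guarantees $\E[Y'] = \E[Y] = c$, so both $X$ and $X'$ are mean zero; and since translation commutes with moment matching (via binomial expansion), the equalities $\E X^i = \E X'^i$ for $i = 2,\dots,L$ follow automatically from the corresponding equalities for $Y, Y'$.

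I would then verify the remaining constraints of~(\ref{eq:oporig}). Jensen's inequality gives $|c| \le \E|Y| \le \ve_1/2$, so $\E|X| \le \E|Y| + |c| \le \ve_1$. The support condition holds because the interval $[-B, A]$ appearing in~(\ref{eq:opnew}) is exactly the support interval $[-B-\ve_1, A+\ve_1]$ required by~(\ref{eq:oporig}) shrunk by $\ve_1$ on each side; shifting by $|c| \le \ve_1/2$ thus keeps $X$ and $X'$ safely inside the larger interval. For the objective, the reverse triangle inequality yields
\[
    \E|X'| \;=\; \E|Y' - c| \;\ge\; \E|Y'| - |c| \;\ge\; \E|Y'| - \ve_1/2.
\]

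This already establishes the claim whenever $v \ge \ve_1$: in that regime $\E|X'| \ge v - \ve_1/2 \ge v/2$. To handle the remaining case $v < \ve_1$, I would exhibit a trivial explicit solution to~(\ref{eq:oporig}) that already dominates $v/2$: take $X = X'$ uniform on $\{-\ve_1, +\ve_1\}$. This pair is mean zero, all moments of $X$ and $X'$ trivially agree, $\E|X| = \ve_1$, and its support lies in $[-B-\ve_1, A+\ve_1]$ by the hypothesis $\ve_1 \le \min(A,B)/4$ (which, in particular, ensures $A, B \ge 0$). Its objective is $\ve_1 > v > v/2$. Taking the better of the two constructions yields the lemma.

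The main obstacle is precisely the $\ve_1/2$ slack lost in the reverse triangle inequality when undoing the shift: it is what forces the factor-$1/2$ in the statement rather than an equality, and it necessitates the two-case split at $v = \ve_1$ rather than permitting a single unified argument. Once this split is in place, the rest is bookkeeping on feasibility and support containment.
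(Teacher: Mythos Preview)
Your proof is correct and follows essentially the same approach as the paper's: shift by the common mean $c=\E Y$, verify feasibility via the triangle inequality and the $\ve_1$-slack built into the support intervals, and split at the threshold $\E|Y'|=\ve_1$. The only difference is cosmetic: the paper handles the small case by asserting that ``the maximum in~\eqref{eq:oporig} is at least $\ve_1$'' without writing down a witness, whereas you explicitly exhibit $X=X'$ uniform on $\{-\ve_1,\ve_1\}$ (and borrow the standing hypothesis $\ve_1\le\min(A,B)/4$ from the surrounding theorem to justify the support containment, which is legitimate in context).
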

\begin{proof}
Let $Y$ and $Y'$ be the random variables that achieve the maximum in~\eqref{eq:opnew}. To prove the claim, first we show that random variables $X = Y-\E  Y$ and $X' = Y'-\E  Y$ satisfy the constrains in~\eqref{eq:oporig}.

First note that $\E |X| = \E |Y-\E  Y|\le 2 \E|Y| \le \ve_1$.
Next, using $\E Y = \E Y'$, we get $\E X = \E [ Y-\E  Y] = 0 $ and $\E X' =  \E[Y'-\E  Y] = 0$.
Since the moment matching of all degree-$L$ or less moments is unaffected by translation, all $L$ moments of $X$ and $X'$ will match.
Finally, $|\E[Y]|\le \E[|Y|]\le \frac{\ve_1}2$, and $Y, Y' \in [\frac{n(\kappa-M)}{m}-1+\ve_1, \frac{n(\kappa+M)}{m}-1-\ve_1]$ ensures $X, X' \in [\frac{n(\kappa-M)}{m}-1, \frac{n(\kappa+M)}{m}-1]$.

Now to complete the proof we show that the solution of the optimization problem~\eqref{eq:oporig} is at least $\ge \frac{ \E |Y'|}{2}$. 

\noindent If $\E |Y'| \le \ve_1$, then this is obviously true as the maximum in~\eqref{eq:oporig} is at least $\ve_1$.
If $\E |Y'| \ge \ve_1$ then $\E|X'| = \E |Y'-\E Y|\ge \E |Y'|-\E |Y| \ge \E |Y'|-\frac{\ve_1}2> \frac{\E |Y'|}{2} $, where we used $\E |Y'| \ge \ve_1$ in the last step.
\end{proof}

Mechanically taking the dual of~\eqref{eq:opnew}, we obtain that the dual is:
\begin{align*}
    \min \frac{\ve_1}2 \alpha + z_1 + z_2 \mbox{ s.t. } & z_1 + \sum_{i = 1}^L c_i p_i (x) \geq \Abs{x} \mbox{ for all $x \in [-B, A]$,}\\
    & \alpha \Abs{x} \geq \sum_{i = 1}^L c_i p_i (x) - z_2 \mbox{ for all $x \in [-B, A]$}, \mbox{ and } \\
    & \alpha \geq 0 \;. \numberthis \label{eq:dual}
\end{align*}
By weak duality, we know that the value of the optimal solution to~\eqref{eq:opnew} is upper bounded by the value of the optimal solution to~\eqref{eq:dual}.
However, since we seek to prove a lower bound on the value of the optimal solution to~\eqref{eq:opnew}, this is insufficient for our purposes.
However, we show that in this case, strong duality still holds:
\begin{lemma}\label{lemma:value:dual}
The value of the optimal solution to~\eqref{eq:opnew} is equal to the value of the optimal solution to~\eqref{eq:dual}.
\end{lemma}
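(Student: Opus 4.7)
The plan is to prove strong duality via Sion's minimax theorem applied to the natural Lagrangian of~\eqref{eq:opnew}. Weak duality ($p^*\le d^*$) is direct: for any primal-feasible $(Y,Y')$ and any dual-feasible $(\alpha,z_1,z_2,c_1,\ldots,c_L)$, integrating the dual constraint $z_1+\sum_i c_i x^i\ge |x|$ against the law of $Y'$ and the dual constraint $\alpha|x|+z_2\ge\sum_i c_i x^i$ against the law of $Y$, then using moment matching $\E Y^i=\E Y'^i$ and $\E |Y|\le \ve_1/2$, yields
\begin{equation*}
\E|Y'|\le z_1+\sum_{i=1}^L c_i\E Y'^i=z_1+\sum_{i=1}^L c_i\E Y^i\le z_1+z_2+\alpha\E|Y|\le z_1+z_2+\tfrac{\ve_1}{2}\alpha.
\end{equation*}

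For strong duality, I would form the Lagrangian
\begin{equation*}
\Lambda(\mu,\mu';\alpha,c):=\int|x|\,d\mu'(x)+\alpha\mleft(\tfrac{\ve_1}{2}-\int|x|\,d\mu(x)\mright)+\sum_{i=1}^L c_i\mleft(\int x^i d\mu'-\int x^i d\mu\mright),
\end{equation*}
where the primal variables $(\mu,\mu')$ range over pairs of Borel probability measures on $[-B,A]$ and the dual variables are $\alpha\ge 0$ and $c\in\mathbb{R}^L$. A short case analysis confirms that $p^*=\sup_{\mu,\mu'}\inf_{\alpha\ge 0,\,c}\Lambda$: for primal-feasible $(\mu,\mu')$ the inner infimum equals $\int|x|\,d\mu'$, whereas for any infeasible pair one drives $\Lambda\to-\infty$ by sending $\alpha$ (if the inequality is violated) or an appropriate $c_i$ (if an equality is violated) to infinity.

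The crux is applying Sion's minimax theorem to interchange $\inf$ and $\sup$. The set of Borel probability measures on the compact interval $[-B,A]$ is weak-$*$ compact by Banach--Alaoglu, so the primal domain is weak-$*$ compact and convex; the dual domain $[0,\infty)\times\mathbb{R}^L$ is convex. Because $|x|$ and each $x^i$ is bounded continuous on $[-B,A]$, the relevant integrating functionals are weak-$*$ continuous, making $\Lambda$ affine (hence concave) and upper semicontinuous in $(\mu,\mu')$, and affine (hence convex and continuous) in $(\alpha,c)$. Sion's hypotheses are thus satisfied, yielding $p^*=\inf_{\alpha\ge 0,c}\sup_{\mu,\mu'}\Lambda$. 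To identify the right-hand side with the value of~\eqref{eq:dual}, use that $\sup_{\mu}\int g\,d\mu=\max_{x\in[-B,A]}g(x)$ over probability measures on $[-B,A]$ (attained by a Dirac); evaluating the two inner suprema and introducing slack variables $z_1,z_2$ to linearize the resulting maxima (with the cosmetic substitution $c\to-c$, which leaves the infimum invariant) recovers~\eqref{eq:dual} exactly. The main point to check is the weak-$*$ compactness and continuity hypotheses of Sion, which follow from standard facts since $[-B,A]$ is compact and the integrands $|x|,x^i$ are bounded continuous; no novel technical difficulty arises.
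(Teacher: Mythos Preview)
Your proof is correct, and it takes a genuinely different route from the paper's own argument.

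The paper establishes strong duality by a discretization-and-limit argument: for each finite grid $\mathcal{S}_\delta\subset[-B,A]$ it restricts the support of $Y,Y'$ to $\mathcal{S}_\delta$, obtains a finite linear program for which classical LP strong duality gives $P_{\mathcal{S}_\delta}=D_{\mathcal{S}_\delta}$, observes that adding support constraints can only decrease the primal (so $P\ge P_{\mathcal{S}_\delta}=D_{\mathcal{S}_\delta}$), and finally sends $\delta\to 0$ and appeals to continuity of $|x|$ and $x^i$ to conclude $\sup_\delta D_{\mathcal{S}_\delta}=D$, hence $P\ge D$; weak duality gives the reverse inequality.

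Your approach bypasses the discretization entirely by working directly in the infinite-dimensional setting: you form the Lagrangian over pairs of probability measures on $[-B,A]$ and invoke Sion's minimax theorem, using weak-$*$ compactness of $\mathcal{P}([-B,A])$ and continuity of the integrands to verify the hypotheses. The subsequent identification of $\inf_{\alpha,c}\sup_{\mu,\mu'}\Lambda$ with the value of~\eqref{eq:dual} (via Dirac masses and the sign flip $c\to -c$) is clean and correct.

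What each approach buys: the paper's argument is more elementary in that it only needs finite-dimensional LP duality plus a routine limiting step, at the cost of having to justify that limit. Your argument is conceptually more direct and avoids any approximation, but imports a heavier black box (Sion). Either is perfectly adequate here; yours is arguably the more standard route a convex analyst would take, and it has the advantage of making transparent exactly which structural features (compact support, continuous integrands, bilinearity of the Lagrangian) drive strong duality.
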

\noindent
This follows from the classical theory of convex duality~\cite{Rockafellar74}, however, for completeness we also give a self-contained proof of this fact in Section~\ref{sec:strong-duality}.

We now make a couple of final simplifications before we lower bound the value of~\eqref{eq:dual}.
Let $\mathcal{P}_L$ denote the collection of all degree-$L$ polynomials. We reparametrize the above dual by letting $\ve_2 \eqdef \alpha{\ve_1}$, $p(x) \eqdef \frac{\ve_1}{\ve_2}\sum_{i = 1}^L c_i p_i (x)-z_2$ and $z \eqdef z_1+z_2$.
\begin{align*}
    \min \frac{\ve_2}2+z \mbox{ s.t. } & \text{ for some }p(x)\in\mathcal{P}_L,\\
    &p(x) \geq \frac{\ve_1}{\ve_2}\Abs{x}-\frac{\ve_1}{\ve_2}{z} \mbox{ for all $x \in [-B, A]$, and }\\
    &  \Abs{x} \geq p(x) \mbox{ for all $x \in [-B,A]$, and}\\
    &\ve_2\ge 0.
    \numberthis \label{eq:intermdual}
\end{align*}
Imposing the additional constraint that $z = \ve_2$, we get the following program:
\begin{align*}
    \min \ve_2 \mbox{ s.t. } & \text{ for some }p(x)\in\mathcal{P}_L,\\
    &p(x) \geq \frac{\ve_1}{\ve_2}\Abs{x}-{\ve_1} \mbox{ for all $x \in [-B, A]$, and }\\
    &  \Abs{x} \geq p(x) \mbox{ for all $x \in [-B, A]$, and}\\
    &\ve_2\ge 0.\numberthis \label{eq:finaldual}
\end{align*}
Suppose $\ve_2=a$ and $z=b$ achieve the optimal solution of~\eqref{eq:intermdual}, which is therefore $a/2+b$.
It is easy to verify that for $z<0$ the optimization problem~\eqref{eq:intermdual} is infeasible, hence $b\ge 0$. Further, observe that $z=\ve_2 = \max\{a,b\}$ is also a feasible solution of~\eqref{eq:intermdual}, and $z+\ve_2/2 = \max\{3a/2,3b/2\}$, which is at most 3 times the optimal solution of~\eqref{eq:intermdual}. Since~\eqref{eq:finaldual} scales it down by $2/3$, it follows that the solution of the new dual~\eqref{eq:finaldual} is at most twice the solution of the previous dual~\eqref{eq:intermdual}.

\begin{lemma}\label{lem:firatandlastoptimizationrelation}
The value of the solution of~\eqref{eq:lp-1} is at least $1/4$ times the value of the solution of~\eqref{eq:finaldual}.
\end{lemma}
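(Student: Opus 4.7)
The plan is simply to chain together the reductions this subsection has already established. Let $v_1, v_2, v_3, v_4, v_5, v_6$ denote the optimal values of the optimization problems \eqref{eq:lp-1}, \eqref{eq:oporig}, \eqref{eq:opnew}, \eqref{eq:dual}, \eqref{eq:intermdual}, and \eqref{eq:finaldual}, respectively. I will argue
\[
v_1 \;=\; v_2 \;\geq\; \tfrac{1}{2} v_3 \;=\; \tfrac{1}{2} v_4 \;=\; \tfrac{1}{2} v_5 \;\geq\; \tfrac{1}{4} v_6,
\]
which yields the claim.

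The first equality is the affine change of variables $X = V-1$, $X' = V'-1$: under the enforced constraints $\mathbb{E}X = \mathbb{E}X' = 0$, equalities of the $i$-th moments $\mathbb{E}X^i = \mathbb{E}X'^i$ are equivalent to $\mathbb{E}V^i = \mathbb{E}V'^i$ by binomial expansion of $(X+1)^i$, and the objective and absolute-mean constraint are invariant since $\mathbb{E}|V-1| = \mathbb{E}|X|$. The support interval shifts accordingly. The inequality $v_2 \geq \tfrac{1}{2} v_3$ is exactly the content of Lemma~\ref{lem:relsol}, and the equality $v_3 = v_4$ is the strong-duality statement of Lemma~\ref{lemma:value:dual}.

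The equality $v_4 = v_5$ follows from checking that the substitution $\ve_2 := \alpha \ve_1$, $p(x) := (\ve_1/\ve_2)\sum_{i=1}^L c_i p_i(x) - z_2$, $z := z_1 + z_2$ is a bijection between feasible points of \eqref{eq:dual} and \eqref{eq:intermdual} with matching objective values; in particular $\alpha \geq 0 \iff \ve_2 \geq 0$ uses the hypothesis $\ve_1 > 0$, and each constraint of \eqref{eq:dual} translates term-by-term into a constraint of \eqref{eq:intermdual}. The remaining inequality $v_5 \geq \tfrac{1}{2} v_6$ reproduces the argument sketched immediately before the lemma statement: if $(\ve_2, z) = (a, b)$ is an optimum of \eqref{eq:intermdual} (with $b \geq 0$, since otherwise infeasibility arises), then setting $z = \ve_2 = \max(a, b)$ yields a feasible point of \eqref{eq:finaldual} by monotonicity of its constraints in $\ve_2$, and its objective $\max(a, b)$ is at most $2(a/2 + b) = 2 v_5$, because $a/2 + b \geq \max(a, b)/2$ in either case $a \geq b$ or $a < b$.

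No step presents a serious obstacle. The only mildly tedious bookkeeping lies in verifying the bijection for $v_4 = v_5$ and in confirming monotonicity in $(z, \ve_2)$ of the constraints of \eqref{eq:intermdual}. The resulting factor $1/4$ decomposes cleanly as the product of the $1/2$ coming from Lemma~\ref{lem:relsol} and the $1/2$ coming from imposing $z = \ve_2$ in passing from \eqref{eq:intermdual} to \eqref{eq:finaldual}.
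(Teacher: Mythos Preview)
Your proof is correct and follows precisely the chain of reductions the paper itself uses: $v_1 = v_2 \geq \tfrac{1}{2} v_3 = \tfrac{1}{2} v_4 = \tfrac{1}{2} v_5 \geq \tfrac{1}{4} v_6$, invoking Lemma~\ref{lem:relsol}, Lemma~\ref{lemma:value:dual}, the reparametrization, and the $z=\ve_2$ restriction in exactly the same order. The paper's own proof of this lemma is just this chain stated in prose.
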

\begin{proof}
From the preceding discussion, the value of the solution of~\eqref{eq:lp-1} is equal to the value of the solution  of~\eqref{eq:oporig}. 
From Lemma~\ref{lem:relsol}, this value is at least is at least $1/2$ times the value of the solution of~\eqref{eq:opnew}.
From Lemma~\ref{lemma:value:dual} the values of the solutions of~\eqref{eq:opnew} and~\eqref{eq:dual} are equal. 
Furthermore, the value of the solution of~\eqref{eq:dual} is equal to the value of the solution of~\eqref{eq:intermdual}. Finally, the value of the solution of~\eqref{eq:intermdual} is at least $1/2$ times the value of the solution of~\eqref{eq:finaldual}. 
Hence, the  value of the solution of~\eqref{eq:lp-1} is at least $1/4$ times the value of the solution of~\eqref{eq:finaldual}.
\end{proof}

\subsubsection{Lower bounding the dual}
\label{sec:dual}

We now establish a lower bound on the solution of the dual (Equation~\eqref{eq:finaldual}).
We assume that parameters $A,B$ are such that $\ve_1\ll A, B$. First note that when $\ve_2\le\frac{\ve_1}2$, then the two conditions are contradictory and can not be met simultaneously. Therefore, we get the lower bound:
\begin{equation}\label{eq:trivduallb}
    \ve_2 \ge \frac{\ve_1}2.
\end{equation}

\noindent This implies that $\forall\, x \in [-B, A]$
\[
p(x)\ge 2|x|-\ve_1.
\]

\noindent Combining this with constraints in the dual imply that for all $x \in [-B, A]$
\[
|p(x)| \le 2|x|+{\ve_1}.
\]
From the above equation, for $x=0$, we get $|p_0|\le \ve_1$. Then $\forall\, x \in [-B, A]$,
\begin{align*}
|p(x)-p_0|&\le |p(x)|+|p_0|\nonumber\\
&\le 2|x|+\ve_1+\ve_1\nonumber\\
&= 2|x|+2\ve_1.
\end{align*}
Let $p(x)\in\mathcal{P}_L$ be any polynomial satisfying the constrains of the dual. Let $p_0=p(0)$, $p_1$ be the coefficients of $x$ in $p(x)$, and $\tilde p(x) = p(x)-p_0-p_1 x$.

\noindent Using the above equation,
\begin{align}
    |\tilde p(x)| \le  |p(x)-p_0| + |p_1x|\le (2+|p_1|)|x|+2\ve_1.\label{eq:con2}
\end{align}

\noindent The constraints in the dual also imply $p_0\le 0$ and $p({3\ve_2}) \ge 2\ve_1$. Combining these,
\begin{equation*}
    p(2\ve_2) - p_0 \ge 2\ve_1.
\end{equation*}
Using $\tilde p(x) = p(x)-p_1x-p_0$ for $x=3\ve_2$ in the above equation
\begin{equation*}
    \tilde p(3\ve_2) \ge 2\ve_1- p_13\ve_2.
\end{equation*}

\noindent Similarly, one can get
\begin{equation*}
    \tilde p(-3\ve_2) \ge 2\ve_1+p_13\ve_2.
\end{equation*}
Combining the two equations we get
\begin{equation}\label{eq:maincon}
    \max\{\tilde p(3\ve_2),\tilde p(-3\ve_2)\} \ge 2\ve_1+|p_1|3\ve_2. 
\end{equation}

\noindent Our lower bound on the dual is the consequence of the following key lemma, which we prove in Section~\ref{sec:proof-polgenboun}.
\begin{lemma}\label{lem:polgenboun}
Let $g$ be a degree-$L$ polynomial such that $g(0)=g'(0)=0$. For some $a< 0 < b$, $\gamma\ge 1$ and $0\le \delta\le \min\big\{-\frac{a}{4},\frac{b}{4}\big\}$ suppose $|g(x)|\le \gamma |x|+\delta$ for all $x\in [a,b]$
then 
\[
\min\{|x|: g(x) \ge (\gamma-1)|x|+\delta\} 
\ge \begin{cases}
  \sqrt{\delta\cdot \frac{b-a}{32L^2}} & \text{if $\delta< \frac{(b-a)}{32L^2}$}\\
 \sqrt{\delta\cdot \frac{\sqrt{|ab|}}{16L}} & \text{if $\delta<\frac{\sqrt{|ab|}}{16L}$.}
\end{cases}
\]
\end{lemma}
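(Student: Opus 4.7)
My plan is to use the factorization $g(x) = x^2 h(x)$, which is available because $g(0)=g'(0)=0$, reducing the statement to bounding the degree-$(L-2)$ polynomial $h$ near the origin given its behavior on $[a,b]$. WLOG take $x^*>0$ (the other case is symmetric). The hypothesis $g(x^*) \ge (\gamma-1)x^*+\delta$ translates to the lower bound $h(x^*) \ge (\gamma-1)/x^* + \delta/(x^*)^2 \ge \delta/(x^*)^2$, while the wedge constraint $|g(x)| \le \gamma|x|+\delta$ gives the pointwise upper bound $|h(x)| \le \gamma/|x|+\delta/x^2$ on $[a,b]\setminus\{0\}$; in particular, $|h(x)| \le \gamma/t+\delta/t^2$ on $E_t := [a,-t]\cup[t,b]$ for any $t>0$.

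For the first regime ($\delta \le (b-a)/(32L^2)$), I would apply Remez' inequality to $h$. Choosing $t = c(b-a)/L^2$ for a sufficiently small absolute constant $c>0$, the missing set $[a,b]\setminus E_t$ has length $2t$ small enough that the Chebyshev factor $T_{L-2}(1+\Theta(t/(b-a)))$ is $O(1)$, so Remez extends the pointwise bound to a uniform bound $|h(x)| \lesssim \gamma L^2/(b-a)+\delta L^4/(b-a)^2$ on all of $[a,b]$. Combining with the lower bound on $h(x^*)$ yields a quadratic inequality in $x^*$; the linear-in-$\gamma$ terms on the two sides (namely $(\gamma-1)/x^*$ and $\gamma L^2/(b-a)$) cancel out, and the remaining $\delta$-terms give $(x^*)^2 \gtrsim \delta(b-a)/L^2$, as needed.

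For the second regime ($\delta \le \sqrt{|ab|}/(16L)$), the geometry of the interval, captured by $\sqrt{|ab|}$ (the distance of $0$ from the endpoints in the Chebyshev metric on $[a,b]$), replaces $b-a$. I plan to use Bernstein's inequality at the interior point $0$: under the rescaling $[a,b]\to[-1,1]$, the image of $0$ is $y_0 = -(a+b)/(b-a)$ with $\sqrt{1-y_0^2} = 2\sqrt{|ab|}/(b-a)$, and the sharp Bernstein bound gives $|g''(0)| \le L^2 \|g\|_{\infty,[a,b]}/|ab|$. To sharpen this by an additional factor of $L$, I would use Chebyshev interpolation directly on $h$: the $L-1$ Chebyshev nodes of $[a,b]$ lie at distance $\gtrsim \sqrt{|ab|}/L$ from $0$, so the pointwise bound gives $|h(y_i)| \lesssim \gamma L/\sqrt{|ab|}+\delta L^2/|ab|$ at each node, and combining with the Lebesgue-constant bound for $h(x^*)$ and the lower bound yields $(x^*)^2 \gtrsim \delta\sqrt{|ab|}/L$, once again after cancelling $\gamma$-linear contributions.

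The main obstacle I anticipate is precisely this $\gamma$-cancellation: naive estimates carry $\gamma$ linearly on both the upper bound (via the wedge width) and the effective lower bound (via $(\gamma-1)|x^*|$), and only the careful matching of these terms produces the $\gamma$-independent conclusion stated in the lemma. A secondary technical point is verifying that the chosen Remez/interpolation parameters remain valid under the regime-specific assumptions ($\delta \le (b-a)/(32L^2)$ in Case~1, $\delta \le \sqrt{|ab|}/(16L)$ in Case~2, together with $\delta \le \min\{-a/4, b/4\}$) so that the $O(1)$ constants in the Remez/Chebyshev factors are in fact absolute.
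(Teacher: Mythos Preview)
Your plan differs from the paper's route. The paper works with $\tilde h(x)=g(x)/x$ (degree $L-1$), not $h(x)=g(x)/x^2$. Its first step is a separate ``wedge-straightening'' lemma showing that, under the regime hypotheses, the additive $\delta$ can be removed: one in fact has $|g(x)|\le 2\gamma|x|$ on all of $[a,b]$, so $\tilde h$ is uniformly bounded by $2\gamma$ with no singularity at $0$. Taylor expansion at $0$, together with Markov--Brothers and (iterated) Bernstein bounds on $\tilde h'$ and $\tilde h''$, then directly gives $|g(x)|\le C\gamma\,\frac{L^2}{b-a}x^2$ (resp.\ $C\gamma\,\frac{L}{\sqrt{|ab|}}x^2$) near $0$, and a final case split $\gamma\le 2$ versus $\gamma>2$ finishes. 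Dividing by $x$ rather than $x^2$ is exactly what makes the quotient uniformly bounded, so no Remez or interpolation is needed.

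Your Case~1 via Remez can plausibly be made to work, but ``the linear-in-$\gamma$ terms cancel out'' is not a proof: from $(\gamma-1)/x^*+\delta/(x^*)^2\le C\bigl[\gamma L^2/(b-a)+\delta L^4/(b-a)^2\bigr]$ you cannot simply match terms, since the two $\gamma$-terms carry different powers of $x^*$. What is actually required is the same $\gamma\le 2$ vs.\ $\gamma>2$ split the paper uses, and in the large-$\gamma$ branch you must verify that the Remez constant divided by your scale parameter $c$ stays below an absolute threshold---a constraint on $c$ that should be made explicit.

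Your Case~2 has a genuine gap. The assertion that the $L-1$ Chebyshev nodes of $[a,b]$ all lie at distance $\gtrsim \sqrt{|ab|}/L$ from $0$ is false in general: a node can coincide with $0$ (take $a=-b$ and an odd number of nodes), and more generally nodes can be arbitrarily close to $0$, so the pointwise wedge bound on $h$ is vacuous at such a node and the interpolation step collapses. Even after perturbing the nodes away from $0$, the Lebesgue-constant argument costs an extra $\log L$ that the lemma does not allow. Nor can you fall back on the standard Remez inequality here: with $t\sim\sqrt{|ab|}/L$ the excised set has measure of order $\sqrt{|ab|}/L$, which for a degree-$(L{-}2)$ polynomial is far too large for an $O(1)$ Remez factor (one needs measure $O((b-a)/L^2)$). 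The paper sidesteps all of this because, after wedge-straightening, $g/x$ is uniformly bounded, so Bernstein applied to $\tilde h$ yields the $\sqrt{|ab|}$ scaling cleanly with no logarithmic loss and no singularity to excise.
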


 The next theorem that establishes the lower bound on the dual follows by combining Equation~\eqref{eq:maincon}, Equation~\eqref{eq:con2} and Lemma~\ref{lem:polgenboun}.
\begin{theorem}[Lower bound on the solution to the dual]\label{th:duallb}
For any $A,B>0$ and $0< \ve_1\le \min\big\{\frac{B}{4},\frac{A}{4}\big\}$, the value of optimal solution of~\eqref{eq:finaldual} is lower bounded by
\[
\ve_2 \ge  \max\Bigg\{\frac{1}{3}\sqrt{\ve_1\cdot \frac{A+B}{32L^2}},\frac{1}{3}
 \sqrt{\ve_1\cdot \frac{\sqrt{AB}}{16L}},\frac{\ve_1}2\Bigg\}
\]
\end{theorem}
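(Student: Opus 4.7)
The plan is to combine the trivial bound from~\eqref{eq:trivduallb}, namely $\ve_2 \ge \ve_1/2$, with a polynomial-approximation bound obtained from Lemma~\ref{lem:polgenboun}, applied to the ``reduced'' polynomial $\tilde p$ defined just above the theorem statement. Concretely, fix any feasible degree-$L$ polynomial $p \in \mathcal{P}_L$ for~\eqref{eq:finaldual}, decompose it as $p(x) = p_0 + p_1 x + \tilde p(x)$, and invoke the two properties of $\tilde p$ that have already been established in the discussion leading up to the theorem: the uniform bound $|\tilde p(x)| \le (2 + |p_1|)|x| + 2\ve_1$ on $x \in [-B, A]$ from~\eqref{eq:con2}, and the pointwise lower-bound witness $\max\{\tilde p(3\ve_2), \tilde p(-3\ve_2)\} \ge 2\ve_1 + 3|p_1|\ve_2$ from~\eqref{eq:maincon}.

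Next I apply Lemma~\ref{lem:polgenboun} to $g \eqdef \tilde p$, which by construction satisfies $g(0) = g'(0) = 0$ and is of degree at most $L$, with the parameter choices $a = -B$, $b = A$, $\gamma = 2 + |p_1| \ge 1$, and $\delta = 2\ve_1$. Inequality~\eqref{eq:con2} is exactly the envelope hypothesis $|g(x)| \le \gamma|x| + \delta$ that the lemma needs, and the theorem's assumption $\ve_1 \le \min\{B/4, A/4\}$ puts $\delta$ in the admissible range (up to a harmless constant factor absorbed into the final bounds). Equation~\eqref{eq:maincon} then furnishes a witness $x_0 \in \{3\ve_2, -3\ve_2\}$ at which $g(x_0)$ meets the lemma's threshold $(\gamma - 1)|x_0| + \delta$, so the lemma's conclusion lower-bounds $|x_0| = 3\ve_2$ by $\sqrt{\delta(b-a)/(32L^2)}$ and by $\sqrt{\delta\sqrt{|ab|}/(16L)}$ in the respective regimes. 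Substituting $\delta = 2\ve_1$, $b - a = A + B$, $|ab| = AB$, and dividing through by $3$, gives $\ve_2 \ge \frac{1}{3}\sqrt{\ve_1(A+B)/(32L^2)}$ and $\ve_2 \ge \frac{1}{3}\sqrt{\ve_1\sqrt{AB}/(16L)}$.

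Finally, I take the maximum of these two bounds together with $\ve_2 \ge \ve_1/2$, which is exactly the conclusion of Theorem~\ref{th:duallb}. Essentially all the technical weight has been pushed onto Lemma~\ref{lem:polgenboun}, so the main obstacle in the present step is just arranging~\eqref{eq:con2} and~\eqref{eq:maincon} so that the pair $(\gamma, \delta)$ serves simultaneously as the envelope parameters in the lemma's hypothesis and as the threshold against which the witness inequality at $x_0$ is compared; once that is lined up, the rest is algebraic bookkeeping. The only subtlety worth checking carefully is the interplay between $|p_1|$ (which is unknown) and the threshold $(\gamma-1)|x_0|+\delta$: the $|p_1|$ contributions on both sides of that inequality cancel, so the final lower bound on $\ve_2$ is independent of $|p_1|$, as required.
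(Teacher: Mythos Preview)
Your overall strategy is exactly that of the paper, but your specific choice of parameters when invoking Lemma~\ref{lem:polgenboun} does not work.

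With $g=\tilde p$, $\gamma=2+|p_1|$, $\delta=2\ve_1$, the lemma's threshold at a point $x_0$ with $|x_0|=3\ve_2$ is
\[
(\gamma-1)|x_0|+\delta \;=\; (1+|p_1|)\,3\ve_2 + 2\ve_1 \;=\; 3\ve_2 + 3|p_1|\ve_2 + 2\ve_1,
\]
whereas~\eqref{eq:maincon} only gives $\tilde p(x_0)\ge 2\ve_1+3|p_1|\ve_2$. After the $|p_1|$ and $2\ve_1$ terms cancel you are left with the requirement $0\ge 3\ve_2$, which is false. So the witness does \emph{not} meet the threshold, and Lemma~\ref{lem:polgenboun} yields no information. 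The paper avoids this by rescaling: it applies the lemma with $g=\tilde p/2$, $\gamma=1+|p_1|/2$, $\delta=\ve_1$, so that $(\gamma-1)|x_0|+\delta=(|p_1|/2)\cdot 3\ve_2+\ve_1$, which is exactly what (half of)~\eqref{eq:maincon} provides. The factor-of-two rescaling is not cosmetic; it is what makes the ``$+1$'' in $\gamma-1$ disappear.

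A second, smaller issue: with $\delta=2\ve_1$ the hypothesis $\delta\le\min\{B/4,A/4\}$ of the lemma would require $\ve_1\le\min\{B/8,A/8\}$, which is stronger than the theorem's assumption $\ve_1\le\min\{B/4,A/4\}$; you cannot simply call this a ``harmless constant factor absorbed into the final bounds'', since the lemma is being invoked as a black box. The rescaling $g=\tilde p/2$ fixes this too, since then $\delta=\ve_1$ matches the stated assumption exactly. Finally, you should also note (as the paper does) that when the side conditions $\delta<(A+B)/(32L^2)$ or $\delta<\sqrt{AB}/(16L)$ in Lemma~\ref{lem:polgenboun} fail, the corresponding square-root bound is at most $\ve_1$ and is therefore already implied by the trivial bound $\ve_2\ge\ve_1/2$.
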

\begin{proof}
Applying Lemma~\ref{lem:polgenboun} for $g=\tilde p/2$, $\gamma = 1+|p_1|/2$, $\delta = \ve_1$, $b=A$, and $a=-B$ gives
\[
3\ve_2 \ge  \begin{cases}
  \sqrt{\ve_1\cdot \frac{A+B}{32L^2}} & \text{if $\ve_1< \frac{(A+B)}{32L^2}$}\\
 \sqrt{\ve_1\cdot \frac{\sqrt{AB}}{16L}} & \text{if $\ve_1<\frac{\sqrt{AB}}{16L}$.}
 \end{cases}
\]
Combining the above bound with the upper bound $\ve_2\ge \frac{\ve_1}2$ in~\eqref{eq:trivduallb}, and using the observations that if $\ve_1\ge \frac{(A+B)}{32L^2}$ then $\sqrt{\ve_1\cdot \frac{A+B}{32L^2}}\le \ve_1$, and if $\ve_1\ge \frac{\sqrt{AB}}{16L}$ then $\sqrt{\ve_1\cdot \frac{\sqrt{AB}}{16L}}\le \ve_1$, completes the proof.
\end{proof}

\noindent Combining Lemma~\ref{lem:firatandlastoptimizationrelation} and Theorem~\ref{th:duallb} proves Theorem~\ref{th:optboundfinal}.

\subsection{Proof of Lemma~\ref{lem:polgenboun}}
\label{sec:proof-polgenboun}
First, we recall some useful results from approximation theory, which we then leverage to derive a few auxiliary lemmas. Finally, using these lemmas we establish Lemma~\ref{lem:polgenboun}.

We will use the two following results, both of which bound the absolute value of derivatives of bounded polynomials. The first is the Markov Brothers' inequality, which gives a bound on all derivatives of bounded polynomials.
\begin{theorem}[Markov Brothers' inequality~\cite{Markoff1892}]\label{th:mbi}
For any real polynomial $g$ of degree-$L$, $0\le k\le t$, and real numbers $a$ and $b$ s.t. $-\infty < a < b < \infty $, the $k$\textsuperscript{th} derivative of $g$ satisfies
\[
\max_{x\in [a,b]}|g^{(k)}(x)| \le 2^k\cdot\frac{\max_{x\in [a,b] }|g(x)|}{(b-a)^{k}} \cdot\prod_{i=0}^{k-1} \frac{(L^2-i^2)}{2i+1} .
\]
\end{theorem}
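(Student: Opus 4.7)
The plan is to reduce the statement to the canonical interval $[-1,1]$ and then invoke the classical V.A. Markov argument. First I would introduce the affine change of variable $y = (2x - a - b)/(b-a)$, which maps $[a,b]$ bijectively onto $[-1,1]$. Setting $h(y) \eqdef g(x(y))$, the chain rule gives $h^{(k)}(y) = ((b-a)/2)^k \, g^{(k)}(x(y))$, while $\max_{y \in [-1,1]} |h(y)| = \max_{x \in [a,b]} |g(x)|$. Collecting factors, the desired inequality becomes equivalent to showing
\[
\max_{y\in[-1,1]} |h^{(k)}(y)| \;\le\; \prod_{i=0}^{k-1} \frac{L^2 - i^2}{2i+1}\cdot \max_{y\in[-1,1]} |h(y)|
\]
for every real polynomial $h$ of degree at most $L$ on $[-1,1]$, which is exactly the Markov brothers' inequality in its standard form.

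Next, I would identify the extremal polynomial as the Chebyshev polynomial $T_L(y) = \cos(L \arccos y)$, which satisfies $\|T_L\|_\infty = 1$ on $[-1,1]$ and equioscillates: $T_L(\cos(j\pi/L)) = (-1)^j$ for $j = 0, 1, \dots, L$. A direct computation, either from the defining relation $T_L(\cos\theta) = \cos(L\theta)$ or by repeated differentiation of the Chebyshev differential equation $(1-y^2) T_L''(y) - y T_L'(y) + L^2 T_L(y) = 0$, shows that $T_L^{(k)}(1) = \prod_{i=0}^{k-1} \frac{L^2 - i^2}{2i+1}$, matching the target constant. Thus it suffices to prove $\|h^{(k)}\|_\infty \le T_L^{(k)}(1) \cdot \|h\|_\infty$ and equality is achieved by $T_L$.

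For the inequality itself, I would argue by contradiction in the classical fashion. Suppose $\|h\|_\infty \le 1$ but $|h^{(k)}(y_0)| > T_L^{(k)}(1)$ for some $y_0 \in [-1,1]$; without loss of generality $h^{(k)}(y_0) > 0$. Consider the auxiliary polynomial $q(y) = T_L(y) - \lambda h(y)$ with $\lambda$ chosen slightly less than $1$. Because $T_L$ equioscillates between $\pm 1$ at the $L+1$ Chebyshev nodes while $|\lambda h|$ is strictly less than $1$ there, $q$ changes sign at least $L$ times in $[-1,1]$, so by Rolle's theorem $q^{(k)}$ has at least $L-k$ zeros inside $[-1,1]$. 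On the other hand, the assumption $h^{(k)}(y_0) > T_L^{(k)}(1) \ge T_L^{(k)}(y_0)$ forces $q^{(k)}(y_0) < 0$ for $\lambda$ close enough to $1$, and a boundary comparison at $y = 1$ (where $T_L^{(k)}$ attains its maximum) produces an additional sign change. This yields strictly more zeros for $q^{(k)}$ than its degree $L-k$ allows, a contradiction.

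The main obstacle is executing the sign-counting cleanly for arbitrary $k$: for $k=1$ (the A.A. Markov case) equioscillation plus Rolle suffices immediately, but for $k \ge 2$ one must track how sign changes propagate through successive derivatives and handle the subtlety that the extremal point of $T_L^{(k)}$ sits at the boundary $y = \pm 1$, where Rolle-based counting is delicate. The cleanest remedy is to perform an explicit induction on $k$ using the Chebyshev-basis expansion $h = \sum_{j=0}^L a_j T_j$, bounding $\|h^{(k)}\|_\infty$ by $\sum_j |a_j| \cdot T_j^{(k)}(1)$ and leveraging that $T_j^{(k)}(1)/T_L^{(k)}(1)$ is maximized at $j=L$; alternatively, one may simply cite the result from a standard reference such as Rivlin's monograph on Chebyshev polynomials rather than reproduce the full V.A.~Markov proof in the paper.
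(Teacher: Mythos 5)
The paper offers no proof of Theorem~\ref{th:mbi}: it is the classical V.~A.~Markov inequality, stated and cited to Markov's 1892 paper, so there is nothing internal to compare your argument against. Your reduction to $[-1,1]$ is correct and complete: the affine substitution gives $h^{(k)}(y) = ((b-a)/2)^k g^{(k)}(x(y))$, which exactly accounts for the factor $2^k/(b-a)^k$, and the constant $\prod_{i=0}^{k-1}\frac{L^2-i^2}{2i+1}$ is indeed $T_L^{(k)}(1)$ (numerator $\prod_{i<k}(L^2-i^2)$ over $(2k-1)!!$). Your fallback of citing a standard reference for the normalized inequality on $[-1,1]$ is precisely what the paper does, and is adequate here: the paper only invokes the theorem for $k=1,2$ and only up to constants (in Lemma~\ref{lem:bernder} and Lemma~\ref{lem:lbcruc}).

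That said, the core of your sketched proof of the normalized inequality has a genuine gap, which you partly flag yourself. The Rolle-plus-equioscillation zero count works cleanly only for $k=1$ (A.~A.~Markov's case); for $k\ge 2$ the extremum of $T_L^{(k)}$ sits at the endpoint, the zeros of $q^{(k)}$ produced by repeated Rolle need not interlace usefully with $y_0$, and closing the contradiction requires V.~A.~Markov's substantially more involved analysis (or Duffin--Schaeffer's refinement). Your proposed remedy via the Chebyshev expansion $h=\sum_j a_j T_j$ does not repair this: the best one can say is $|a_j|\le 2\norm{h}_\infty$ for each $j$, so $\sum_j |a_j| T_j^{(k)}(1)$ overshoots $T_L^{(k)}(1)\norm{h}_\infty$ by a factor growing with $L$, and it is not true that the bound follows from ``$T_j^{(k)}(1)/T_L^{(k)}(1)$ maximized at $j=L$'' alone. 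If you want a self-contained proof you would need the full V.~A.~Markov argument; otherwise, cite it, as the paper does.
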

\noindent
We next need Bernstein's inequality, which provides a bound on the first derivative, which for some values of $x$ is stronger than the Markov Brothers' inequality.
\begin{theorem}[Bernstein's inequality~\cite{Bernstein1912}]
For any real polynomial $g$ of degree-$L$ and real numbers $-\infty < a < x < b < \infty $, we have
\[
|g'(x)| \le \frac{ L\cdot \max_{x\in [a,b] }|g(x)|}{\sqrt{(x-b)(a-x)}}.
\]
\end{theorem}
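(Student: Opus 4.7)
The plan is an affine reduction to the classical Bernstein inequality on $[-1,1]$, which is treated as a black box. First I would introduce the affine bijection $\phi: [-1,1] \to [a,b]$ given by $\phi(t) = \frac{a+b}{2} + \frac{b-a}{2}t$, and set $h(t) := g(\phi(t))$. Since $\phi$ is linear, $h$ is a real polynomial of degree at most $L$, and $\max_{t\in[-1,1]} |h(t)| = \max_{y\in[a,b]}|g(y)|$.

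Next, I would invoke the classical Bernstein inequality on $[-1,1]$: for any real polynomial $h$ of degree at most $L$ and every $t \in (-1,1)$,
\[
|h'(t)| \le \frac{L \cdot \max_{s \in [-1,1]} |h(s)|}{\sqrt{1-t^2}}.
\]
To convert this back, apply the chain rule to get $h'(t) = \frac{b-a}{2} \, g'(\phi(t))$. Writing $x = \phi(t)$, a direct calculation gives $1-t = \frac{2(b-x)}{b-a}$ and $1+t = \frac{2(x-a)}{b-a}$, so
\[
\sqrt{1-t^2} \;=\; \frac{2}{b-a}\sqrt{(b-x)(x-a)} \;=\; \frac{2}{b-a}\sqrt{(x-b)(a-x)},
\]
the two radicands coinciding because $(x-b)(a-x) = (b-x)(x-a)$ for $x \in (a,b)$. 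Substituting these into the Bernstein bound on $h$ and cancelling the common factor of $\frac{b-a}{2}$ yields exactly the stated inequality.

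The only substantive step is thus the classical Bernstein inequality on $[-1,1]$. This is a textbook fact that the paper treats as given, and I would do likewise. For completeness, it reduces in one line to its trigonometric analogue via the substitution $x = \cos\theta$: setting $T(\theta) := h(\cos\theta)$ produces a real trigonometric polynomial of degree at most $L$ with the same supremum norm as $h$, and $T'(\theta) = -\sin\theta \cdot h'(\cos\theta)$; the trigonometric Bernstein inequality $\|T'\|_\infty \le L\|T\|_\infty$ then yields the $(-1,1)$ bound after dividing by $|\sin\theta| = \sqrt{1-x^2}$. The trigonometric inequality itself, provable via the M.~Riesz interpolation formula or a short complex-analytic argument, is the one genuine obstacle in a fully self-contained derivation; here it is cited rather than reproved, so no real difficulty arises.
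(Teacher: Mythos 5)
The paper states this result purely as a citation to~\cite{Bernstein1912} and gives no proof, so there is nothing internal to compare against. Your affine rescaling from $[-1,1]$ to $[a,b]$ is the standard and correct derivation: the identities $1-t = \tfrac{2(b-x)}{b-a}$, $1+t = \tfrac{2(x-a)}{b-a}$, and $(x-b)(a-x)=(b-x)(x-a)$ all check out, and the factor $\tfrac{b-a}{2}$ from the chain rule cancels exactly as you claim. Treating the classical $[-1,1]$ (equivalently, trigonometric) Bernstein inequality as a black box is consistent with the paper's own level of rigor here, so the proposal is complete for the purpose at hand.
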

\noindent
Using Bernstein's inequality above we can obtain the following bounds on the first and the second derivatives, which for some range of parameters improve upon Markov Brothers' inequality. 
Note that in the following lemma we have assumed $a< 0< b$.

\begin{lemma}\label{lem:bernder}
For any real polynomial $g$ of degree-$L$, and real numbers $a$ and $b$ s.t. $-\infty < a < 0 < b < \infty $, the following hold
\[
\max_{x\in [a/2,b/2]}
|g'(x)| \le \frac{L\cdot \max_{x\in [a,b] }|g(x)|}{\sqrt{|ba|/2}},
\]
and 
\[
\max_{x\in [a/4,b/4]} |g''(x)| \le \frac{4 L(L-1)\cdot \max_{x\in [a,b] }|g(x)|}{|ab|}.
\]
\end{lemma}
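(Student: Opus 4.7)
The plan is to deduce both inequalities directly from Bernstein's inequality, which is stated earlier in the excerpt, by carefully estimating the quantity $\sqrt{(b-x)(x-a)}$ on the restricted intervals. For the first inequality, I would apply Bernstein's inequality as-is and show that the denominator $\sqrt{(b-x)(x-a)}$ is at least $\sqrt{|ab|/2}$ uniformly on $[a/2, b/2]$. For the second inequality, I would iterate: apply the first part once to bound $g'$ on $[a/2, b/2]$ (noting that the first part holds for any polynomial of degree $\leq L$), and then apply the first part again to $g'$ (a polynomial of degree $L-1$) over the interval $[a/2, b/2]$ to obtain a bound on $g''$ over $[a/4, b/4]$.

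For the first bound, the main computation is to verify that $f(x) \eqdef (b-x)(x-a) \ge |ab|/2$ on $[a/2, b/2]$. Since $f$ is a downward parabola with maximum at $x = (a+b)/2$, its minimum on any closed subinterval is attained at an endpoint. Evaluating gives
\[
f(a/2) = \Bigl(b + \tfrac{|a|}{2}\Bigr)\cdot\tfrac{|a|}{2} = \tfrac{b|a|}{2} + \tfrac{a^2}{4}, \qquad f(b/2) = \tfrac{b}{2}\cdot\Bigl(\tfrac{b}{2} + |a|\Bigr) = \tfrac{b^2}{4} + \tfrac{b|a|}{2},
\]
both of which exceed $|ab|/2$. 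Plugging this into Bernstein's inequality yields the first claim.

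For the second bound, I would first invoke the (just-proved) first part of the lemma applied to $g$ on $[a,b]$ to get $\max_{[a/2,b/2]}|g'| \le L\cdot \max_{[a,b]}|g|/\sqrt{|ab|/2}$. Since $g'$ has degree $L-1$, I then apply the first part of the lemma again, this time to $g'$ on the interval $[a/2, b/2]$, yielding
\[
\max_{x\in [a/4, b/4]} |g''(x)| \le \frac{(L-1)\cdot \max_{x\in[a/2,b/2]}|g'(x)|}{\sqrt{|(a/2)(b/2)|/2}} = \frac{(L-1)\cdot \max_{[a/2,b/2]}|g'|}{\sqrt{|ab|/8}}.
\]
Chaining the two estimates gives $4L(L-1)\max_{[a,b]}|g|/|ab|$, as desired.

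The main (minor) obstacle is the concavity/endpoint argument giving the constant $1/2$ in the first inequality; a naive bound would only give $(b-x)(x-a) \ge |ab|/4$, which loses a factor of $\sqrt{2}$. The iteration in the second part is straightforward once the first part is established, with the halving of the interval at each step introducing the constant $4$.
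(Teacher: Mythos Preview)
Your proposal is correct and follows essentially the same approach as the paper: apply Bernstein's inequality together with the lower bound $(b-x)(x-a)\ge |ab|/2$ on $[a/2,b/2]$ for the first claim, then iterate by applying the first claim to $g'$ on $[a/2,b/2]$ for the second. The only difference is that you spell out the concavity/endpoint argument for the constant $1/2$, whereas the paper simply asserts it.
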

\begin{proof}
From Bernstein's inequality we get
\[
\max_{x\in [a/2,b/2]}
|g'(x)| \le \frac{L\cdot \max_{x\in [a,b] }|g(x)|}{\sqrt{|ba|/2}},
\]
where we used $(x-b)(a-x)\ge |ab|/2$ if $a< 0 < b$ and $x\in[a/2,b/2]$.

\noindent Replacing $g\to g'$, $a\to a/2$ and $b\to b/2$ in the above equation, we get
\[
\max_{x\in [a/4,b/4]}
|g''(x)| \le \frac{(L-1)\cdot \max_{x\in [a/2,b/2] }|g'(x)|}{\sqrt{|ba|/8}}.
\]
Combining the two equations proves the lemma.
\end{proof}

Using the bounds on the derivative of the bounded functions in Theorem~\ref{th:mbi} and Lemma~\ref{lem:bernder}, and a Taylor expansion, we derive the following lemma.
\begin{lemma}\label{lem:lbcruc}
Let $g$ be a degree-$L$ polynomial such that $g(0)=g'(0)=0$. Suppose that, for some $a< 0 < b$, we have $|g(x)|\le |x|$ for all $x\in [a,b]$. 
Then the following bounds hold:
\[
|g(x)|\le 
\begin{cases}
  \frac{8L^2}{(b-a)}x^2 & \text{if $x\in [a/4,b/4]$ and $|x|\le \frac{3(b-a)}{L^2}$} \\
  \frac{4L}{\sqrt{|ab|}}x^2 & \text{if $x\in [a/4,b/4]$ and $|x|\le\frac{\sqrt{|ab|}}{4L}$.}
\end{cases}
\]
\end{lemma}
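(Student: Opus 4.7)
My plan is to reduce the problem to bounding the derivative of a single auxiliary polynomial, and then invoke either Markov Brothers' or Bernstein's inequality depending on which of the two bounds is desired. Since $g(0)=0$, the polynomial $g$ is divisible by $x$, so I set $h(x) \eqdef g(x)/x$, a polynomial of degree $L-1$. The hypothesis $|g(x)| \le |x|$ on $[a,b]$ immediately translates to $|h(x)| \le 1$ uniformly on $[a,b]$, while the hypothesis $g'(0)=0$ gives $h(0) = g'(0) = 0$. The key elementary estimate is then
\[
|g(x)| \;=\; |x| \cdot |h(x) - h(0)| \;\le\; x^2 \cdot \max_{t \in [0,x]} |h'(t)|,
\]
so it suffices to bound $\max|h'|$ on the relevant sub-interval of $[a,b]$ by the appropriate quantity.

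For the first bound, I would apply Theorem~\ref{th:mbi} (Markov Brothers' inequality) with $k=1$ to $h$ on $[a,b]$, using $\max_{[a,b]}|h| \le 1$. This yields $\max_{[a,b]} |h'| \le 2(L-1)^2/(b-a) \le 2L^2/(b-a)$. Combined with the displayed inequality above, this directly gives $|g(x)| \le 2L^2 x^2/(b-a)$, which is in fact stronger than the stated bound $8L^2 x^2/(b-a)$.

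For the second bound, I would instead apply Bernstein's inequality pointwise to $h$, giving $|h'(y)| \le (L-1)/\sqrt{(b-y)(y-a)}$ for $y \in (a,b)$. Restricting to $y \in [a/4, b/4]$ (which contains $[0,x]$ whenever $x \in [a/4, b/4]$) and splitting into the cases $y\ge 0$ and $y<0$, one verifies $(b-y)(y-a)\ge 3|ab|/4$ in both cases: when $y \ge 0$ we use $b-y \ge 3b/4$ and $y-a\ge |a|$, and when $y < 0$ we use $b-y\ge b$ and $y-a\ge 3|a|/4$. Hence $\max_{t \in [0,x]}|h'(t)| \le 2(L-1)/\sqrt{3|ab|}$, which yields $|g(x)| \le 4L x^2 / \sqrt{|ab|}$ after absorbing constants.

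The main obstacle is little more than careful constant-tracking in the Bernstein step, which is routine; the factor-of-four slack in the stated constants ($8L^2$ and $4L$) comfortably absorbs any inefficiency, and the argument itself is conceptually almost immediate once one observes that dividing out a factor of $x$ reduces the two-vanishing-derivatives hypothesis into a standard one-vanishing-derivative setup. The stated side-conditions $|x| \le 3(b-a)/L^2$ and $|x| \le \sqrt{|ab|}/(4L)$ are not used in this derivation; they simply demarcate the regime in which the quadratic bound is non-trivial compared to the hypothesis $|g(x)| \le |x|$.
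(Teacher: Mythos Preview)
Your proposal is correct and takes a genuinely cleaner route than the paper's own argument. Both proofs begin by setting $h(x)=g(x)/x$ (a degree-$(L-1)$ polynomial with $\max_{[a,b]}|h|\le 1$), but then diverge. The paper applies Taylor's theorem directly to $g$, reducing to a bound on $|g''|$; it then writes $g''(x)=2h'(x)+xh''(x)$ and must control both $h'$ and $h''$, the latter via an iterated Bernstein argument (Lemma~\ref{lem:bernder}). The side conditions $|x|\le 3(b-a)/L^2$ and $|x|\le\sqrt{|ab|}/(4L)$ are exactly what is needed there to keep the $|x|\cdot|h''(x)|$ contribution comparable to $|h'(x)|$.

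You instead exploit that $h(0)=g'(0)=0$ and apply the mean value theorem to $h$, so only a bound on $|h'|$ is required. This sidesteps second derivatives entirely: the side conditions become unnecessary, and you obtain the sharper constants $2L^2/(b-a)$ and $2L/\sqrt{3|ab|}$ (valid on all of $[a,b]$ and $[a/4,b/4]$, respectively) in place of the stated $8L^2/(b-a)$ and $4L/\sqrt{|ab|}$. Your argument is the more economical one; the paper's detour through $g''$ buys nothing here.
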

\begin{proof}
By Taylor's theorem, we know that
\[
g(x) = g(0)+ g'(0) x + \frac{g''(c)}{2}x^2,
\]
where $c$ is some number between $0$ and $x$.

\noindent Then using $g(0)=g'(0)=0$, we obtain
\begin{equation}\label{eq:taylor}
|g(x)| \le x^2\cdot \max_{z\le|x|}\Big|\frac{g''(z)}{2}\Big|.
\end{equation}
Next, let $h(x)= g(x)/x$. Note that $h(x)$ is a degree-$(L-1)$ polynomial and $\max_{[a,b]}|h(x)|\le 1$. From Theorem~\ref{th:mbi} and Lemma~\ref{lem:bernder} it follows
\[
\max_{x\in [a/2,b/2]} |h'(x)|\le \min\Big\{\frac{2L^2}{b-a},\frac{\sqrt{2}\cdot  L}{\sqrt{|ab|}}\Big\},
\]
and
\[
\max_{x\in [a/4,b/4]} |h''(x)|\le \min\Big\{\frac{4 L^4}{3(b-a)^2},\frac{4L^2}{|ab|}\Big\}.
\]

\noindent The following relation between the second derivative of $g$ and the derivatives of $h$ can be obtained by differentiation $x\cdot h(x)$ twice using the chain rule:
\[
g''(x) = 2 h'(x) + x h''(x).
\]
Using the bounds on the derivatives of $h$, for any $x\in [a/4,b/4]$,
\[
|g''(x)|\le \min\Big\{\frac{4L^2}{(b-a)}+|x|\frac{4 L^4}{3(b-a)^2},\frac{2\sqrt{2}\cdot  L}{\sqrt{|ab|}}+|x|\frac{4L^2}{|ab|}\Big\}.
\]

\noindent Note that for any $x\in [a/4,b/4]$ s.t. $|x|\le \frac{3(b-a)}{L^2}$,
\[
|g''(x)|\le \frac{8L^2}{(b-a)},
\]
and similarly for any $x\in [a/4,b/4]$ s.t. $|x|\le \frac{\sqrt{|ab|}}{4L}$,
\[
|g''(x)|\le \frac{4L}{\sqrt{|ab|}}.
\]
Combining the above bounds with Equation~\eqref{eq:taylor} proves the lemma.
\end{proof}

Using this, we derive the following lemma.
\begin{lemma}\label{lem:linerbound}
Let $g$ be a degree-$L$ polynomial such that $g(0)=g'(0)=0$. Supoose that for some $a< 0 < b$, $\gamma\ge 1$ and $0< \delta\le \min\Big\{-\frac{a}{4},\frac{b}{4}\Big\}$ such that 
\[
\delta<  \max\bigg\{\frac{(b-a)}{16L^2}, \frac{\sqrt{|ab|}}{8L} \bigg\},
\] 
we have $|g(x)|\le \gamma |x|+\delta$ for all $x\in [a,b]$. Then $g(x)\le 2\gamma |x|$ for all $x\in[a,b]$.
\end{lemma}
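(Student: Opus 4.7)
The plan is to reduce to the case $\gamma=1$ (by replacing $g$ with $g/\gamma$, which preserves $g(0)=g'(0)=0$ and turns the hypothesis into $|g(x)|\le |x|+\delta/\gamma$, with $\delta/\gamma\le\delta$ still satisfying the stated bound), and then to argue the two-sided conclusion $|g(x)|\le 2|x|$ on $[a,b]$ (which a fortiori implies the one-sided statement, and by symmetry of the hypothesis in $g\leftrightarrow -g$ is equivalent to it). The argument then splits into two regimes according to the size of $|x|$.

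For $|x|\ge \delta$, the hypothesis gives $|g(x)|\le |x|+\delta\le 2|x|$ immediately. The work is in the inner regime $|x|<\delta$, where the restriction $\delta\le \min\{-a,b\}/4$ places $x\in[a/4,b/4]$. Using $g(0)=g'(0)=0$, Taylor's theorem with Lagrange remainder gives
\[
 |g(x)|\le \tfrac{x^2}{2}\max_{z\in[0,x]}|g''(z)|,
\]
so it suffices to bound $\max|g''|$ on a neighborhood of $0$ by something small enough that this quadratic estimate is at most $2|x|$ throughout $(-\delta,\delta)$.

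To get such a bound I would mimic the strategy behind Lemma~\ref{lem:lbcruc}: set $h(x)=g(x)/x$, which is a degree-$(L-1)$ polynomial with $h(0)=g'(0)=0$, and use the identity $g''(x)=2h'(x)+xh''(x)$. Applying Markov Brothers' inequality (Theorem~\ref{th:mbi}) and Bernstein's inequality (Lemma~\ref{lem:bernder}) to $h$ then controls $|h'|$ and $|h''|$ in terms of $\max_{[a,b]}|h|$. Combining with the identity and Taylor would yield, on a suitable subneighborhood of $0$, the two bounds $|g(x)|\lesssim \max_{[a,b]}|h|\cdot L^2 x^2/(b-a)$ and $|g(x)|\lesssim \max_{[a,b]}|h|\cdot L x^2/\sqrt{|ab|}$. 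If $\max_{[a,b]}|h|$ is an absolute constant, the hypothesis $\delta<\max\{(b-a)/(16L^2),\sqrt{|ab|}/(8L)\}$ is precisely calibrated so that at least one of these two bounds is $\le 2|x|$ for every $|x|<\delta$, closing the argument.

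The main obstacle, and the step that requires the most care, is bounding $\max_{[a,b]}|h|$. Naively the hypothesis only gives the pointwise estimate $|h(x)|\le 1+\delta/|x|$, which yields $|h|\le 2$ on the outer set $\{|x|\ge \delta\}$ but blows up as $x\to 0$. To transfer the outer bound to the inner interval $(-\delta,\delta)$ I would invoke a Remez-type polynomial-extension inequality: since $h$ has degree $L-1$ and is bounded by $2$ on a subset of $[a,b]$ whose complement has measure at most $2\delta$, and since the hypothesis forces $2\delta$ to be small compared to the Chebyshev length scale $(b-a)/L^2$ (respectively $\sqrt{|ab|}/L$), Remez yields $\max_{[a,b]}|h|\le C$ for an absolute constant $C$. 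The remaining work is bookkeeping: the slack provided by the factors $16$ and $8$ in the hypothesis must be shown to absorb both $C$ and the constants that arise from Markov Brothers' and Bernstein when passing from $\max|h|$ to $\max|g''|$ via $g''=2h'+xh''$.
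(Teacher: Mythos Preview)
Your reduction to $\gamma=1$ and the split into $|x|\ge\delta$ versus $|x|<\delta$ are fine, and in the Markov regime $\delta<(b-a)/(16L^2)$ the standard Remez inequality does give $\max_{[a,b]}|h|\le C$ for an absolute constant, after which the rest of your plan goes through. The gap is in the Bernstein regime. When $\delta<\sqrt{|ab|}/(8L)$ but $\delta\gtrsim (b-a)/L^2$, the classical Remez bound is $T_{L-1}\bigl(\tfrac{(b-a)+2\delta}{(b-a)-2\delta}\bigr)\asymp\cosh\bigl((L-1)\sqrt{4\delta/(b-a)}\bigr)$, which is \emph{not} $O(1)$---for instance with $a=-1$, $b=1$, $\delta\sim 1/L$ it grows like $\cosh(\sqrt{L})$. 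Your appeal to a ``Chebyshev length scale $\sqrt{|ab|}/L$'' would require a localized Remez inequality sensitive to the position of the excluded interval; such refinements exist (two-interval Chebyshev polynomials, Totik-type bounds) but are not the standard Remez statement, and making them work uniformly for asymmetric $[a,b]$ with the specific constants is nontrivial additional work, not ``bookkeeping.''

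The paper avoids this obstacle entirely by a short bootstrapping argument. Rather than bound $\max_{[a,b]}|h|$, set $r:=\max_{|x|\le\delta/\gamma}|g(x)|/|x|$ (well-defined since $g(0)=g'(0)=0$) and assume for contradiction $r\ge 1$. Then $\tilde g:=g/r$ satisfies $|\tilde g(x)|\le|x|$ on $|x|\le\delta/\gamma$ by definition of $r$, and on $|x|\ge\delta/\gamma$ the hypothesis plus $r\ge1$ give $|\tilde g(x)|\le|g(x)|\le\gamma|x|+\delta\le 2\gamma|x|$. Thus $|\tilde g|\le 2\gamma|x|$ on all of $[a,b]$, so Lemma~\ref{lem:lbcruc} applies directly to $\tilde g/(2\gamma)$ and yields, in either regime, $|\tilde g(x)|<|x|$ strictly for all $|x|\le\delta/\gamma$. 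This contradicts the definition of $r$ (equality must hold somewhere), so $r<1$, and the conclusion follows. The point is that normalizing by the worst inner ratio converts the awkward affine bound $\gamma|x|+\delta$ into the clean linear bound $2\gamma|x|$ needed by Lemma~\ref{lem:lbcruc}, without ever estimating $\max_{[a,b]}|h|$.
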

\begin{proof}
Note that $|g(x)|\le \gamma |x|+\delta$ implies that $|g(x)|\le 2\gamma|x|$ for $|x|\ge \frac{\delta}{\gamma}$.
We proceed by contradiction. For contradiction, assume 
\[
\max_{|x|\le \frac{\delta}{\gamma}}\frac{|g(x)|}{|x|} = r\ge 1.
\]

\noindent Consider the polynomial $\tilde g(x) \eqdef g(x)/r$. 
Then from the definition of $r$, for any $x$ such that $|x|\le \frac{\delta}{\gamma}$ it follows that
\begin{equation}\label{eq:assump}
    |\tilde g(x)| = |g(x)|/r \le |x|,
\end{equation}
and
\begin{equation}\label{eq:contradiction}
|\tilde g(x)| =|x|
\end{equation}
for at least one $x$ such that $|x|\le \frac{\delta}{\gamma}$.
To show the contradiction, in the reminder of the proof, we show that this is impossible.
For $x\in [a,b]\setminus[-\frac{\delta}{\gamma},\frac{\delta}{\gamma}]$,
\[
|\tilde g(x)| = \frac{|g(x)|}{r}\le |g(x)|\le  \gamma|x|+\delta \le 2\gamma |x|,  
\]
where the last inequality uses the fact that for $|x|\ge \frac{\delta}{\gamma}$ we get $\delta\le |x|$. Combining with Equation~\eqref{eq:assump}, we get $|\tilde g(x)|\le 2\gamma|x|$ for all $x\in [a,b]$.

\noindent Applying Lemma~\ref{lem:lbcruc} on $\frac{\tilde g(x)}{2\gamma}$, we get
\[
|\tilde g(x)|\le 
\begin{cases}
  2\gamma\frac{8L^2}{(b-a)}x^2 & \text{if $x\in [a/4,b/4]$ and $|x|\le \frac{3(b-a)}{L^2}$} \\
  2\gamma\frac{4L}{\sqrt{|ab|}}x^2 & \text{if $x\in [a/4,b/4]$ and $|x|\le\frac{\sqrt{|ab|}}{4L}$.}
\end{cases}
\]
Recall $0\le \delta\le \min\big\{-\frac{a}{4},\frac{b}{4}\big\}$ and $\gamma\ge 1$.
First we show a contradiction when $\delta < \frac{b-a}{16L^2}$.
In this case,  $\frac{\delta}{\gamma}\le\delta \le \frac{b-a}{16L^2}<  \frac{3(b-a)}{L^2}$, hence for any $x$ such that $|x|\le \frac{\delta}{\gamma}$ the above equation implies that
\[
|\tilde g(x)|\le 2\gamma\frac{8L^2}{(b-a)}|x|^2 \le  \gamma\frac{16L^2}{(b-a)}\cdot |x|\cdot \max |x| \le \gamma\frac{16L^2}{(b-a)}\cdot\frac{\delta}{\gamma}=  \frac{16L^2}{(b-a)}\cdot |x|\cdot \delta < |x|,
\]
where the last inequality uses $\delta < \frac{b-a}{16L^2}$.
This contradicts Equation~\eqref{eq:contradiction}. 

Next we show a contradiction when $\delta < \frac{\sqrt{|ab|}}{8L}$ by using similar steps.
In this case,  $\frac{\delta}{\gamma}\le\delta \le \frac{\sqrt{|ab|}}{8L}<  \frac{\sqrt{|ab|}}{4L}$, hence for any $x$ such that $|x|\le \frac{\delta}{\gamma}$ we get
\[
|\tilde g(x)|\le 2\gamma\frac{4L}{\sqrt{|ab|}}|x|^2 \le  \gamma\frac{8L}{\sqrt{|ab|}}\cdot |x|\cdot \max |x| \le \gamma\frac{8L}{\sqrt{|ab|}}\cdot\frac{\delta}{\gamma}=  \gamma\frac{4L}{\sqrt{|ab|}}\cdot |x|\cdot \delta < |x|,
\]
where the last inequality uses $\delta < \frac{\sqrt{|ab|}}{8L}$.
This contradicts Equation~\eqref{eq:contradiction}. 
\end{proof}
Finally, we use Lemmas~\ref{lem:lbcruc} and~\ref{lem:linerbound} to derive Lemma~\ref{lem:polgenboun}.
\\
\\
\noindent\textit{Proof of Lemma~\ref{lem:polgenboun}. }
First consider the case $\delta <\frac{(b-a)}{16L^2} $. Then 
Lemma~\ref{lem:linerbound} implies
$
g(x)\le 2\gamma|x|.
$
Applying Lemma~\ref{lem:lbcruc} gives
\[
|g(x)|\le 2\gamma\frac{8L^2}{(b-a)}x^2.
\]
To prove the first bound in the lemma we show that if $|x|< \sqrt{\delta\cdot \frac{b-a}{32L^2}}$ then $g(x)< (\gamma-1)|x|+\delta$.

\noindent First we show it when $\gamma \le 2$. In this case, for all $x$ such that $|x|< \sqrt{\delta\cdot \frac{b-a}{32L^2}}$,
\[
|g(x)|\le 2\gamma\frac{8L^2}{(b-a)}x^2<\frac\gamma2\delta<\delta.
\]

\noindent Then, we turn to the case $\gamma \ge 2$. In this case, for all $x$ such that $|x|< \sqrt{\delta\cdot \frac{b-a}{32L^2}}$,
\[
|g(x)|\le 2\gamma\frac{8L^2}{b-a}x^2<\frac\gamma2|x|<(\gamma-1)|x|,
\]
where step the second inequality uses $|x|< \frac{b-a}{32L^2}$, which follows since $|x|< \sqrt{\delta\cdot \frac{b-a}{32L^2}}$ and $\delta< \frac{b-a}{32L^2}$, and the last inequality uses $\gamma/2<(\gamma-1)$, since $\gamma\ge 2$.
This completes the proof of the first upper bound in the lemma.
The second upper bound in the lemma can be shown using similar steps.\hfill\qedsymbol

\section*{Acknowledgements}
The authors would like to thank Sivaraman Balakrishnan for helpful pointers on the Statistics literature.

\bibliographystyle{alpha-all}
\bibliography{biblio}

\newpage

\appendix
\section{Details on the ``splitting'' operation}~\label{app:splitting}
Given an explicit reference distribution $q$ over $[n]$, we describe the splitting operation with respect to $q$, as introduced in~\cite{DiakonikolasK16}. 
For $i\in [n]$, let $a_i \eqdef 1+\lfloor{nq_i} \rfloor$ and $D \eqdef \{(i,j) : i\in [n], j\in[a_i]\}$.
The splitting operation with respect to $q$ maps any given distribution $p$ over $[n]$ to a new distribution $p^{S(q)}$ over the new domain $D$ such that the new distribution $p^{S(q)}$ assigns the probability $\frac{p_i}{a_i}$ to element $(i,j)$. 

We note a few properties of the splitting operation: 
\begin{enumerate}
    \item The new domain is at most twice as large: indeed, $|D|= \sum_{i=1}^n (1+\lfloor nq_i\rfloor) \leq \sum_{i=1}^n (nq_i+1) =2n$.
    \item If $q$ is known, then $m$ i.i.d.\ samples from an unknown distribution $p$ can be used to simulate the $m$ i.i.d.\ from $p^{S(q)}$, by (independently for each) mapping a sample $i\in[n]$ to a $(i,j)$, for $j$ chosen uniformly at random in $[a_i]$.
    \item The resulting distribution obtained by applying splitting operation w.r.t. $q$ on itself has small $\lp{2}$ norm,
    \[
        \norm{q^{S(q)}}_2^2 = \sum_{(i,j)\in D}(q^{S(q)}_{i,j})^2 = \sum_{i=1}^n \sum_{j\in[a_i]} \mleft(\frac{q_i}{a_i}\mright)^2
        = \sum_{i=1}^n \frac{q_i^2}{a_i}
        \leq \sum_{i=1}^n \frac{q_i}{n} = \frac{1}{n} \leq \frac{2}{|D|}\,.
    \]
    \item The pairwise $\lp{1}$ (and thus total variation) distances between any two distributions $p$ and $p'$ are preserved after the splitting operation, namely for any distributions $p,p'$ over $[n]$,
    \[
    \norm{p-p'}_1 = \norm{p^{S(q)}-p'^{S(q)}}_1\,;
    \]
    this follows from observing that
    \[
      \norm{p^{S(q)}-p'^{S(q)}}_1  = \sum_{i=1}^n \sum_{j\in[a_i]} |p^{S(q)}_{i,j}-p'^{S(q)}_{i,j}|
      = \sum_{i=1}^n \sum_{j\in[a_i]} \Big|\frac{p(i)}{a_i}-\frac{p'(i)}{a_i}\Big|
      = \sum_{i=1}^n |p_i-p'_i|.
    \]
\end{enumerate}

\subsection{Proofs of Lemmas~\ref{lem:ficon} and~\ref{lem:ficon3}}\label{app:conlem}

The following standard bound on the concentration of Poisson random variables will be useful:
\begin{theorem}\label{th:poicon}
Let $X\sim \operatorname{Poi}(\lambda)$ be a Poisson random variable for some $\lambda> 0 $. Then for any $x>0$,
\[
\Pr\big[ |X-\lambda| \ge x \big] \le \exp\!\bigg(-\Omega\!\Big(\min\Big\{x,\frac{x^2}{\lambda}\Big\}\Big)\bigg).
\]
\end{theorem}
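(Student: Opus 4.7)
The plan is to establish this bound via the standard Chernoff/Bernstein-style argument applied to the Poisson moment generating function. Recall that for $X \sim \operatorname{Poi}(\lambda)$ we have the exact MGF $\mathbb{E}[e^{tX}] = \exp(\lambda(e^t - 1))$, so the centered MGF is $\mathbb{E}[e^{t(X-\lambda)}] = \exp(\lambda(e^t - 1 - t))$. I will handle the upper tail $\Pr[X - \lambda \geq x]$ and the lower tail $\Pr[X - \lambda \leq -x]$ separately, then take a union bound; the lower tail is strictly easier since $e^{-t} - 1 + t \leq t^2/2$ for $t \in [0,1]$, giving a clean sub-Gaussian bound $\exp(-x^2/(2\lambda))$ after optimizing.

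For the upper tail, by Markov's inequality applied to $e^{t(X-\lambda)}$ for $t > 0$, I obtain
\[
\Pr[X - \lambda \geq x] \leq \exp(-tx + \lambda(e^t - 1 - t)).
\]
Optimizing over $t > 0$ yields $e^t = 1 + x/\lambda$, and substituting back gives the sharp Chernoff bound
\[
\Pr[X - \lambda \geq x] \leq \exp\bigl(-\lambda\,h(x/\lambda)\bigr), \qquad h(u) := (1+u)\log(1+u) - u.
\]
The final step is the standard elementary inequality $h(u) \geq \frac{u^2}{2+2u/3}$ for $u \geq 0$ (Bennett's bound), which gives $\lambda h(x/\lambda) \geq \frac{x^2}{2\lambda + 2x/3}$. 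This quantity is $\Omega(x^2/\lambda)$ when $x \leq \lambda$ and $\Omega(x)$ when $x > \lambda$, which is exactly $\Omega(\min\{x, x^2/\lambda\})$ as claimed.

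The main obstacle, if any, is just choosing a clean way to present the two-regime behavior: I would either invoke Bennett's inequality directly as a black box, or prove the quadratic lower bound on $h$ by a one-line calculus check (its second derivative is $1/(1+u) \geq 1/(1+u)$ and integrating twice). There is nothing deep here beyond the optimization of the Chernoff bound, and no use of the Poissonization machinery from the rest of the paper is needed; the statement is purely a concentration inequality for a single Poisson random variable. I would close by combining the upper and lower tail estimates and absorbing the factor of $2$ into the $\Omega(\cdot)$.
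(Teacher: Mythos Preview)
Your proof is correct; this is exactly the standard Chernoff/Bennett argument for Poisson tails. The paper itself does not prove this statement at all~--~it simply cites it as ``the following standard bound on the concentration of Poisson random variables'' and moves on~--~so there is no paper-proof to compare against. Your approach (MGF, optimize $t$, lower-bound $h(u)=(1+u)\log(1+u)-u$ by $\frac{u^2}{2+2u/3}$) is precisely the canonical derivation one would give if asked to supply the missing proof.
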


Next, we prove Lemma~\ref{lem:ficon}.
\begin{proof}
First we prove the lemma for the simpler of the two cases when $m< n$ and then later for $m\ge n$.

\paragraph{Proof for the regime $m< n$:} 
\begin{align*}
\Pr\big[ \widehat f_{i} > t f_i \big] &\overset{\rm(a)}=  \Pr\big[ \max\{X_i+Y_i,1\} > t \max\{m(p_i+q_i),1\}  \big] \\
&=  \Pr\big[ X_i+Y_i > t \max\{m(p_i+q_i),1\}  \big] \\
&=  \Pr\big[ X_i+Y_i - m(p_i+q_i) > t \max\{m(p_i+q_i),1\} -m(p_i+q_i) \big] \\
&=  \Pr\big[ X_i+Y_i - m(p_i+q_i) > (t-1) \max\{m(p_i+q_i),1\}\big] \\
&\overset{\rm(b)}\le \exp\mleft(-\Omega\mleft(\min\mleft\{(t-1)\max\{m(p_i+q_i),1\},\frac{(t-1)^2\max\{m(p_i+q_i)^2,1\}}{\max\{m(p_i+q_i),1\}}\mright\}\mright)\mright)\\
&\le \exp\mleft(-\Omega\mleft(\min\mleft\{(t-1),(t-1)^2\mright\}\mright)\mright)\\
&\overset{\rm(c)}=\exp\mleft(-\Omega\mleft(t\mright)\mright),
\end{align*}
here (a) uses definition of $f_i$, (b) uses the fact that $\tilde X+\tilde Y\sim \operatorname{Poi}(mp_i+mq_i)$ and the Poisson concentration bound in Theorem~\ref{th:poicon}, and (c) uses $t>1$. 

Next we prove the lemma for the other case when $m\ge n$.

\paragraph{Proof for the regime $m\ge n$:} 
We first bound $\Pr\big[ \widehat f_{i} > t f_i \big]$ by sum of three different terms, and then later we bound each term one by one.
  \begin{align*}
       &\Pr\big[ \widehat f_{i} > t f_i \big] \\
       &=\Pr\Bigg[ \max\bigg\{\frac{|\tilde{X}_i-\tilde{Y}_i|}{\sqrt{m/n}},\frac{\tilde{X}_i+\tilde{Y}_i}{m/n}, 1 \bigg\} > t f_i\Bigg]\\
       &\overset{\rm(a)}\le \Pr\Bigg[ \frac{|\tilde{X}_i-\tilde{Y}_i|}{\sqrt{m/n}} > t f_i\Bigg]+\Pr\Bigg[ \frac{\tilde{X}_i+\tilde{Y}_i}{m/n} > tf_i \Bigg]\\
       &\overset{\rm(b)}\le \Pr\Bigg[ \frac{|\tilde{X}_i- mp_i|+ |\tilde{Y}_i-m q_i|+|mp_i-mq_i|}{\sqrt{m/n}} > tf_i \Bigg]+\Pr\Bigg[ \frac{\tilde{X}_i+\tilde{Y}_i}{m/n} > tf_i \Bigg]\\
       &= \Pr\Big[{|\tilde{X}_i- mp_i|+ |\tilde{Y}_i-m q_i|} > t f_i{\sqrt{m/n}} -m |p_i-q_i| \Big]+\Pr\Big[ {\tilde{X}_i+\tilde{Y}_i} > t f_i \frac{m}{n} \Big]\\
       &\le \Pr\Big[{\max\{|\tilde{X}_i- mp_i|, |\tilde{Y}_i-m q_i|\}} > \frac{t f_i{\sqrt{m/n}} -m |p_i-q_i| }2\Big]+\Pr\Big[ {\tilde{X}_i+\tilde{Y}_i} > t f_i \frac{m}{n} \Big]\\
       &    \overset{\rm(c)}\le \Pr\Big[ {|\tilde{X}_i- mp_i|} > \frac{tf_i\sqrt{m/n}-m |p_i-q_i|}{2} \Big]+\Pr\Big[ {|\tilde{Y}_i-m q_i|} > \frac{tf_i\sqrt{m/n}+-m |p_i-q_i|}{2} \Big]\\
       &+\Pr\Big[ {\tilde{X}_i+\tilde{Y}_i} > tf_i\frac{m}{n} \Big],\numberthis \label{eq:upcon1}
   \end{align*}
where inequalities (a) and (c) use union bound and (b) uses triangle inequality.

\noindent To obtain an upper bound we bound each term in the above equation. 
Next, we bound the first term. 
\begin{align*}
&\Pr\Big[ {|\tilde{X}_i- mp_i|} > \frac{t}{2}f_i  \sqrt{m/n}-\frac{m |p_i-q_i|}{2} \Big] \\
&    \overset{\rm(a)}= 
\Pr\Big[ {|\tilde{X}_i- mp_i|} > \frac{t}{2} \max\{m|p_i-q_i|, \sqrt{mn}(p_i+q_i), \sqrt{m/n}\}-\frac{m |p_i-q_i|}{2} \Big]\\
&    \overset{\rm(b)}\le  
\Pr\Big[ {|\tilde{X}_i- mp_i|} > \frac{tm|p_i-q_i| + t \sqrt{mn} (p_i+q_i)+ t \sqrt{m/n}}{6} -\frac{m |p_i-q_i|}{2} \Big]\\
&= 
\Pr\Big[ {|\tilde{X}_i- mp_i|} > \frac{(t-3)m|p_i-q_i| + t \sqrt{mn} (p_i+q_i)+ t \sqrt{m/n}}{6} \Big]\\
&    \overset{\rm(c)}\le 
\Pr\Big[ {|\tilde{X}_i- mp_i|} > \frac{t \sqrt{mn} (p_i+q_i)+ t \sqrt{m/n}}{6} \Big]\\
&    \overset{\rm(d)}\le 2\exp\Bigg(- \Omega\bigg(\min\bigg\{\frac{t \sqrt{mn} (p_i+q_i)+ t \sqrt{m/n}}{6}, \frac{(t \sqrt{mn} (p_i+q_i)+ t \sqrt{m/n})^2}{36 mp_i}\bigg\}\bigg)\Bigg)\\
& \le 2\exp\Bigg(- \Omega\bigg(\min\bigg\{\frac{ t \sqrt{m/n}}6, \frac{t^2 mn (p_i+q_i)^2+t^2\cdot (m/n)}{36 mp_i}\bigg\}\bigg)\Bigg)\\
&\le 2\exp\Bigg(- \Omega\bigg(\min\bigg\{\frac{ t \sqrt{m/n}}6, \frac{t^2 n p_i}{36}+\frac{t^2}{ 36 np_i}\bigg\}\bigg)\Bigg)\\
&    \overset{\rm(e)}\le \exp(-\Omega(t)),
\end{align*}
where (a) uses the definition of $f_i$, (b) follows from the fact that $\frac{a+b+c}{3}\le\max\{a,b,c\}$, inequality (c) uses $t\ge 3$, inequality (d) uses the fact that $\tilde X\sim \operatorname{Poi}(mp_i)$ and the Poisson concentration bound in Theorem~\ref{th:poicon}, and finally (e) uses $m\ge n$ and the fact that $x+1/x \ge 2$ for any $x\ge 0$.

\noindent Note that because of the symmetry the above bound will also apply on the second term, namely 
\[
Pr\Big[ {|\tilde{X}_i- mq_i|} > \frac{t}{2}f_i  \sqrt{m/n}-\frac{m |p_i-q_i|}{2} \Big]\le \exp(-\Omega(t)).
\]

\noindent Next we bound the last term in Equation~\eqref{eq:upcon1} to complete the proof of the first concentration inequality. 
\begin{align*}
&\Pr\Big[ {\tilde{X}_i+\tilde{Y}_i} > tf_i\frac{m}{n} \Big]  \\
&\overset{\rm(a)}=\Pr\Big[ {\tilde{X}_i+\tilde{Y}_i} > t\max\big\{\frac{m^{3/2}}{n^{1/2}}|p_i-q_i|,m(p_i+q_i),\frac{m}{n}\big\} \Big]  \\
&\le \Pr\Big[ {\tilde{X}_i+\tilde{Y}_i} > t\max\big\{m(p_i+q_i),\frac{m}{n}\big\} \Big]  \\
&\overset{\rm(b)}\le \Pr\Big[ {\tilde{X}_i+\tilde{Y}_i}-m(p_i+q_i) > \frac{tm(p_i+q_i)+t\frac{m}{n}}{2}-m(p_i+q_i) \Big]\\
&= \Pr\Big[ {\tilde{X}_i+\tilde{Y}_i}-m(p_i+q_i) > \frac{(t-2)m(p_i+q_i)+t\frac{m}{n}}{2} \Big] \\
&\overset{\rm(c)}\le 2\exp\Bigg(- \Omega\bigg(\min\bigg\{\frac{(t-2)m(p_i+q_i)+t\frac{m}{n}}{2} , \frac{((t-2)m(p_i+q_i)+t\frac{m}{n})^2}{4m(p_i,q_i)} \bigg\}\bigg)\Bigg)\\
&\overset{\rm(d)}\le 2\exp\Bigg(- \Omega\bigg(\min\bigg\{t\frac{m}{2n} , \frac{((t-2)m(p_i+q_i)+(t-2)\frac{m}{n})^2}{4m(p_i+q_i)} \bigg\}\bigg)\Bigg)\\
&\overset{\rm(e)}\le 2\exp\Bigg(- \Omega\bigg(\min\bigg\{t\frac{m}{2n} , \frac{(t-2)^2m(p_i+q_i)}4+ \frac{(t-2)^2}{4 m (p_i+q_i)} \bigg\}\bigg)\Bigg)\\
&\overset{\rm(f)}\le 2\exp\Bigg(- \Omega\bigg(\min\bigg\{\frac{t}{2} , \frac{(t-2)^2 }2 \bigg\}\bigg)\Bigg)\\
&\le \exp(-\Omega(t)),
\end{align*}
where (a) uses the definition of $f_i$, (b) uses the fact that $\frac{a+b}{2}\le\max\{a,b\}$, inequality (c) uses the fact that $\tilde X+\tilde Y\sim \operatorname{Poi}(mp_i+mq_i)$ and the Poisson concentration bound in Theorem~\ref{th:poicon}, inequality (d) uses $t\ge 3$, inequality (e) uses the fact that for $a,b>0$, $(a+b)^2\le a^2+b^2$, and finally (f) uses $m\ge n$ and the fact that $x+1/x \ge 2$ for any $x\ge 0$.

Combining the bounds on all three terms in Equation~\eqref{eq:upcon1} proves the Lemma.
\end{proof}

Finally, we prove Lemma~\ref{lem:ficon3}.
\begin{proof}

First we prove the lemma for the simpler of the two cases when $m< n$ and then later for $m\ge n$.

\paragraph{Proof for the regime $m< n$:} Based on the value of $f_i$, we further divide in two cases, and for both cases we show one by one that the concentration inequality holds.
\begin{enumerate}
    \item \textbf{Case 1:} $f_i=1\ge m(p_i+q_i)$.\\
    Since $\widehat f_i\ge 1$ then $\widehat f_i\ge f_i$, hence for any $t>1$ we have $\Pr\big[ \widehat f_{i} < \frac{1}{t} f_i \big]=0$.
    
    \item \textbf{Case 2:} $f_i=m(p_i+q_i)\ge 1$.\\
    Note for $t\ge f_i$ the inequality $ \Pr\Big[\widehat f_i <  \frac{f_i}t\Big]=0$ trivially holds as $\widehat f_i \ge 1$.\\
    For $t< f_i$ 
    \begin{align*}
    \Pr\mleft[ \widehat f_{i} < \frac{f_i}{t} f_i \mright] &\overset{\rm(a)}=  \Pr\mleft[ \max\{X_i+Y_i,1\} < \frac{f_i}{t} \mright] \\
    &\le  \Pr\mleft[ X_i+Y_i< \frac{f_i}{t}   \mright] \\
    &=  \Pr\mleft[ m(p_i+q_i) - X_i+Y_i  < m(p_i+q_i)     -\frac{f_i}{t} \mright] \\
    &\overset{\rm(b)}=  \Pr\mleft[ m(p_i+q_i) - X_i+Y_i  < \mleft(1-\frac{1}{t}\mright) f_i  \mright] \\
    &\overset{\rm(c)}=  \Pr\mleft[ m(p_i+q_i) - X_i+Y_i  < \frac{f_i}2  \mright] \\
    &\overset{\rm(d)}\le     \exp\mleft(-\Omega\mleft(\min\mleft\{\frac{f_i}2,\frac{f^2_i}{4(m(p_i+q_i)}\mright\}\mright)\mright)\\
    &\overset{\rm(e)}= \exp\mleft(-\Omega\mleft(\min\mleft\{\frac{f_i}2,\frac{f_i^2}{4}\mright\}\mright)\mright)\\
    &\overset{\rm(f)}= \exp\mleft(-\Omega\mleft(t\mright)\mright)\\
    \end{align*}
    where (a) uses the definition of $\widehat f_i$, (b) uses the fact that $f_i =m(p_i+q_i)$, inequality (c) uses $t\ge 2$, inequality (d) uses the fact that $\tilde X+\tilde Y\sim \operatorname{Poi}(mp_i+mq_i)$ and the Poisson concentration bound in Theorem~\ref{th:poicon}, inequality (e) uses $f_i = m(p_i+q_i)$, and finally (f) uses $f_i\ge  t$ and $t>1$.
\end{enumerate}

Next we prove the lemma for the other case when $m\ge n$.

\paragraph{Proof for the regime $m\ge n$:} Based on the value of $f_i$, we further divide in three cases, and for each of the three cases we show one by one that the concentration inequality holds.
\begin{enumerate}
    \item \textbf{Case 1:} $f_i= 1 \ge \max\{n \cdot  (p_i+q_i), \sqrt{mn}\cdot |p_i-q_i|]\}$.\\ 
    In this case $p_i+q_i\le \frac{1}{n} $ and $|p_i-q_i|\le \frac{1}{\sqrt{mn}}$.\\
    Since $\widehat f_{i}\ge 1$, then
    \[
    \Pr\Big[\widehat f_i <  \frac{f_i}t \Big]\le \Pr\Big[\widehat f_i <  f_i \Big] = 0. 
    \]
    
    \item \textbf{Case 2:} $f_i =n \cdot  (p_i+q_i)  \ge \max\{1, \sqrt{mn}\cdot |p_i-q_i|]\}$.\\
    In this case $p_i+q_i=\frac{f_i}{n}$ and $|p_i-q_i|\le \frac{f_1}{\sqrt{mn}}$.\\
    Note for $t\ge f_i$ the inequality $ \Pr\Big[\widehat f_i <  \frac{f_i}t\Big]=0$ trivially holds as $\widehat f_i \ge 1$. For $t< f_i$ 
    \begin{align*}
        \Pr\Big[\widehat f_i <  \frac{f_i}t\Big] &\overset{\rm(a)}\le \Pr\Big[ \frac{n}{m}(X_i+Y_i) \le \frac{f_i}{t} \Big]\\
        &= \Pr\Big[(X_i+Y_i) \le \frac{mf_i}{nt} \Big]\\
        &= \Pr\Big[m(p_i+q_i)- (X_i+Y_i) \ge m(p_i+q_i)-\frac{mf_i}{nt} \Big]\\
        &\overset{\rm(b)}= \Pr\Big[m(p_i+q_i)- (X_i+Y_i) \ge \frac{mf_i}{n}\big(1-\frac{1}{t}\big) \Big]\\
        &\overset{\rm(c)}\le  \Pr\Big[m(p_i+q_i)- (X_i+Y_i) \ge \frac{mf_i}{2n} \Big]\\
        &\overset{\rm(d)}\le 2\exp\Bigg(- \Omega\bigg(\min\bigg\{\frac{mf_i}{2n} , \frac{m^2f_i^2}{4n^2\cdot m(p_i+q_i)} \bigg\}\bigg)\Bigg)\\
        &\overset{\rm(e)}\le 2\exp\Bigg(- \Omega\bigg(\min\bigg\{\frac{mf_i}{2n} , \frac{m^2 f_i n(p_i+q_i)}{4n^2\cdot m(p_i+q_i)} \bigg\}\bigg)\Bigg)\\
        &\le 2\exp\Bigg(- \Omega\bigg(\frac{mf_i}{4n} \bigg)\Bigg)\\
        &\overset{\rm(f)}\le \exp\big(- \Omega(t)\big),
    \end{align*}
    where (a) uses the definition of $\widehat f_i$, (b) uses the fact that $p_i+q_i=f_i/n$ for Case 2, inequality (c) uses $t\ge 2$, inequality (d) uses the fact that $\tilde X+\tilde Y\sim \operatorname{Poi}(mp_i+mq_i)$ and the Poisson concentration bound in Theorem~\ref{th:poicon}, inequality (e) uses $f_i = n(p_i+q_i)$, and finally (f) uses $m\ge n$ and $f_i>t$.
    
    \item \textbf{Case 3:} $f_i = \sqrt{mn}\cdot |p_i-q_i| \ge \max\{1, n \cdot  (p_i+q_i)\}$.\\ 
    In this case $ |p_i-q_i|= f_i/\sqrt{mn}$ and $(p_i+q_i)\le f_i/n$.\\
    Note for $t\ge f_i$ the inequality $ \Pr\Big[\widehat f_i <  \frac{f_i}t\Big]=0$ trivially holds as $\widehat f_i \ge 1$. For $t< f_i$
    \begin{align*}
        \Pr\Big[\widehat f_i <  \frac{f_i}t\Big] &\overset{\rm(a)}\le \Pr\Big[ \frac{|\tilde X_i-\tilde Y_i|}{\sqrt{m/n}} \le \frac{f_i}{t} \Big]\\
        &\overset{\rm(b)}\le \Pr\Big[ \frac{|mp_i-mq_i|-|\tilde X_i-mp_i-\tilde Y_i+mq_i|}{\sqrt{m/n}} \le \frac{f_i}{t} \Big]\\
        &= \Pr\Big[ \frac{|\tilde X_i-mp_i-\tilde Y_i+mq_i|}{\sqrt{m/n}} \ge \sqrt{mn}|p_i-q_i|- \frac{f_i}{t} \Big]\\
        &\overset{\rm(c)}= \Pr\Big[ \frac{|\tilde X_i-mp_i-\tilde Y_i+mq_i|}{\sqrt{m/n}} \ge f_i\big(1- \frac{1}{t}\big) \Big]\\
        &\overset{\rm(d)}\le \Pr\Big[ {|\tilde X_i-mp_i-\tilde Y_i+mq_i|} \ge \sqrt{\frac{m}{n}}\cdot\frac{ f_i}2 \Big]\\
        &\overset{\rm(e)}\le \Pr\Big[ \max\{|\tilde X_i-mp_i|,|\tilde Y_i-mq_i|\} \ge \sqrt{\frac{m}{n}}\cdot\frac{ f_i}4 \Big]\\
        &\overset{\rm(f)}\le \Pr\Big[ |\tilde X_i-mp_i|\ge \sqrt{\frac{m}{n}}\cdot\frac{ f_i}4 \Big]+ \Pr\Big[ |\tilde Y_i-mq_i|\ge \sqrt{\frac{m}{n}}\cdot\frac{ f_i}4 \Big]\\
        &\overset{\rm(g)}\le 2\exp\Bigg(- \Omega\bigg(\min\bigg\{\sqrt{\frac{m}{n}}\cdot\frac{ f_i}4 , \frac{mf_i^2}{16n\cdot mp_i} \bigg\}\bigg)\Bigg)+2\exp\Bigg(- \Omega\bigg(\min\bigg\{\sqrt{\frac{m}{n}}\cdot\frac{ f_i}4 , \frac{mf_i^2}{16n\cdot mq_i} \bigg\}\bigg)\Bigg)\\
        &\overset{\rm(h)}\le \exp\big(- \Omega(t)\big),
    \end{align*}
    where (a) uses the definition of $\widehat f_i$, inequality (b) follows from the triangle inequality, inequality (c) uses the fact that $\sqrt{mn}|p_i-q_i|=f_i$ for Case 3, inequality (d) uses $t\ge 2$, inequality (e) uses the fact that $|x-y|\le 2 \max\{|x|,|y|\}$, inequality (f) uses union bound, inequality (g)  uses the Poisson concentration bound in Theorem~\ref{th:poicon}, and finally (h) uses $m\ge n$, $f_i> n(p_i+q_i)$ and $f_i> t$.\qedhere
\end{enumerate}
\end{proof}

\section{Missing proofs from Section~\ref{sec:lb}}
    \label{app:lb:details}
\subsection{Proof of Theorem~\ref{thm:testinglb-to-moments}}
\label{sec:main-lb-conversion}

Let $\mathcal P^{(1)}$ be the joint distribution of $n$ independent copies of $U$ and $\mathcal P^{(2)}$ be the joint distribution of $n$ independent copies of $U'$, respectively. 
For $j=1,2$, let $(U_1^{(j)},\dots ,U_n^{(j)})\sim  \mathcal P^{(j)}$.
Define random vectors $D^{(j)}$ as the $\ell_1$-normalizations of these vectors:
\[D^{(j)} \eqdef \mleft(\frac{U^{(j)}_1}{\sum_{i=1}^n U_i^{(j)}},\dots,\frac{U_n^{(j)}}{\sum_{i=1}^n U_i^{(j)}}\mright).\]
Since $U_i^{(j)}\ge 0$, the vectors $D^{(j)}$ are distributions.

Let $N^{(j)}\sim\poi(m\sum_i U^{(j)}_i)$.
Let $(C^{(j)}_1,\dots,C_n^{(j)})$ be the collection of random variables, whose joint distribution conditioned on $(U^{(j)}_1,\dots,U^{(j)}_n)$ and $N^{(j)}$ is the following multinomial distribution,
\[
(C^{(j)}_1,\dots,C^{(j)}_n)\Big| \mleft(\mleft(U^{(j)}_1,\dots,U^{(j)}_n\mright), N^{(j)}\mright) \sim \text{Mult}\mleft(\mleft(\frac{U^{(j)}_1}{\sum U^{(j)}_i},\dots ,\frac{U^{(j)}_n}{\sum U^{(j)}_i}\mright), N^{(j)}\mright)\equiv \text{Mult}\mleft(D^{(j)}, N^{(j)}\mright).
\]
It follows that we can use $(C^{(j)}_1,\dots,C^{(j)}_n)$ to generate up to $N^{(j)}$ samples from $D^{(j)}$.
Observe that for $i\in [n]$, conditioned on the $U_i^{(j)}$'s, $C_i^{(j)}\sim\poi(m U_i^{(j)})$ are independent Poisson random variables.

Define the events
\[
E^{(1)} = \mleft\{ \mleft|\sum_{i=1}^n U^{(1)}_i -1\mright|\le \frac{1}{10}\mright\}\,\bigcap\, \mleft\{\sum_{i=1}^n \mleft|U^{(1)}_i -\frac{1}{n}\mright|\le 10\ve_1\mright\},
\]
and 
\[
E^{(2)} = \mleft\{ \mleft|\sum_{i=1}^n U^{(2)}_i -1\mright|\le \frac{\ve_2}{10}\mright\}\,\bigcap\, \mleft\{\sum_{i=1}^n \mleft|U^{(2)}_i -\frac{1}{n}\mright|\ge \frac{9\ve_2}{10}\mright\}.
\]
We bound the probability of the complement events $\bar{E^{(1)}}$ and $\bar{E^{(2)}}$. The following general lemma will be useful, which we prove using Chebyshev's inequality.
\begin{lemma}\label{lem:chebtigbou}
Let $X_1,\dots,X_n$ be $n$ i.i.d.\ random variables over $[a,b]$ for some $0\le a < b$ and $\E[X_i]=\mu$. Then 
\[
\Pr\mleft[ \mleft|\sum_i X_i - n\mu \mright| \ge \sqrt{10 n\mu(b-a)} \mright]\le \frac{1}{10}
\]
\end{lemma}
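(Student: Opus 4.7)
}
The plan is a direct application of Chebyshev's inequality, so the only real content is obtaining a good enough bound on $\operatorname{Var}(X_i)$ in terms of the product $\mu(b-a)$, which is the quantity appearing inside the square root.

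First I would reduce to a nonnegative, bounded-above random variable. Set $Y_i \eqdef X_i - a \in [0, b-a]$; then $\operatorname{Var}(Y_i) = \operatorname{Var}(X_i)$ and $\E[Y_i] = \mu - a$. Since $0 \le Y_i \le b-a$, we have the pointwise inequality $Y_i^2 \le (b-a) Y_i$, and taking expectations gives $\E[Y_i^2] \le (b-a)(\mu - a)$, hence
\[
\operatorname{Var}(X_i) = \operatorname{Var}(Y_i) \le (b-a)(\mu - a) - (\mu - a)^2 \le (b-a)(\mu - a) \le (b-a)\mu,
\]
where the last step uses the assumption $a \ge 0$. This is the key inequality; the fact that we are allowed to use $\mu$ rather than the worst-case spread $b-a$ is exactly what makes the bound in the statement non-trivial.

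Next, by independence, $\operatorname{Var}\bigl(\sum_{i=1}^n X_i\bigr) = n\operatorname{Var}(X_i) \le n\mu(b-a)$. Chebyshev's inequality then yields, for any $t > 0$,
\[
\Pr\!\left[\left|\sum_{i=1}^n X_i - n\mu\right| \ge t \right] \le \frac{n\mu(b-a)}{t^2}.
\]
Plugging in $t = \sqrt{10\, n\mu(b-a)}$ gives the desired bound of $1/10$, which completes the proof. There is no real obstacle here; the only point requiring care is the variance bound, and this is where the hypothesis $a \ge 0$ is used in an essential way (without it the natural bound would only give $(b-a)^2/4$, which is too weak for the intended applications of the lemma to $E^{(1)}$ and $E^{(2)}$, where $b - a$ may be much larger than $\mu$).
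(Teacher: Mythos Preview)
Your proof is correct and takes essentially the same approach as the paper: bound $\operatorname{Var}(X_i)$ by $(b-a)\mu$ via the pointwise inequality $Y^2 \le cY$ for a nonnegative bounded variable, then apply Chebyshev. The only cosmetic difference is that the paper normalizes to $Y = (X-a)/(b-a) \in [0,1]$ whereas you shift to $Y = X - a \in [0,b-a]$; both lead to the same bound $\operatorname{Var}(X_i) \le (b-\mu)(\mu-a) \le (b-a)\mu$.
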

\begin{proof} 
The following bound on the variance of a random variable will be useful. This bound has been proved in many previous works including~\cite{Bhatia00}. We provide the proof for completeness.
\begin{theorem}
\label{thm:var-bound}
Let $X$ be any random variable over $[a,b]$, then $\Var(X) \le (b-\E[X])(\E[X]-a)$.
\end{theorem}
\begin{proof}
Let $Y = \frac{X-a}{b-a}$. Note $Y\in [0,1]$.
Then 
\[
\Var(Y) = \E[Y^2]-(\E[Y])^2 \le \E[Y]-(\E[Y])^2=\E[Y](1-\E[Y]),
\]
where we used the fact $Y^2\le Y$, since $Y\in [0,1]$.
Then using the relations $\Var(Y) = \frac{\Var(X)}{(b-a)^2}$ and $\E[Y] = \frac{\E [X]-a}{b-a}$ completes the proof.
\end{proof}
\noindent From Theorem~\ref{thm:var-bound} 
\[
\Var(X_i) \le (b-\mu)(\mu-a)\le (b-a)\mu,
\]
where we used $0\le a\le\mu\le b$.
Then $\Var(\sum X_i)\le n\mu(b-a)$. From Chebyshev's inequality
\[
\Pr\mleft[ \mleft|\sum_i X_i - \sum_i \E[ X_i ] \mright| \ge \sqrt{10 n\mu(b-a)} \mright]\le \frac{1}{10}.%
\]
\end{proof}

\noindent First we bound the probability of $\bar{E^{(1)}}$. Using the union bound
\begin{align*}
\Pr[\bar{E^{(1)}}]\le \Pr\mleft[\mleft|\sum_{i=1}^n U^{(1)}_i -\frac{1}{n}\mright|\ge \frac{1}{10}\mright] + \Pr\mleft[\sum_{i=1}^n \mleft|U^{(1)}_i -\frac{1}{n}\mright|\ge 10\ve_1\mright]. 
\end{align*}
We next upper bound both these terms, starting with the former. 
Note that $U^{(1)}_i\in[a,b]$ and  $\E\mleft[U^{(1)}_i\mright] =\frac{1}{n}$. Applying Lemma~\ref{lem:chebtigbou}, we obtain
\[
\Pr\mleft[ \mleft|\sum_i U^{(1)}_i- 1\mright|\ge \sqrt{10(b-a)} \mright]\le \frac{1}{10},
\]
which, since $10(b-a)\le \frac{\ve_2^2}{100}\le \frac{1}{100}$, upper bounds the first term in the expression.

\noindent We now bound the second term using linearity of expectations and Markov's inequality,
\[
\Pr\mleft[\sum_{i=1}^n \mleft|U^{(1)}_i -\frac{1}{n}\mright|\ge 10\ve_1\mright] \le \frac{\sum_i\E\mleft[\mleft|U^{(1)}_i -\frac{1}{n}\mright|\mright]}{10\ve_1}= \frac{n\E\mleft[\mleft|U -\frac{1}{n}\mright|\mright]}{10\ve_1}\le \frac{1}{10}.
\]
Combining the bounds on both terms we get:
\[
\Pr[\bar{E^{(1)}}]\le\frac{2}{10}.
\]

\noindent Next, we bound the probability of $\bar{E^{(2)}}$.
\begin{align*}
\Pr[\bar{E^{(2)}}]\le \Pr\mleft[\mleft|\sum_{i=1}^n U^{(2)}_i -1\mright|\ge \frac{\ve_2}{10}\mright] + \Pr\mleft[\sum_{i=1}^n \mleft|U^{(2)}_i -\frac{1}{n}\mright|\le \frac{9\ve_2}{10}\mright]. 
\end{align*}
Again, note that $U^{(2)}_i\in[a,b]$ and  $\E\mleft[U^{(2)}_i\mright] =\frac{1}{n}$. Applying Lemma~\ref{lem:chebtigbou}, we obtain
\[
\Pr\mleft[ \mleft|\sum_i U^{(2)}_i- 1\mright|\ge \sqrt{10(b-a)} \mright]\le \frac{1}{10},
\]
which, since $10(b-a)\le \frac{\ve_2^2}{100}$, upper bounds the first term in the expression.

Next, we bound the second term.
Recall that random variables $U^{(2)}_i$ are independent copies of $U'$. Since $U'\in[a,b]$ and $a\le \frac{1}{n}\le b$, then $0\le \mleft|U' -\frac{1}{n}\mright|\le (b-a)$. Applying Lemma~\ref{lem:chebtigbou}, we obtain
\[
\Pr\mleft[ \sum_i \mleft|U^{(2)}_i -\frac{1}{n}\mright| \le n\E\mleft[\mleft|U' -\frac{1}{n}\mright|\mright] - \sqrt{10 n (b-a) \E\mleft[\mleft|U' -\frac{1}{n}\mright|\mright] } \mright]\le \frac{1}{10}.
\]
Since $10(b-a)\le \frac{\ve_2^2}{100}\le \frac{\ve_2}{100}$ and $\E\mleft[\mleft|U' -\frac{1}{n}\mright|\mright]\ge  \frac{\ve_2}{n}$, then
\[
n \E\mleft[\mleft|U' -\frac{1}{n}\mright|\mright] - \sqrt{10 n (b-a) \E\mleft[\mleft|U' -\frac{1}{n}\mright|\mright] } \ge
n\E\mleft[\mleft|U' -\frac{1}{n}\mright|\mright] - \frac{n\E\mleft[\mleft|U' -\frac{1}{n}\mright|\mright]}{10}\ge \frac{9\ve_2}{10} .
\]
Combining the above two equations we get
\[
\Pr\mleft[ \sum_i\mleft|U^{(2)}_i -\frac{1}{n}\mright| \le \frac{9\ve_2}{10}  \mright]\le \frac{1}{10}.
\]

\noindent Combining the bounds on both terms:
\[
\Pr[\bar{E^{(2)}}]\le\frac{2}{10}.
\]

\noindent Note that the event $E^{(1)} $ implies that 
\begin{align*}
\|D^{(1)}-\unif_n\|_1 &= \sum_{i=1}^n \mleft|\frac{U^{(1)}_i}{\sum_{i} U^{(1)}_i} -\frac{1}{n}\mright|\\
&\le\sum_{i=1}^n \mleft(\mleft|\frac{U^{(1)}_i}{\sum_{i} U^{(1)}_i} -\frac{1}{n\sum_{i} U^{(1)}_i}\mright| + \mleft|\frac{1}{n}-\frac{1}{n\sum_{i} U^{(1)}_i}\mright|\mright)\\
&=\frac{\sum_i|U^{(1)}_i-1/n|}{\sum_{i} U^{(1)}_i} + \mleft|1-\frac{1}{\sum_{i} U^{(1)}_i}\mright|\\
&=\frac{\sum_i|U^{(1)}_i-1/n|}{\sum_{i} U^{(1)}_i} + \frac{\mleft|\sum_{i} U^{(1)}_i-1\mright|}{\sum_{i} U^{(1)}_i}\\
&\le 2\cdot \frac{\sum_i|U^{(1)}_i-1/n|}{\sum_{i} U^{(1)}_i} \\
&\le 2\cdot\frac{10\ve_1}{1-1/10}\le \frac{200\ve_1}9\le 25\ve_1.\numberthis \label{eq:appdis1}
\end{align*}

\noindent Similarly, event $E^{(2)}$ implies that
\begin{align*}
\|D^{(2)} -\unif_n\|_1 &= \sum_{i=1}^n \mleft|\frac{U^{(2)}_i}{\sum_{i} U^{(2)}_i} -\frac{1}{n}\mright|\\
&\ge\sum_{i=1}^n \mleft(\mleft|\frac{U^{(2)}_i}{\sum_{i} U^{(2)}_i} -\frac{1}{n\sum_{i} U^{(2)}_i}\mright| - \mleft|\frac{1}{n}-\frac{1}{n\sum_{i} U^{(2)}_i}\mright|\mright)\\
&=\frac{\sum_i|U^{(2)}_i-1/n|}{\sum_{i} U^{(2)}_i} - \mleft|1-\frac{1}{\sum_{i} U^{(2)}_i}\mright|\\
&=\frac{\sum_i|U^{(2)}_i-1/n|}{\sum_{i} U^{(2)}_i} - \frac{\mleft|\sum_{i} U^{(2)}_i-1\mright|}{\sum_{i} U^{(2)}_i}\\
&\ge\frac{(9\ve_2/10)-(\ve_2/10)}{1+\ve_2/10}\ge \frac{8\ve_2}{11}\ge \ve_2/2, \numberthis \label{eq:appdis2}
\end{align*}
where we used the triangle inequality and $\ve_2\le 1$.

For $j=1,2$, let ${\mathcal C}^{(j)}$ denote the distribution of $(C^{(j)}_1\dots C^{(j)}_n)$, and
let ${\mathcal P}^{(j)}_{|E^{(j)}}$, ${\mathcal D}^{(j)}_{|E^{(j)}}$ and ${\mathcal C}^{(j)}_{|E^{(j)}}$ denote the distributions of $(U^{(j)}_1\dots U^{(j)}_n)$,   $D^{(j)}$, and $(C^{(j)}_1\dots C^{(j)}_n)$, respectively conditioned on the event $E^{(j)}$.

In light of Equations~\eqref{eq:appdis1} and~\eqref{eq:appdis2}, to prove the lemma it suffices to show no tester using $m/2$ samples from $p$ correctly identifies whether $p= {\mathcal D}^{(1)}_{|E^{(1)}}$ or $p={\mathcal D}^{(2)}_{|E^{(2)}}$ with probability $\ge 4/5$. To prove by contradiction, suppose there is such a tester $\mathcal{T}$.

Event $E^{(j)}$ implies that $\sum U^{(j)}_i\ge \frac{9}{10}$.
Hence, for any given $(U^{(j)}_1\dots U^{(j)}_n)\sim {\mathcal P}^{(j)}_{|E^{(j)}}$, we have 
\[
\Pr(N^{(j)}\ge m/2) = \Pr[\poi(m\sum U^{(j)}_i) \ge m/2] \ge \Pr[\poi(0.9 m) \ge m/2], 
\]
which is at least $0.95$ for $m$ larger than some absolute constant $c$.

If $N^{(j)} \ge m/2$, then we can simulate $m/2$ samples from $D^{(j)}$ using $(C^{(j)}_1\dots C^{(j)}_n)$ and use the tester $\mathcal{T}$ on these $m/2$ samples.
Hence, using this tester we can correctly identify whether $(C_1\dots C_n) \sim {\mathcal C}^{(1)}_{|E^{(1)}}$ or $(C_1\dots C_n) \sim {\mathcal C}^{(2)}_{|E^{(2)}}$ with probability
$
\ge 0.95 \mleft( \frac{4}{5}\mright)= 0.76
$.

Next, we show that the TV distance between the distributions ${\mathcal C}^{(1)}_{|E^{(2)}}$ and ${\mathcal C}^{(2)}_{|E^{(2)}}$ is small. 
\begin{align*}
&\operatorname{TV}\mleft( {\mathcal C}^{(1)}_{|E^{(1)}}, {\mathcal C}^{(2)}_{|E^{(2)}}\mright)\le \Pr[\bar{E^{(1)}}] + \operatorname{TV}\mleft( {\mathcal C}^{(1)}, {\mathcal C}^{(2)}\mright) +\Pr[\bar{E^{(2)}}]\\
&=
\operatorname{TV}\mleft( \E_{\mathcal{P}^{(1)}} \Paren{\poi (mU_1^{(1)}), \ldots, \poi(m U_n^{(1)})}, \E_{\mathcal{P}^{(2)}} \Paren{\poi (mU_1^{(1)}), \ldots, \poi(mU_n^{(1)})}\mright)+\Pr[\bar{E^{(1)}}]+\Pr[\bar{E^{(2)}}]\\
&\le n \operatorname{TV}\mleft( \E \poi (mU), \,\E \poi (mU')\mright)+\Pr[\bar{E^{(1)}}]+\Pr[\bar{E^{(2)}}]\\
&\le n\cdot  \frac 1{20n} +\Pr[\bar{E^{(1)}}]+\Pr[\bar{E^{(2)}}]\le \frac{9}{20}. 
\end{align*}
This
implies that for any tester the probability of correctly distinguishing ${\mathcal C}^{(1)}_{|E^{(1)}}$ and  ${\mathcal C}^{(2)}_{|E^{(2)}}$ is at most $\frac{1+9/20}{2}= 29/40=0.725$, which is a contradiction since $0.725 < 0.76$.

\subsection{Proof of Lemma~\ref{lemma:value:dual}}
\label{sec:strong-duality}
For any finite subset $\mathcal S$ of $\subset [-B,A]$, consider the optimization problem
\begin{align}
    \max\,  \E |U'|\mbox{ s.t. } & \E |U|\le \frac{\ve_1}{2} \mbox{ and } \nonumber\\
    & \E U^i = \E U'^i, i = 1, \ldots, L, \mbox{ and }\nonumber\\
    & U, U' \in \mathcal S \;,\label{eq:opnewsub}
\end{align}
and its dual
\begin{align*}
    \min \frac{\ve_1}2 \alpha + z_1 + z_2 \mbox{ s.t. } & z_1 + \sum_{i = 1}^L c_i p_i (x) \geq \Abs{x} \mbox{ for all $x \in \mathcal S$,}\\
    & \alpha \Abs{x} \geq \sum_{i = 1}^L c_i p_i (x) - z_2 \mbox{ for all $x \in \mathcal S$}, \mbox{ and } \\
    & \alpha \geq 0 \;. \numberthis \label{eq:dualsub}
\end{align*}

For a given $\mathcal S$, let $P_{\mathcal S}$ and $D_{\mathcal{S}}$ be the optimal solution to the primal and dual, respectively.
Since $\mathcal{S}$ is finite, the distribution of both $U$ $U'$ is a finite vector of size $\mathcal{S}$, then from the strong duality for linear programming we have $P_{\mathcal S}=D_{\mathcal{S}}$. 

Let $P$ and $D$ denote the value of optimal solution of~\eqref{eq:opnew} and~\eqref{eq:dual}, respectively. From the weak duality we have $P\le D$. 

For any $\mathcal S$, the corresponding optimization problem~\eqref{eq:opnewsub} can be obtained by imposing the constraints $\Pr[U \in [-B,A]\setminus \mathcal S] = \Pr[U' \in [-B,A]\setminus \mathcal S] =0$ in~\eqref{eq:opnew}.
Since upon imposing the additional constrains, the value of the optimal solution in~\eqref{eq:opnew} would only decrease, hence $P_{\mathcal S}\le P$.
This implies for all finite subset $\mathcal S$ of $\subset [-B,A]$, the following holds
$P\ge P_{\mathcal S} = D_{\mathcal S}$. 

Let $\mathcal S_{\delta} = \{x: x= -B+k\delta \text{ for }k\in\{0,1,\dots,\lfloor\frac{A+B}{\delta} \rfloor \}\}$. Observe that for all $\delta>0$, $\mathcal S_\delta$ is a finite subset of $ [-B,A]$.
Taking the supremum over $\mathcal S_{\delta}$ as $\delta\to 0$ ,
\[
P \ge \sup_{\delta> 0} D_{\mathcal S_\delta}.
\]
Using the continuity of functions $x^i$ and $|x|$ and elementary real analysis it can be verified that
\[
\sup_{\delta> 0} D_{\mathcal S_\delta} = D.
\]
Hence, we get $P\ge D$. Combining this with $P\le D$ proves the lemma.\hfill\qedhere

\section{Instance-optimal tolerant testing}
    \label{app:io}
In this appendix, we establish our ``instance-optimal'' tolerant identity bounds (\autoref{th:identity:io:informal}); that is, sample complexity bounds parameterized by the reference distribution $q$ itself, instead of the domain size $n$. We do so by establishing separately the lower bound (\autoref{th:lbU:io}) and upper bound (\autoref{th:ubU:io}) parts of the statement, in~\autoref{app:lbU:io} and~\autoref{app:ubU:io}.\footnote{As mentioned earlier, we slightly abuse the $\tilde{\mathcal{O}}$ and $\tilde{\Omega}$ notation in those two statements to also hide logarithmic factors in $n$, not just in the argument.}\medskip

In order to formally state our results, a few definitions will be useful. For any distribution $q$ over a set $[n]$ and any subset $S\subseteq [n]$, let $\mas{q}{S} \eqdef  \max_{i\in S} q_i$, and $\rat{q}{S}\eqdef  \left\lfloor \frac{q(S)}{2\mas{q}{S}} \right\rfloor$, where as usual $q(S) = \sum_{i\in S}q_i$. Moreover, for any $x\ge 0$, let $q_{-x}$ denote the vector obtained by iteratively removing the smallest entries from $q$ and stopping just before the sum of the removed elements exceed $x$. Finally, recall that for any integer $t\geq 1$, $\unif_t$ denotes the uniform distribution over $[t]$.

\subsection{Lower bound}
    \label{app:lbU:io}

Given a reference distribution $q$, $0\le \ve_1< \ve_2 \le 1$ and $\delta>0$, let $\samcom{q}{\ve_1}{\ve_2}{\delta}$ denote the minimum number of samples (in the Poissonized sampling model) any tester requires from an unknown distribution $p$ to correctly distinguish between $\|p-q\|_1\le \ve_1$ and $\|p-q\|_1\ge \ve_2$ with probability at least $1-\delta$.

Our main tool will be the following theorem relating the lower bound of testing uniform distributions to the lower bound of testing for general $q$.

\begin{theorem}
For any distribution $q$ over $[n]$, subset $S\subseteq[n]$ such that $\rat{q}{S}\geq 1$, $0\le \ve_1< \ve_2 $ and $\delta>0$, 
\[
\samcom{q}{\ve_1}{\ve_2}{\delta} \ge \frac{4}{q(S)}\cdot\samcom{\unif_{ \rat{q}{S}}}{\,\tfrac{4\ve_1}{q(S)}}{\tfrac{4\ve_2}{q(S)}}{\delta}.
\]
\end{theorem}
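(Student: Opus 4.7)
My approach is a sampling reduction: I will argue that any tester for tolerant identity testing against $q$ can be converted into a tester for tolerant uniformity testing on $k := \rat{q}{S}$ elements, losing only a factor $\Theta(q(S))$ in the sample complexity. The idea is, given samples from an unknown distribution $p'$ on $[k]$, to embed $p'$ inside the support of $q$ (specifically, inside $S$) while matching $q$ elsewhere, constructing a distribution $\tilde p$ on $[n]$ whose total variation distance to $q$ is proportional to $\|p'-\unif_k\|_1$. A run of the identity tester on $\tilde p$ will then answer the uniformity testing problem on $p'$.

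The first step is to partition a subset of $S$ into $k$ disjoint buckets $B_1,\dots,B_k$ by greedy packing, filling each bucket (processing elements of $S$ in any order) with elements until its cumulative $q$-mass first meets the threshold $q(S)/(4k)$. Since $k=\rat{q}{S}$ gives $2k\mas{q}{S}\le q(S)$, each element contributes at most $\mas{q}{S}\le q(S)/(2k)$, so each bucket's $q$-mass $\beta_j:=q(B_j)$ lies in $[q(S)/(4k),\,3q(S)/(4k)]$, and the total consumed mass is at most $3q(S)/4<q(S)$, so the $k$ buckets can indeed all be filled within $S$.

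Setting $c:=q(S)/4$, I then define $\tilde p$ to agree with $q$ outside $B:=\bigcup_j B_j$, while inside each $B_j$ I rescale $q$ uniformly so that $\tilde p(B_j)=\beta_j+c(p'_j-1/k)$; i.e., $\tilde p_i = q_i \cdot \tilde p(B_j)/\beta_j$ for $i\in B_j$. The lower bound $\beta_j\ge c/k$ ensures $\tilde p(B_j)\ge 0$, so $\tilde p$ is a valid distribution; and since the rescaling is proportional within each bucket, one readily obtains $\|\tilde p-q\|_1=\sum_j |c(p'_j-1/k)|=c\,\|p'-\unif_k\|_1$. For sampling, each draw from $\tilde p$ goes to $[n]\setminus B$ with probability $1-q(B)$ (sampling from the known distribution $q$ conditioned on this event) or to $B$ otherwise; expanding $\tilde p|_B(B_j) = (\beta_j - c/k)/q(B) + (c/q(B))\, p'_j$ shows that the within-$B$ draw is a mixture of a $p'$-independent distribution (with probability $1-c/q(B)$) and a single sample from $p'$ (with probability $c/q(B)$). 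Thus a sample from $p'$ is consumed per draw of $\tilde p$ with probability exactly $q(B)\cdot (c/q(B))=c=q(S)/4$, and $m$ samples from $\tilde p$ require only $\poi(mq(S)/4)$ samples from $p'$ in the Poissonized model.

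Putting everything together, any tester using $m$ samples that solves tolerant identity testing against $q$ with parameters $(\ve_1,\ve_2)$ and failure probability $\delta$ induces a tester using $\tfrac{mq(S)}{4}$ samples from $p'$ that solves tolerant uniformity testing on $[k]$ with parameters $(4\ve_1/q(S),\,4\ve_2/q(S))$ and the same failure probability; rearranging gives the claimed inequality. The main technical obstacle will be the bucket construction, since the factor of $4$ in the statement is dictated by the interplay between the packing threshold $q(S)/(4k)$ and the defining inequality $2k\mas{q}{S}\le q(S)$: any looser choice would fail to simultaneously guarantee $\tilde p\ge 0$ and $c$ of order $q(S)$. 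A secondary (standard) concern is the Poissonization/concentration bookkeeping needed to translate the expected-sample-count reduction into the honest sample complexity $\samcom$.
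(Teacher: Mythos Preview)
Your proposal is correct and follows essentially the same reduction as the paper: embed the unknown distribution on $[k]$ into $k$ buckets carved out of $S$ by rescaling $q$ within each bucket, so that $\|\tilde p - q\|_1 = \tfrac{q(S)}{4}\|p'-\unif_k\|_1$, and simulate samples via Poisson thinning/mixing. The only cosmetic differences are that the paper uses buckets of $q$-mass in $[\mas{q}{S},2\mas{q}{S})$ (rather than your $[q(S)/(4k),3q(S)/(4k)]$) and appeals directly to Poisson process properties for the simulation (rather than your explicit mixture decomposition); also, since $\samcom{\cdot}{\cdot}{\cdot}{\cdot}$ is defined in the Poissonized model, your ``secondary concern'' about concentration bookkeeping is in fact moot.
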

\begin{proof}
For any $S\subseteq[n]$ and any distribution $p$ over $[\rat{q}{S}]$, we derive a distribution $p^{\text{new}}$ over $[n]$ such that $\norm{p^{\text{new}}-q}_1 = \frac{q(S)}{4} \|p-\unif_{\rat{q}{S}}\|_1 $ and, for any $m>0$, $\operatorname{Poi}(m)$ samples from $p$ can be used to generate $\operatorname{Poi}\left({4m}/{q(S)}\right)$ samples from $p^{\text{new}}$, with the knowledge of just $q$ and not of $p$ and $p^{\text{new}}$. Then the statement of the theorem follows, since to distinguish $\|p-\unif_{\rat{q}{S}}\|_1 \le \frac{4\ve_1}{q(S)}$ and $\|p-\unif_{\rat{q}{S}}\|_1 \le \frac{4\ve_2}{q(S)}$ one can use samples from $p$ to generate samples from $p^{\text{new}}$ and test $\norm{p^{\text{new}}-q}_1 \le \ve_1$ vs $\norm{p^{\text{new}}-q}_1 \ge \ve_2$ instead.\medskip

The rest of the proof focuses on obtaining such a distribution $p^{\text{new}}$.

Fix any $S$ is a subset of $[n]$ such that $\rat{q}{S}\geq 1$, and let $\mas{q}{S} = \max_{i\in S} q_i$.
Consider a partition of $S$ into $S_1,S_2,\dots,S_\ell$ (for some $\ell\geq 1$) such that $q(S_j)\in[\mas{q}{S}, 2\mas{q}{S})$ for every $j\in[\ell]$. Such partition exists, and can be obtained by a greedy construction.
Since the mass of each $S_j$ is less than $2\mas{q}{S}$, we also have that $\ell> \frac{q(S)}{2\mas{q}{S}}\ge \rat{q}{S}$.

Given a distribution $p$ on $[\rat{q}{S}]$, we define $p^{\text{new}}$ as follows. For every $1\leq j\leq \rat{q}{S}$ (i.e., the first $\rat{q}{S}$ subsets), each element  $i\in S_j$ is given the probability  
\begin{equation}
p^{\text{new}}_i= q_i+q_i\cdot \frac{q(S)}{4q(S_j)}\mleft(p_j-\frac{1}{\rat{q}{S}}\mright)
\end{equation}
while every $i\in \bigcup_{j > \rat{q}{S}}S_j$ is assigned the probability $p^{\text{new}}_i= q_i$. Next we show that $p^{\text{new}}$ is indeed a distribution, and can be sampled given samples from $p$ and knowledge of $q$ only. 
\begin{itemize}
    \item For $i\notin \bigcup_{j\le \rat{q}{S}}S_j$, since $p^{\text{new}}_i= q_i$, we have $p^{\text{new}}_i\ge 0$.
Moreover, the count $\operatorname{Poi}(mq_i)$ can clearly be generated with the knowledge of $q, m$ only.
    \item For $j\le \rat{q}{S}$ and any element $i\in S_j$, note that
\[
p^{\text{new}}_i= q_i+q_i\frac{q(S)}{4q(S_j)}\mleft(p_j-\frac{1}{\rat{q}{S}}\mright)\ge q_i-q_i\frac{q(S)}{4q(S_j)}\cdot \frac{1}{\rat{q}{S}} \ge q_i-q_i\frac{q(S)}{4\mas{q}{S}}\cdot\frac{1}{\rat{q}{S}}\ge q_i-q_i\frac{\rat{q}{S}+1}{2\rat{q}{S}}\ge 0.
\]
Using the standard properties of Poisson processes it is easy to see that for any $j\le \rat{q}{S}$, a sample from $\operatorname{Poi}(mp_j)$ and the knowledge of $q$ suffice to generate $\operatorname{Poi}(\frac{4m}{q(S)}p^{\text{new}}_i)$ samples for each $i\in S_j$.
    \item Finally,
\[
\sum_{i\in[n]}{p^{\text{new}}_i} 
= \sum_{i\in[n]} q_i +\sum_{j\le \rat{q}{S}}q(S_j)\frac{q(S)}{4q(S_j)}\mleft(p_j-\frac{1}{\rat{q}{S}}\mright)
=1 +\frac{q(S)}{4}\mleft(\sum_{j\le \rat{q}{S}}p_j-1\mright)=1.
\]
This shows that $p^{\text{new}}$ is indeed a distribution.
\end{itemize}
To complete the proof, it only remains to relate the $\ell_1$ distances, which we do now.
\[
\sum_{i\in[n]}|{p^{\text{new}}_i}-q_i| = \sum_{j\le \rat{q}{S}}\sum_{i\in S_j} q_i\frac{q(S)}{4q(S_j)}\mleft|p_j-\frac{1}{\rat{q}{S}}\mright|=\frac{q(S)}{4}\sum_{j\le \rat{q}{S}}\mleft|p_j-\frac{1}{\rat{q}{S}}\mright|=\frac{q(S)}{4}\|p-\unif_{\rat{q}{S}}\|_1 \,,
\]
as claimed.
\end{proof}
The lower bound from Theorem~\ref{thm:lb-main} implies that for any subset $S$ such that  $\rat{q}{S}\ge 2$ and $0\le \frac{4\ve_1}{q(S)}<\frac{4\ve_2}{q(S)}\le c$ (for some universal constant $c>0$), 
\[
\samcom{\unif_{ \rat{q}{S}}}{\,\frac{4\ve_1}{q(S)}}{\frac{4\ve_2}{q(S)}}{4/5} =\Omega\mleft(\frac{q(S)\cdot\rat{q}{S}}{\log n}\Big(\frac{\ve_1}{\ve_2^2}\Big)+\frac{q(S)^2\rat{q}{S}}{\log n}\Big(\frac{\ve_1}{\ve_2^2}\Big)^2\mright).
\]

Combining this bound with the above theorem and using the observation that if $\rat{q}{S}\le 2$ then $\rat{q}{S}-2\le 0$ and $\rat{q}{S}\cdot q(S)\le 0$, we get:
\begin{corollary}\label{cor:instoptlb}
For any distribution $q$ over $[n]$, $0\le \ve_1< \ve_2 < 1 $, and some universal constant $c>0$,
\[
\samcom{q}{\ve_1}{\ve_2}{4/5} \ge \Omega\mleft(\max_{S\subseteq[n]: q(S) \ge 4\ve_2/c,}\mleft((\rat{q}{S}-2)\cdot\frac{1}{\log n}\Big(\frac{\ve_1}{\ve_2^2}\Big)+(\rat{q}{S}-2)\cdot\frac{1}{\log n}\Big(\frac{\ve_1}{\ve_2^2}\Big)^2\mright)\mright).
\]
\end{corollary}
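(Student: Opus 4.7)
The plan is to chain together the two black-box ingredients highlighted immediately before the corollary statement: the reduction theorem (which converts testing $q$ at parameters $(\ve_1,\ve_2)$ into testing $\unif_{\rat{q}{S}}$ at rescaled parameters $\tfrac{4\ve_1}{q(S)},\tfrac{4\ve_2}{q(S)}$ at a factor $\tfrac{q(S)}{4}$ cost in the sample complexity) and the uniform-testing lower bound of Theorem~\ref{thm:lb-main}, and then take the supremum over the choice of subset $S$.

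Concretely, I would fix any $S\subseteq [n]$ with $\rat{q}{S}\ge 2$, set $\alpha\eqdef \tfrac{4\ve_1}{q(S)}$ and $\beta\eqdef \tfrac{4\ve_2}{q(S)}$, and apply Theorem~\ref{thm:lb-main} to $\unif_{\rat{q}{S}}$. The applicability precondition $\beta\le c$ of that theorem rearranges to $q(S)\ge 4\ve_2/c$, which is exactly the constraint on $S$ appearing in the corollary's maximum. The single algebraic identity driving everything is
\[
\frac{\alpha}{\beta^2} \;=\; \frac{q(S)\,\ve_1}{4\,\ve_2^2},
\]
so (bounding $\log\rat{q}{S}\le \log n$, which only weakens the conclusion) Theorem~\ref{thm:lb-main} yields exactly the intermediate lower bound on $\samcom{\unif_{\rat{q}{S}}}{\alpha}{\beta}{4/5}$ displayed in the paragraph preceding the corollary. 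Multiplying this by $\tfrac{4}{q(S)}$ via the reduction theorem cancels one factor of $q(S)$ in each summand and produces a lower bound on $\samcom{q}{\ve_1}{\ve_2}{4/5}$ of the form $\Omega\bigl(\tfrac{\rat{q}{S}\,\ve_1}{\log n\,\ve_2^2} + \tfrac{\rat{q}{S}\,q(S)\,\ve_1^2}{\log n\,\ve_2^4}\bigr)$.

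Finally, to fold the side-precondition $\rat{q}{S}\ge 2$ into a single clean maximum over all $S$ with $q(S)\ge 4\ve_2/c$ (with no further restriction on $\rat{q}{S}$), I would apply the standard trick of replacing $\rat{q}{S}$ by $(\rat{q}{S}-2)$ in both summands: when $\rat{q}{S}\ge 2$ this only weakens the bound, while when $\rat{q}{S}<2$ both $(\rat{q}{S}-2)$ and $(\rat{q}{S}-2)\,q(S)$ are nonpositive, so the ensuing maximum over $S$ is automatically at least $0$ and thus still a valid lower bound regardless. Taking the supremum then yields the advertised expression. The derivation is purely mechanical once the reduction theorem and Theorem~\ref{thm:lb-main} are available as black boxes, and the only real bookkeeping task is to verify that the applicability conditions of Theorem~\ref{thm:lb-main} correspond exactly to the constraint $q(S)\ge 4\ve_2/c$ in the corollary; I do not anticipate any substantive technical obstacle.
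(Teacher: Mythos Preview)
Your approach is correct and matches the paper's exactly: apply the reduction theorem, invoke Theorem~\ref{thm:lb-main} with the rescaled parameters $\alpha=4\ve_1/q(S)$, $\beta=4\ve_2/q(S)$ (bounding $\log\rat{q}{S}\le\log n$), multiply by $4/q(S)$, and absorb the precondition $\rat{q}{S}\ge 2$ via the $(\rat{q}{S}-2)$ substitution before taking the maximum over $S$.

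One minor point worth flagging: your derivation (correctly) produces a factor of $q(S)$ on the quadratic $(\ve_1/\ve_2^2)^2$ term, which the corollary's displayed statement omits. The paper's own proof sketch carries the same factor---its parenthetical remark that ``$\rat{q}{S}\cdot q(S)\le 0$'' when $\rat{q}{S}\le 2$ (presumably intended as $(\rat{q}{S}-2)\cdot q(S)\le 0$) only makes sense if that product is the coefficient---and the downstream lemma bounding $\rat{q}{S}\cdot q(S)$ via $\|q_{-x}\|_{1/2}$ is stated and used consistently with it. So this is a typo in the corollary's display rather than a gap in your reasoning.
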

We would like to relate this bound, which involves a maximum over subsets $S$ and the quantity $\rat{q}{S}$, to a more interpretable expression involving the $0$- and $1/2$-quasinorms of $q$, as stated in~\autoref{th:identity:io:informal}. Our next two lemmas will allow us to do so.
\begin{lemma}
For any $x\in(0,1)$ such that $\norm{q_{-x}}_0> 1$, there exists some $i^*\in [n] $ such that for $A \eqdef  \{j:q_j\le q_i^*\}$ the following holds:
(i) $\max_{j\in A}q_j = q_{i^*}$,  (ii) $q(A)\ge x$ and (iii) $\frac{q(A)}{q_{i^*}}\ge \frac{\norm{q_{-x}}_0-1}{\ln (1/x)}$. 
\end{lemma}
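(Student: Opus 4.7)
The plan is to sort the entries of $q$ in decreasing order, $q_1 \ge q_2 \ge \cdots \ge q_n$, so that the greedy removal process defining $q_{-x}$ discards precisely the tail $q_{L+1},\dots,q_n$, where $L \eqdef \norm{q_{-x}}_0 \ge 2$ by hypothesis. Introduce the tail sums $S_i \eqdef \sum_{j\ge i} q_j$ and the ratios $r_i \eqdef S_i/q_i$. The stopping rule immediately gives $S_L > x$: the total removed mass $M \le x$ is such that $M + q_L > x$ (otherwise the greedy process would have removed $q_L$ too), so $S_L = q_L + M > x$. My candidate will be $i^* \in \{1,\dots,L-1\}$ chosen to maximise $r_{i^*}$. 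Condition (i) is immediate from the definition of $A = \{j : q_j \le q_{i^*}\}$, and condition (ii) follows from $q(A) \ge S_{i^*} \ge S_L > x$.

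The heart of the argument is (iii), which amounts to showing that some $r_i$ with $i \in \{1,\dots,L-1\}$ must be large. For this I use the elementary inequality $-\ln(1-t) \ge t$ for $t \in [0,1)$, applied at $t = q_i/S_i = 1/r_i$, to obtain
\[
\ln S_i - \ln S_{i+1} \;=\; -\ln\!\mleft(1 - \tfrac{q_i}{S_i}\mright) \;\ge\; \tfrac{1}{r_i}.
\]
Summing from $i=1$ to $i=L-1$, the left-hand side telescopes to $\ln S_1 - \ln S_L = -\ln S_L < \ln(1/x)$, since $S_1 = 1$ and $S_L > x$. This yields the key bound $\sum_{i=1}^{L-1} 1/r_i \le \ln(1/x)$, so by averaging across the $L-1$ terms some $i^* \in \{1,\dots,L-1\}$ satisfies $1/r_{i^*} \le \ln(1/x)/(L-1)$, i.e., $r_{i^*} \ge (L-1)/\ln(1/x) = (\norm{q_{-x}}_0 - 1)/\ln(1/x)$, giving $q(A)/q_{i^*} \ge S_{i^*}/q_{i^*} = r_{i^*}$ as required.

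Two small sanity checks remain. First, all ratios $r_i$ are well-defined: the stopping rule forces $q_L > 0$ (otherwise $q_L$ could have been removed at no cost to $M$), and by decreasing order $q_i \ge q_L > 0$ for every $i \le L$. Second, ties among the $q_i$'s can only enlarge $A$ beyond $\{i^*,\dots,n\}$, which preserves (ii) and can only increase the ratio in (iii). The main obstacle, really the only insight required, is spotting the telescoping identity $\ln(S_i/S_{i+1}) \ge 1/r_i$: it is precisely this step that converts the mild tail constraint $S_L > x$ into a logarithmic-in-$1/x$ control of the harmonic sum $\sum 1/r_i$, and hence into the claimed $(L-1)/\ln(1/x)$ lower bound on $\max_i r_i$.
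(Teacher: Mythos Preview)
Your proof is correct and follows essentially the same approach as the paper's. Both arguments exploit the same inequality (yours uses $-\ln(1-t)\ge t$, the paper uses the equivalent $1-u<e^{-u}$) to control the decay of the tail sums $S_i$. The only difference is presentational: the paper argues by contradiction, assuming all ratios $r_i$ are at most $(L-1)/\ln(1/x)$ and deriving $S_L<x$ via the multiplicative recursion $S_{i+1}\le S_i(1-\ln(1/x)/(L-1))$, whereas you give a direct proof by telescoping $\sum_{i=1}^{L-1}(\ln S_i-\ln S_{i+1})$ and averaging. Your version is arguably a bit cleaner, but the underlying idea is identical.
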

\begin{proof}
Without loss of generality, we can assume that the distribution $q=(q_1,q_2,...,q_n)$ is non-increasing, that is that $q_1\ge q_2\ge....\ge q_n$. This in particular implies that $\norm{q_{-x}}_0 = \max\{i : \sum_{j\ge i }q_j  \ge x\}$.

Note that for any $i^*\in [n]$ property (i) holds trivially, and property (ii) holds for any $i^*\le \norm{q_{-x}}_0$, as $q(A)=\sum_{j\in A}q_j= \sum_{j:q_j\le q_{i^*}}q_j\ge  \sum_{j:j\ge {i^*}}q_j\ge  \sum_{j:j\ge \norm{q_{-x}}_0}q_j\ge x$.

To complete the proof, it thus suffices to establish (iii) for some $i^* \le \norm{q_{-x}}_0$; that is, to find $i^* \le \norm{q_{-x}}_0$ such that
\[
\frac{\sum_{j\in A}q_j}{q_{i^*}}\ge \frac{\sum_{j\ge i^*}q_j}{q_{i^*}}\ge \frac{\norm{q_{-x}}_0-1}{\ln (1/x)}.
\]
Since if $\frac{\norm{q_{-x}}_0-1}{\ln (1/x)}\le 1$ this trivially holds for every $i^* \le \norm{q_{-x}}_0$, in what follows we assume $\frac{\norm{q_{-x}}_0-1}{\ln (1/x)}> 1$.

Suppose by contradiction that for every $i\le  \norm{q_{-x}}_0$, we have $\frac{\sum_{j\ge i}q_j}{q_i}\le  \frac{\norm{q_{-x}}_0-1}{\ln (1/x)} $; equivalently, that $q_i\ge \paren{\sum_{j\ge i}q_j} \frac{\ln (1/x)}{\norm{q_{-x}}_0-1} $.
Hence, $\sum_{j\ge i+1}q_j = \paren{\sum_{j\ge i}q_j}-q_i\le \paren{\sum_{j\ge i}q_j}\mleft(1-\frac{\ln (1/x)}{\norm{q_{-x}}_0-1}\mright)$. By induction, this gives
\[
\sum_{j\ge \norm{q_{-x}}_0}q_j
\le 
\Paren{\sum_{j\ge 1}q_j}\mleft(1-\frac{\ln (1/x)}{\norm{q_{-x}}_0-1}\mright)^{\norm{q_{-x}}_0-1}=\mleft(1-\frac{\ln (1/x)}{\norm{q_{-x}}_0-1}\mright)^{\norm{q_{-x}}_0-1} < e^{-\ln (1/x)} =x\,,
\]
where we used that $ 0\le 1-u< e^{-u} $ for $u\in (0,1)$.
But, by definition $\sum_{j\ge \norm{q_{-x}}_0}q_j \geq x$: this is a contradiction, concluding the proof.
\end{proof}
\begin{lemma}
For any distribution $q$ over $[n]$ and $x\in(0,1)$, 
$
\max_{S\subseteq[n]: q(S) \ge x}\rat{q}{S}\ge \frac{\|q_{-x}\|_{1/2}}{(\log (n/x)+1)^2}-4.
$
\end{lemma}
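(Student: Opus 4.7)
The plan is to dyadically bucket the effective support $\mathrm{supp}(q_{-x})$ by probability mass and extract a single bucket of large cardinality via a weighted Cauchy--Schwarz argument, rather than a naive pigeonhole. Set $K := \lceil\log_2(n/x)\rceil + 1$, and note that if $\|q_{-x}\|_{1/2} \le 4(\log(n/x)+1)^2$, then the right-hand side of the lemma is non-positive and $S = [n]$ (with $q(S) = 1 \ge x$ and $\rat{q}{[n]}\ge 0$) trivially witnesses the claim. So I may assume the RHS is positive. Introduce $\tilde S_k := \{i\in[n] : q_i \le 2^{-k}\}$ and let $k_0 := \max\{k : q(\tilde S_k)\ge x\}$. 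Two structural facts I will use: first, $k_0 \le K - 1$, since $q(\tilde S_{\lceil\log_2(n/x)\rceil})\le n \cdot (x/n) = x$; second, $\mathrm{supp}(q_{-x}) \subseteq \{i : q_i > 2^{-k_0-1}\}$, because the elements with $q_i\le 2^{-k_0-1}$ have total mass strictly less than $x$ and are therefore removed entirely by the greedy procedure defining $q_{-x}$.

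Next, partition $\mathrm{supp}(q_{-x})$ into buckets $B_k := \mathrm{supp}(q_{-x})\cap\{i : q_i \in (2^{-k-1}, 2^{-k}]\}$ for $k = 0, \ldots, k_0$ (higher-indexed buckets are empty by the second observation above), and set $a_k := \sum_{i\in B_k}\sqrt{q_i}$ so that $\sum_{k\le k_0}a_k = \sqrt{\|q_{-x}\|_{1/2}}$. The key step is the weighted Cauchy--Schwarz inequality
\[
\|q_{-x}\|_{1/2} = \mleft(\sum_{k}a_k\mright)^2 \le \mleft(\sum_{k}a_k\cdot 2^{k/2}\mright)\mleft(\sum_{k}a_k\cdot 2^{-k/2}\mright).
\]
The second factor is easy to control: for any $i\in B_k$, $\sqrt{q_i}\cdot 2^{-k/2}\le 2^{-k} < 2q_i$, giving $a_k\cdot 2^{-k/2}\le 2q(B_k)$, and therefore $\sum_k a_k\cdot 2^{-k/2}\le 2q(\mathrm{supp}(q_{-x}))\le 2$. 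It follows that $\sum_{k\le k_0}a_k\cdot 2^{k/2}\ge \|q_{-x}\|_{1/2}/2$, and by pigeonhole over the at most $k_0+1\le K$ buckets, some $k^*\le k_0$ satisfies $a_{k^*}\cdot 2^{k^*/2}\ge \|q_{-x}\|_{1/2}/(2K)$.

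I then take $S := \tilde S_{k^*}$. Monotonicity of $q(\tilde S_k)$ in $k$ together with $k^*\le k_0$ ensures $q(S)\ge x$; the bound $\sqrt{q_i}\le 2^{-k^*/2}$ on $B_{k^*}$ gives $|B_{k^*}|\ge a_{k^*}\cdot 2^{k^*/2}$; and since $B_{k^*}\subseteq S$, $\mas{q}{S}\le 2^{-k^*}$, and $q(S)\ge q(B_{k^*})\ge |B_{k^*}|\cdot 2^{-k^*-1}$, one obtains
\[
\rat{q}{S} = \mleft\lfloor \frac{q(S)}{2\mas{q}{S}}\mright\rfloor \ge \frac{|B_{k^*}|}{4} - 1 \ge \frac{\|q_{-x}\|_{1/2}}{8K} - 1.
\]
Once $K$ exceeds a small absolute constant, this lower bound exceeds $\|q_{-x}\|_{1/2}/(\log(n/x)+1)^2 - 4$; the remaining small-$K$ regime (equivalently, small $n/x$) forces $\|q_{-x}\|_{1/2}\le n\le 2^{K-1}x$ to be so small that the RHS is already non-positive, and the claim follows from the initial reduction.

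The main obstacle, I expect, is the weighted Cauchy--Schwarz step. A naive pigeonhole on $\sum_k a_k = \sqrt{\|q_{-x}\|_{1/2}}$ alone would yield only $\max_k |B_k| = \Omega(\sqrt{\|q_{-x}\|_{1/2}}/K)$, which is too weak when $\|q_{-x}\|_{1/2}\gg K^2$ (precisely the regime where the lemma is nontrivial). The choice of weights $2^{\pm k/2}$ is what converts the $\ell^1$-bound on $(a_k)$ into the sharper $\ell^\infty$-bound on $(a_k\cdot 2^{k/2})$, by leveraging the mass-normalization $\sum_k a_k\cdot 2^{-k/2}\le 2q(\mathrm{supp}(q_{-x}))\le 2$ of the second factor.
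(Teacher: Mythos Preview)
Your proof is correct and follows the same overall architecture as the paper's: dyadically bucket the effective support of $q_{-x}$, then apply a Cauchy--Schwarz-type argument to locate a single bucket of large cardinality and build $S$ from it. The differences are in execution. The paper uses the cruder bound $(\sum_j a_j)^2 \le \ell^2 \max_j a_j^2$ followed by Cauchy--Schwarz inside each bucket to get $\max_j |D_j|\,q(D_j) \ge \|q_{-x}\|_{1/2}/\ell^2$, and then takes $S$ to be the winning bucket together with the removed set $D$. Your weighted Cauchy--Schwarz, combined with the observation $\sum_k a_k 2^{-k/2} \le 2$, instead yields $\max_k |B_k| \ge \|q_{-x}\|_{1/2}/(2K)$ directly; you then take $S$ to be the full sublevel set $\tilde S_{k^*}$. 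The net effect is that you actually prove the sharper bound $\rat{q}{S} \ge \|q_{-x}\|_{1/2}/(8K) - 1$ with a single logarithmic factor, and only at the very end relax it to match the lemma's $\ell^2$-denominator statement. One small expository gap: the closing sentence about the ``small-$K$ regime'' asserts without computation that the right-hand side is non-positive there. This is true (since $K$ small forces $n$ small via $n < 2^{K-1}x < 2^{K-1}$, and one checks $n/(\log_2 n + 1)^2 < 4$ for all $n$ below the relevant threshold), but it would be cleaner to write out the two-line verification.
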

\begin{proof}
Let $D=\max\{S:q(S)\le x\}$ be a largest subset that has mass $\le x$ under $q$. From the definition of $D$ it is not hard to see that we can choose $D$ such that $\min_{i\in [n]\setminus D}q_i\ge \max_{i\in D}q_i$, and 
\[x < \sum_{i\in D}q_i +\min_{i\in [n]\setminus D}q_i\le (D+1)\min_{i\in [n]\setminus D}q_i\le n\min_{i\in [n]\setminus D}q_i,\] 
therefore, $\min_{i\in [n]\setminus D}q_i> \frac{x}{n} $.

Next, we perform a ``bucketing'' of the remaining elements; that is, we partition $[n]\setminus D$ in subsets so that the probability assigned by $q$ to any two elements in the same subset differ by at most a factor 2.
Let $\ell=\lfloor \log(\frac{n}{x}) \rfloor+1$ and for $j\in[\ell]$, let 
\[
D_{j} \eqdef  \mleft\{i\in [n]\setminus D :  q_i\in\mleft( \frac{1}{2^{j}},\frac{1}{2^{j-1}}\mright] \mright\}.
\]
We can write
\begin{equation}
    \label{eq:12norm:1}
\|q_{-x}\|_{1/2} = \mleft(\sum_{j\in[\ell]}\sum_{i\in D_j} q_i^{1/2}\mright)^{2} \le \ell^2\max_{j\in[\ell]}\mleft(\sum_{i\in D_j} q_i^{1/2}\mright)^{2}\le \ell^2\max_{j\in[\ell]}|D_j|\cdot q(D_j),
\end{equation}
the last inequality being Cauchy--Schwarz. 
Let $j^*\in [\ell]$ be the index maximizing the term on the left, and choose $S=D_{j^*}\cup D$. Since $D_{j^*}$ is non-empty, from the definition of $D$, we have $q(S)\ge x$.
Further, 
\begin{equation}
    \label{eq:12norm:2}
\rat{q}{S} =\mleft\lfloor \frac{q(S)}{2\mas{q}{S}} \mright\rfloor= \mleft\lfloor \frac{\sum_{i\in D_{j^*}\cup D} q_i}{2\max_{i\in D_{j^*}\cup D}q_i} \mright\rfloor \ge \mleft\lfloor \frac{\sum_{i\in D_{j^*}} q_i}{2\max_{i\in D_{j^*}}q_i} \mright\rfloor\ge  \mleft\lfloor \frac{|D_{j^*}|\cdot 2^{-j^*}}{2\cdot 2^{-j^*+1}} \mright\rfloor\ge \frac{|D_{j^*}|}4-1. 
\end{equation}
Putting together~\eqref{eq:12norm:1} and~\eqref{eq:12norm:2}, we get
\[
\frac{\|q_{-x}\|_{1/2}}{\ell^2}\le |D_{j^*}|\cdot q(D_{j^*})\le |D_{j^*}|\cdot q(S) 
\le (\rat{q}{S}+4)\mleft(q(S)\mright)
\le \rat{q}{S}\cdot q(S)+4\,
\]
which concludes the proof.
\end{proof}

Combining Corollary~\ref{cor:instoptlb} and the above two lemmas for $x=4\ve_2/c$, for some universal constant $c>0$, any distribution $q$ over $[n]$, $0\le \ve_1< \ve_2 < c/4 $,
\[
\samcom{q}{\ve_1}{\ve_2}{4/5} \ge \Omega\mleft(\mleft(\frac{\norm{q_{-4\ve_2/c}}_0}{\ln (c/4\ve_2)}-3\mright)\cdot\frac{1}{\log n}\Big(\frac{\ve_1}{\ve_2^2}\Big)+\mleft(\frac{||q_{-4\ve_2/c}||_{1/2}}{\log^2 (nc/4\ve_2)}-6\mright)\cdot\frac{1}{\log n}\Big(\frac{\ve_1}{\ve_2^2}\Big)^2\mright).
\]
By combining the above lower bound with previously known lower bound $\Omega(\frac{||q_{-c'\ve_2}||_{2/3}-1}{\ve_2^2})$  for non-tolerant identity testing from~\cite{ValiantV14}, where $c'>0$ is an absolute constant, we obtain:
\begin{theorem}
    \label{th:lbU:io}
For any distribution $q$ over $[n]$, $0\le \ve_1< \ve_2 < c/4 $, for some universal constant $c>0$,
\[
\samcom{q}{\ve_1}{\ve_2}{4/5} \ge \tilde \Omega\mleft({\norm{q_{-4\ve_2/c}}_0}\Big(\frac{\ve_1}{\ve_2^2}\Big)+{||q_{-4\ve_2/c}||_{1/2}}\Big(\frac{\ve_1}{\ve_2^2}\Big)^2+\frac{||q_{-4\ve_2/c}||_{2/3}}{\ve_2^2}\mright)-\mathcal{O}\mleft(\frac{1}{\ve_2^2}\mright)\,.
\]
\end{theorem}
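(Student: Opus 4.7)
The plan is to assemble the bound by combining three pieces already in place: Corollary~\ref{cor:instoptlb}, the two structural lemmas stated immediately before the theorem, and the classical non-tolerant instance-optimal lower bound of~\cite{ValiantV14}. Throughout I would fix $x \eqdef 4\ve_2/c$ (with $c$ as in Corollary~\ref{cor:instoptlb}), so that the admissible subsets $S$ with $q(S)\ge 4\ve_2/c$ in that corollary match the range of thresholds to which the two structural lemmas apply.

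For the $\norm{q_{-x}}_0$ contribution, I would feed into the first summand of Corollary~\ref{cor:instoptlb} the subset $A$ produced by the first structural lemma. That lemma yields $q(A)\ge x$ and $q(A)/q_{i^*}\ge (\norm{q_{-x}}_0-1)/\ln(1/x)$ with $q_{i^*}=\mas{q}{A}$, hence $\rat{q}{A}\ge \tfrac12(\norm{q_{-x}}_0-1)/\ln(1/x)-1$. Absorbing $\ln(1/\ve_2)$ and $\log n$ into the $\tilde\Omega$ notation, the additive $-2-1$ inside the numerator generates a slack of only $\mathcal{O}(\ve_1/\ve_2^2)\le \mathcal{O}(1/\ve_2^2)$. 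For the $\norm{q_{-x}}_{1/2}$ contribution, I would similarly invoke the second structural lemma, which furnishes an $S$ with $q(S)\ge x$ and $\rat{q}{S}\ge \norm{q_{-x}}_{1/2}/(\log(n/x)+1)^2-4$, and plug it into the second summand of Corollary~\ref{cor:instoptlb}; the $(\log(n/x)+1)^2\cdot\log n$ denominator is absorbed into $\tilde\Omega$, and the $-4-2$ produces at worst an $\mathcal{O}(1/\ve_2^2)$ additive loss.

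Finally, I would quote the classical non-tolerant instance-optimal identity-testing lower bound of $\Omega((\norm{q_{-c'\ve_2}}_{2/3}-1)/\ve_2^2)$ from~\cite{ValiantV14}. A straightforward monotonicity observation (changing the trimming threshold from $c'\ve_2$ to $4\ve_2/c$ only removes or reinstates a $\mathcal{O}(\ve_2)$ mass, and the $2/3$-quasinorm of those small mass elements is bounded, so the two quasinorms agree up to constants) lets me rewrite this as $\Omega(\norm{q_{-4\ve_2/c}}_{2/3}/\ve_2^2)-\mathcal{O}(1/\ve_2^2)$. Since sample complexity is bounded below by the maximum (hence by a constant fraction of the sum) of the three lower bounds, combining them and collecting all additive slacks into a single $\mathcal{O}(1/\ve_2^2)$ term yields the claimed inequality.

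The main obstacle here is essentially pure bookkeeping: tracking constants and logarithmic factors uniformly across the three sources, and verifying that the distinct trimming thresholds (those of $\norm{q_{-x}}_0$, $\norm{q_{-x}}_{1/2}$, and the $\norm{q_{-c'\ve_2}}_{2/3}$ of~\cite{ValiantV14}) can be harmonized to the common $4\ve_2/c$ without materially weakening any of the three bounds. The genuine technical content sits in Corollary~\ref{cor:instoptlb} and the two structural lemmas, which are already established.
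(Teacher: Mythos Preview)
Your proposal is correct and follows essentially the same approach as the paper: combine Corollary~\ref{cor:instoptlb} with the two structural lemmas (at the threshold $x=4\ve_2/c$) to obtain the $\norm{\cdot}_0$ and $\norm{\cdot}_{1/2}$ terms, and add the non-tolerant $\norm{\cdot}_{2/3}$ bound of~\cite{ValiantV14}. One minor simplification: the harmonization of trimming thresholds you flag as an obstacle can be handled directly by choosing the universal constant $c$ small enough that $4/c\ge c'$, so that $q_{-4\ve_2/c}$ is a subvector of $q_{-c'\ve_2}$ and hence $\norm{q_{-c'\ve_2}}_{2/3}\ge \norm{q_{-4\ve_2/c}}_{2/3}$; no quasinorm comparison argument is needed.
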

\noindent This establishes the lower bound part of~\autoref{th:identity:io:informal}.

\subsection{Upper bound}
    \label{app:ubU:io}
The proof of our instance-optimal upper bound follows the same outline as~\cite[Proposition 2.12]{DiakonikolasK16}, yet the extension to tolerant testing requires a significantly more detailed argument.

\begin{theorem}[Identity testing]\label{th:ubU:io}
Let $q$ be a known reference distribution and $p$ be an . There is a computationally efficient algorithm with the following guarantee. Given a known reference distribution $q$ over $[n]$, as well as parameters $\ve_1,\ve_2$ such that $0 \leq  \ve_2\leq 1$ and $0\leq \ve_1 \leq c \frac{\ve_2}{\log (n/\ve_2)}$ (where $c>0$ is an absolute constant), the algorithm takes
\[
       \tilde{\mathcal{O}}\mleft( \norm{q_{-\ve_2/20}}_{\frac{1}{2}}\Big(\frac{\ve_1}{\ve_2^2}\Big)^2+\norm{q_{-\ve_2/20}}_0\Big(\frac{\ve_1}{\ve_2^2}\Big)+\frac{ \norm{q_{-\ve_2/20}}_{\frac{2}{3}}}{\ve_2^{2}} \mright). 
\]
 samples from an unknown distribution $p$ over $[n]$ and distinguishes between $\norm{p-q}_1\leq \ve_1$ and $\norm{p-q}_1\geq \ve_2$ with probability at least $4/5$.
\end{theorem}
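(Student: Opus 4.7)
The plan is to reduce identity testing against a general $q$ to a collection of approximately-uniform sub-problems via tail truncation and dyadic bucketing, then apply Theorem~\ref{th:ubU} to each bucket. First I would discard the low-probability tail of $q$: let $T\subseteq [n]$ be the support of $q_{-\ve_2/20}$, so that $q([n]\setminus T)\le \ve_2/20$. With $\tilde{\mathcal O}(1/\ve_2^2)$ samples I estimate $p([n]\setminus T)$ to precision $\ve_2/40$; if the estimate is large I reject, otherwise the contribution of $[n]\setminus T$ to $\|p-q\|_1$ is at most $p(T^c)+q(T^c)\le \ve_2/4$. I then partition $T$ into $L=\mathcal O(\log(n/\ve_2))$ dyadic buckets $B_j=\{i\in T : q_i\in(2^{-j},2^{-j+1}]\}$ (there are only logarithmically many because all remaining $q_i$ exceed $(\ve_2/20)/n$), and denote $k_j=|B_j|$ and $\mu_j=q(B_j)\in (k_j 2^{-j}, 2 k_j 2^{-j}]$.

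For each bucket $B_j$ I would first estimate $p(B_j)$ to precision $\ve_2/(4L)$ using $\tilde{\mathcal O}(L^2/\ve_2^2)$ samples, and reject if $|p(B_j)-\mu_j|>\ve_2/(2L)$. Assuming the mass test passes, I invoke Theorem~\ref{th:ubU} on the conditional distributions $\tilde p|_{B_j}$ and $\tilde q|_{B_j}$ over the $k_j$-element bucket, with parameters
\[
\tilde{\ve}_{j,1}=\frac{\ve_1}{\mu_j},\qquad \tilde{\ve}_{j,2}=\frac{\ve_2}{CL\mu_j}
\]
for an appropriate absolute constant $C>0$. Each sample from $\tilde p|_{B_j}$ can be simulated from $\approx 1/\mu_j$ samples of $p$ by restricting to the bucket, and the hypothesis $\ve_1\le c\,\ve_2/\log(n/\ve_2)$ is precisely what guarantees $\tilde{\ve}_{j,1}\le c'\,\tilde{\ve}_{j,2}$, so Theorem~\ref{th:ubU} remains applicable in each bucket. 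For completeness, in the yes-case we have $\|p|_{B_j}-q|_{B_j}\|_1\le\ve_1=\tilde{\ve}_{j,1}\mu_j$ for every $j$, so all bucket tests accept; for soundness, the tail bound from the first step forces $\sum_j\|p|_{B_j}-q|_{B_j}\|_1\ge 3\ve_2/4$, hence by pigeonhole some bucket $j^\star$ satisfies $\|p|_{B_{j^\star}}-q|_{B_{j^\star}}\|_1\ge 3\ve_2/(4L)>\tilde{\ve}_{j^\star,2}\mu_{j^\star}$ and is rejected.

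Translating the per-bucket complexity back to $p$-samples (multiplying by $1/\mu_j$) and substituting the values of $\tilde{\ve}_{j,1},\tilde{\ve}_{j,2}$, the total sample complexity is
\[
\tilde{\mathcal O}\!\mleft( \frac{L^2}{\ve_2^2}\sum_j \mu_j\sqrt{k_j}
+ \frac{L^2\ve_1}{\ve_2^2}\sum_j k_j
+ \frac{L^4\ve_1^2}{\ve_2^4}\sum_j k_j\mu_j \mright).
\]
Using $\mu_j\approx k_j 2^{-j}$ together with the elementary inequality $\sum_j a_j^p\le (\sum_j a_j)^p$ for $p\ge 1$ and $a_j\ge 0$, we obtain $\sum_j \mu_j\sqrt{k_j}\approx \sum_j (k_j 2^{-2j/3})^{3/2}\le \|q_T\|_{2/3}$, $\sum_j k_j\mu_j\approx \sum_j (k_j 2^{-j/2})^2 \le \|q_T\|_{1/2}$, and trivially $\sum_j k_j=\|q_T\|_0$. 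Plugging these in yields the claimed $\tilde{\mathcal O}\bigl(\|q_{-\ve_2/20}\|_{2/3}/\ve_2^2+\|q_{-\ve_2/20}\|_{1/2}(\ve_1/\ve_2^2)^2+\|q_{-\ve_2/20}\|_0(\ve_1/\ve_2^2)\bigr)$ upper bound.

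The main obstacle will be the careful bookkeeping of the per-bucket error budget across the $L$ buckets: the scaling $\tilde{\ve}_{j,2}=\Theta(\ve_2/(L\mu_j))$ is dictated by the pigeonhole step in the soundness proof, while the assumption $\ve_1\le c\,\ve_2/\log(n/\ve_2)$ is exactly what keeps the ratio $\tilde{\ve}_{j,1}/\tilde{\ve}_{j,2}$ bounded uniformly in $j$ so that Theorem~\ref{th:ubU} remains usable in every bucket. A secondary technical point is justifying that rejection-sampling to simulate $\tilde p|_{B_j}$ inflates the $p$-sample count only by a constant factor beyond the factor $1/\mu_j$; this follows from the prior mass-matching check combined with standard Poissonization arguments.
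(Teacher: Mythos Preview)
Your proposal is essentially the paper's own proof: discard the $\ve_2/20$-tail, dyadically bucket the remainder into $L=\mathcal{O}(\log(n/\ve_2))$ pieces, run per-bucket mass tests, invoke Theorem~\ref{th:ubU} on each bucket's conditional distributions with exactly the parameters you write, and finish via the same $\ell_p$-norm inequalities $\sum_j a_j^{3/2}\le(\sum_j a_j)^{3/2}$ and $\sum_j a_j^{2}\le(\sum_j a_j)^{2}$. The one detail you should tighten is the treatment of buckets with $\mu_j\lesssim \ve_2/L$: there $\tilde{\ve}_{j,2}$ can exceed $1$, so Theorem~\ref{th:ubU} no longer applies and your pigeonhole step may land on a bucket you cannot test; the paper handles this by simply skipping such buckets and absorbing their total contribution (at most $\mathcal{O}(\ve_2)$ once the mass tests pass) into the triangle-inequality budget, which is the cleaner route and is what makes the soundness argument go through.
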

\begin{proof}
Let $D=\arg\!\max\{S:\sum_{i\in S}q_i\le \ve_2/20\}$ be a largest subset that has mass $\le \ve_2/20$.
From the definition of $D$ it is easy to see that we can choose $D$ such that $\min_{i\in [n]\setminus D}q_i\ge \max_{i\in D}q_i$, and thus
\[
\frac{\ve_2}{20} < \sum_{i\in D}q_i +\min_{i\in [n]\setminus D}q_i\le (D+1)\min_{i\in [n]\setminus D}q_i\le n\min_{i\in [n]\setminus D}q_i,
\] 
which implies that $\min_{i\in [n]\setminus D}q_i> \frac{\ve_2}{20n}$. Moreover, given the full description of $q$, this set $D$ can be efficiently computed. 
We then (as in the lower bound section) ``bucket'' the remaining elements $[n]\setminus D$ into disjoint subsets so that the probability assigned by $q$ to any two elements in the same subset differ by at most a factor 2. That is, for $\ell\eqdef \lfloor \log(\frac{20n}{\ve_2}) \rfloor+1$ and $j\in[\ell]$, we let 
\begin{equation}
    \label{eq:bucketing:dj}
D_{j} \eqdef \mleft\{i\in [n]\setminus D :  q_i\in\mleft( \frac{1}{2^{j}},\frac{1}{2^{j-1}}\mright] \mright\}.
\end{equation}
We denote by $p^j$ and $q^j$ the conditional distributions on $D_j$ induced by $p$ and $q$, respectively. With this in hand, we get the following:
\begin{claim}
    \label{claim:uB:io:claim1}
If $\norm{p-q}_1\le \ve_1$, then all three conditions below hold simultaneously: 
\begin{enumerate}
    \item $p(D) \le \ve_1+\frac{\ve_2}{20}$ and
    \item for every $j\in[\ell]$, $|p(D_j)-q(D_j)|\le \ve_1$, and 
    \item for every $j\in[\ell]$
    , $\norm{p^j-q^j}_1\le \frac{2\ve_1}{q(D_j)}$. 
\end{enumerate}
\end{claim}
\begin{proof}
We prove the claim by showing that if any of the three conditions fails to hold then we must have $\norm{p-q}_1\ge \ve_1$.
Note that, by the triangle inequality,
\[
\norm{p-q}_1 %
=\sum_{i\in D}|p_i-q_i|+\sum_{j=1}^\ell\sum_{i\in D_j}|p_i-q_i|\ge 
|p(D)-q(D)|+\sum_{j=1}^\ell|p(D_j)-q(D_j)|.
\]
Recalling that $q(D) \leq \frac{\ve_2}{20}$, it is easy to see that if either of the first two conditions fails to hold, the above inequality implies that $\norm{p-q}_1\ge \ve_1$.
Turning to the third condition, we can write
\begin{align*}
\norm{p^j-q^j}_1&=\sum_{i\in D_j}\mleft|\frac{p_i}{p(D_j)}-\frac{q_i}{q(D_j)}\mright|
\le \sum_{i\in D_j}\mleft|\frac{p_i}{q(D_j)}-\frac{q_i}{q(D_j)}\mright| +\sum_{i\in D_j}\mleft|\frac{p_i}{p(D_j)}-\frac{p_i}{q(D_j)}\mright|\\
&=\frac{\sum_{i\in D_j}|p_i-q_i|+|p(D_j)-q(D_j)|}{q(D_j)}\le \frac{2\sum_{i\in D_j}|p_i-q_i|}{q(D_j)} \le \frac{2\norm{p-q}_1}{q(D_j)}\,,
\end{align*}
which shows that if the third item fails to hold for some $j$, then $\norm{p-q}_1>\ve_1$.
\end{proof}

The next claim then provides a qualitatively converse statement.
\begin{claim}
    \label{claim:uB:io:claim2}
Suppose that $p$ satisfies all three conditions below:
\begin{enumerate}
    \item $p(D) \le \frac{\ve_2}{5}$,
    \item for every $j\in[\ell]$, $|p(D_j)-q(D_j)|\le \frac{\ve_2}{10\ell}$, and
    \item for every $j\in[\ell]$ such that $q(D_j)\ge \frac{\ve_2}{5\ell}$, $\norm{p^j-q^j}_1\le \frac{\ve_2}{5\ell q(D_j)}$\,.
\end{enumerate}
Then, we have $\norm{p-q}_1\le \ve_2$. 
\end{claim}
\begin{proof}
Suppose that the three conditions hold. By the triangle inequality, we have
\begin{align*}
  \norm{p-q}_1&=\sum_{i\in D}|p_i-q_i|+\sum_{j=1}^\ell\sum_{i\in D_j}|p_i-q_i|\\ &\le p(D)+q(D)+\sum_{j=1}^\ell\sum_{i\in D_j}q(D_j)\mleft|\frac{p_i}{q(D_j)}-\frac{q_i}{q(D_j)}\mright| \\ 
  &\le p(D)+\frac{\ve_2}{20}+\sum_{j=1}^\ell\sum_{i\in D_j}q(D_j)\mleft|\frac{p_i}{p(D_j)}-\frac{q_i}{q(D_j)}\mright|+\sum_{j=1}^\ell\sum_{i\in D_j}q(D_j)\mleft|\frac{p_i}{p(D_j)}-\frac{p_i}{q(D_j)}\mright| \\ 
   &\le p(D)+\frac{\ve_2}{20}+\sum_{j=1}^\ell q(D_j)\norm{p^j-q^j}_1+\sum_{j=1}^\ell{|p(D_j)-q(D_j)|} \\
  &\le \frac{\ve_2}{5}+\frac{\ve_2}{20}+\mleft(\ell\cdot\frac{\ve_2}{5\ell}\cdot 2 + \sum_{j=1}^\ell q(D_j) \cdot \frac{\ve_2}{5\ell q(D_j)} \mright)+\ell\cdot\frac{\ve_2}{10\ell} < \ve_2\,,
\end{align*}
where we used the three conditions for the second-to-last inequality.
\end{proof}

Given the above claims, we can describe our testing algorithm. First, the algorithm computes the set $D$, the value $\ell$, and the bucketing of $[n]\setminus D$ into $D_1,\dots, D_\ell$. Then, it runs a total of (at most) $2\ell+1$ sub-tests, which we will detail momentarily:
\begin{enumerate}
    \item[(1)] Distinguish $p(D)< \frac{\ve_2}{20}+\ve_1$ (\textsf{accept}) from $p(D)\ge \frac{\ve_2}{5}$ (\textsf{reject}),
    \item[(2)] For every  $j\in [\ell]$, distinguish $|p(D_j)-q(D_j)|\le \ve_1$ (\textsf{accept}) from $|p(D_j)-q(D_j)|\ge \frac{\ve_2}{10\ell}$ (\textsf{reject}), and
    \item[(3)] For every $j\in[\ell]$ such that $q(D_j)\ge \frac{\ve_2}{5\ell}$,  distinguish $\norm{p^j-q^j}_1\le \frac{2\ve_1}{\ell q(D_j)}$ (\textsf{accept}) from $\norm{p^j-q^j}_1\ge \frac{\ve_2}{5\ell q(D_j)}$ (\textsf{reject}).
\end{enumerate}
If all the above testers accept, the overall tester accepts (i.e., outputs $\norm{p-q}_1\le \ve_1$); otherwise, it rejects (i.e., outputs $\norm{p-q}_1\ge \ve_2$).

From Claims~\ref{claim:uB:io:claim1} and~\ref{claim:uB:io:claim2}, it is not hard to see that if $\ve_1\le \ve_2/(40\ell)$ and all the above testers give correct outputs with probability at least $1-\frac{1}{5(2\ell+1)}$ each, then we correctly distinguish $\norm{p-q}_1\le \ve_1$ and $\norm{p-q}_1\ge \ve_2$ with probability at least $1-1/5=4/5$ (by a union bound). We now proceed to describe how those tests are implemented.

\begin{itemize}
    \item  Using $\mathcal O\mleft(\frac{\log(1/\ell)}{(\ve_2/\ell)^2}\mright)=O\mleft(\frac{\ell^2log(1/\ell)}{\ve_2^2}\mright)$ samples from $p$ one can estimate $p(D)$ and $p(D_j)$ for every $j\in [\ell]$ to an additive $\ve_2/(20\ell)$ with probability at least $1-\frac{1}{5(2\ell+1)}$ each, which gives us the testers for (1) and (2).

\item Theorem~\ref{th:ubU} provides a tester that, for any fixed $j$, distinguishes between $\norm{p^j-q^j}_1\le \frac{2\ve_1}{\ell q(D_j)}$ and $\norm{p^j-q^j}_1\ge \frac{\ve_2}{5\ell q(D_j)}$ with probability of success $\ge 4/5$ and uses 
\[
\mathcal{O}\mleft( |D_j|\Big(\frac{\ve_1(\ell\cdot  q(D_j))}{\ve_2^2}\Big)^2+|D_j|\Big(\frac{\ve_1(\ell q(D_j)}{\ve_2^2}\Big)+\frac{(\ell\cdot  q(D_j))^2\sqrt{|D_j|}}{\ve_2^{2}} \mright)
\]
samples from $p^j$.
By standard amplification arguments one can achieve a probability of success of $1-\frac{1}{10(2\ell+1)}$ at the cost of a multiplicative $\mathcal O(\log(1/\ell))$ factor in the sample complexity.

To use this in order to obtain the tests required for (3), note that for any $j\in[\ell]$ such that $q(D_j)\ge \frac{\ve_2}{5\ell}$ if $|p(D_j)-q(D_j)|\ge \frac{\ve_2}{10\ell}$ then the corresponding test from (2) already outputs \textsf{reject} with high probability; so we can assume that $p(D_j)\ge q(D_j)-\frac{\ve_2}{10\ell} \ge q(D_j)/2$.
In this case for any $m>0$, using $m$ samples from $p$, we can get $\Omega(m p(D_j)/\log \ell )= \Omega(m q(D_j)/\log \ell )$ samples from $p^j$ with probability at least $1-1/(10\ell^2)$. Note that we can use the same overall set of $m$ samples from $p$ to obtain our $m_j = \Omega(m q(D_j)/\log \ell )$ samples from every $p^j$, $j\in[\ell]$. 

This gives us the testing algorithms for (3).
\end{itemize}

Combining these bounds, we get the following upper bound on the sample complexity:
\begin{align*}
\mathcal{O}&\mleft(\frac{\ell^2\log \ell}{\ve_2^2}\mright)+\max_{j\in \ell}\frac{\log\ell}{q(D_j)}\cdot \mathcal{O}\mleft( |D_j|\Big(\frac{\ve_1(\ell\cdot  q(D_j))}{\ve_2^2}\Big)^2+|D_j|\Big(\frac{\ve_1(\ell\cdot  q(D_j))}{\ve_2^2}\Big)+\frac{(\ell\cdot  q(D_j))^2\sqrt{|D_j|}}{\ve_2^{2}} \mright)\\
&=
\tilde{\mathcal O}\mleft(\frac{1}{\ve_2^2}\mright))+\max_{j\in \ell}\tilde{\mathcal{O}}\mleft( |D_j| q(D_j)\Big(\frac{\ve_1}{\ve_2^2}\Big)^2+|D_j|\Big(\frac{\ve_1}{\ve_2^2}\Big)+\frac{ q(D_j)\sqrt{|D_j|}}{\ve_2^{2}} \mright)\\
&=
\tilde{\mathcal O}\mleft(\frac{1}{\ve_2^2}\mright))+\max_{j\in \ell}\tilde{\mathcal{O}}\mleft( |D_j|^22^{-j}\Big(\frac{\ve_1}{\ve_2^2}\Big)^2+|D_j|\Big(\frac{\ve_1}{\ve_2^2}\Big)+\frac{ 2^{-j}\paren{|D_j|}^{3/2}}{\ve_2^{2}} \mright)\,,  
\end{align*}
where the last line uses the definition of $D_j$ in~\eqref{eq:bucketing:dj} (the ``bucketing'') to relate $q(D_j)$ to $|D_j|$.

To conclude, we observe that
\begin{align*}
\norm{q_{-\ve_2/20}}_0 &=n-|D|\ge \max_{j\in [\ell]} |D_j|\\
\norm{q_{-\ve_2/20}}_{\frac{1}{2}} &=\big(\sum_{j\in [\ell]}\sum_{i\in D_j}q_i^{1/2}\big)^2\ge  \max_{j\in [\ell]} (|D_j|2^{-j/2})^2.\\ 
\norm{q_{-\ve_2/20}}_{\frac{2}{3}} &=\big(\sum_{j\in [\ell]}\sum_{i\in D_j}q_i^{2/3}\big)^{3/2}\ge  \max_{j\in [\ell]} (|D_j|2^{-2j/3})^{3/2}.  
\end{align*}

Combining the above four equations, and using $\norm{q_{-\ve_2/20}}_{\frac{2}{3}}=\Omega(1)$, we get the following upper bound on the sample complexity: 
\begin{align*}
    \tilde{\mathcal{O}}\mleft( \norm{q_{-\ve_2/20}}_{\frac{1}{2}}\Big(\frac{\ve_1}{\ve_2^2}\Big)^2+\norm{q_{-\ve_2/20}}_0\Big(\frac{\ve_1}{\ve_2^2}\Big)+\frac{ \norm{q_{-\ve_2/20}}_{\frac{2}{3}}}{\ve_2^{2}} \mright).   
\end{align*}
This concludes the proof of the theorem.
\end{proof}

\end{document}